\documentclass{article}
\usepackage{fullpage}

\usepackage[utf8]{inputenc} %
\usepackage[T1]{fontenc}    %
\usepackage[colorlinks,citecolor=teal]{hyperref}       %

\usepackage{url}            %
\usepackage{booktabs}       %
\usepackage{amsfonts}       %
\usepackage{nicefrac}       %
\usepackage{microtype}      %
\usepackage[table,xcdraw]{xcolor}         %
\usepackage[numbers]{natbib} %
\usepackage{authblk}
\usepackage{comment} %
\usepackage{xspace}
\usepackage{setspace}
\usepackage{amssymb}
\usepackage{amsmath}
\usepackage{amsthm, thmtools, thm-restate}
\usepackage{mathtools}
\usepackage{array}
\usepackage{physics}
\usepackage{dsfont}
\usepackage{cases}
\usepackage[shortlabels]{enumitem} %
\usepackage{algorithm}
\usepackage[noend]{algpseudocode}
\usepackage{tikz}
\usetikzlibrary{calc, positioning, external, arrows.meta} %
\usepackage{forest}
\usepackage{bm}
\usepackage{complexity}
\usepackage{lipsum}
\usepackage{multirow}
\usepackage{siunitx} %
\usepackage{adjustbox}
\usepackage{graphicx}

\usepackage{subfiles}
\usepackage{subcaption}
\usepackage{caption}
\usepackage[capitalize]{cleveref}
\usepackage{autonum} %

\input{defs}
\tikzset{
        every path/.style={-},
        every node/.style={draw},
        infoset/.style={-, densely dotted, ultra thick},
        infoset1/.style={infoset, color=p1color},
        infoset2/.style={infoset, color=p2color},
        terminal/.style={},
        extraedge/.style={parent anchor=south, child anchor=north},
      ilabel/.style={fill=white, inner sep=0pt, draw=none},
      pubnode/.style={draw=red, inner sep=0.1cm},
}
\forestset{
        default preamble={for tree={
        parent anchor=south, child anchor=north
}},
  p1/.style={
      regular polygon,
      regular polygon sides=3,
      inner sep=0pt, fill=p1color, thick, text=white, draw=p1color, font=\bfseries\sffamily\footnotesize, text height=0.6em,text depth=0.1em},
  p2/.style={p1, shape border rotate=180, fill=p2color, text=white, draw=p2color},
  parent/.style={no edge,tikz={\draw (#1.parent anchor) to (!.child anchor);}},
  parentd/.style n args={2}{no edge,tikz={\draw[ultra thick, p#2color] (#1.parent anchor) to (!.child anchor); }},
  nat/.style={font=\bfseries\sffamily\footnotesize},
  terminal/.style={draw=none, inner sep=2pt},
  action/.style n args={1}{edge label={node[midway, fill=white, inner sep=1pt, draw=none, font=\sf\footnotesize] {#1}}},
  d/.style={edge={ultra thick, draw={p#1color}}},
  comment/.style={no edge, draw=none, align=center, font=\tiny\sf},
}
\bracketset{action character=@}

\begin{document}

\title{Efficient representations for team and imperfect-recall \\ equilibrium computation}

\author[1]{Luca Carminati\thanks{Equal contribution; author order randomized}}
\author[2]{Brian Hu Zhang$^*$}
\author[1]{Federico Cacciamani}
\author[3,4]{Junkang Li}
\author[2]{Gabriele Farina}
\author[1]{Nicola Gatti}
\author[5,6]{Tuomas Sandholm}
\affil[1]{\small DEIB, Politecnico di Milano}

\affil[2]{\small MIT}

\affil[3]{\small NukkAI}

\affil[4]{\small GREYC, Université Caen Normandie}

\affil[5]{\small Computer Science Department, Carnegie Mellon University}
            
\affil[6]{\small Additional affiliations: Strategy Robot, Inc., Strategic Machine, Inc., Optimized Markets, Inc.}
\date{}
\maketitle

\begin{abstract}

Equilibrium finding in two-{\em player} zero-sum games with perfect recall is a well-studied topic that has led to many breakthroughs in computational game theory. This paper aims to generalize such techniques to (timeable) two-player zero-sum games with imperfect recall, or equivalently to two-{\em team} zero-sum games. In this setting, the problem of computing a {\em mixed-strategy Nash equilibrium} (or, equivalently, a {\em team maxmin equilibrium with correlation}) is known to be \NP-hard.
We connect the imperfect-recall setting with its perfect-recall counterpart through a novel construction we call the {\em belief game}.
This is a perfect-recall game {\em equivalent} to a given (timeable) two-player zero-sum game with imperfect recall. The belief game may be exponentially larger than the original game but can be solved using any standard method. We then show that the strategy spaces of the two players in the belief game can be {\em directly} represented as a DAG, leading to a possibly exponential speedup. We call this the {\em team belief DAG} (TB-DAG).
The TB-DAG simultaneously enjoys essentially optimal parameterized complexity bounds and the advantages of efficient regret minimization techniques. 
Along the way, we show $\Delta_2^\P$-completeness and $\Sigma_2^\P$-completeness of finding Nash equilibria in both mixed and behavioral strategies for the class of games we consider.
Experimentally, we show that the TB-DAG, when paired with existing learning techniques, yields state-of-the-art performance on a wide variety of benchmark team games.
\end{abstract}

\section{Introduction}\label{sec:introduction}

Two-player zero-sum imperfect-information extensive-form games are a classic setting within the game theory literature with a rich history of notable achievements. They represent sequential interactions between two adversaries where any utility gained by one player implies a loss for the other. A crucial component is that players may have \emph{imperfect information}, that is, they may not be able to distinguish the current node of the game from others.
In this setting, \emph{Nash equilibria in mixed strategies} are the most natural solution concept for modeling rational value-maximizing players. Mixed strategies specify the behavior of a player as a distribution over \emph{pure (deterministic) strategies}.
However, the exponential number of such strategies makes the computation of Nash equilibria potentially inefficient.
A key assumption to circumvent this issue is \emph{perfect recall}. In a perfect-recall game, the players never forget previously received information or played actions. 
When this assumption is satisfied,
\begin{enumerate}
    \item \emph{Kuhn's theorem}~\cite{Kuhn50:Extensive} states that \emph{mixed strategies} are equivalent to \emph{behavioral strategies}, which are the strategies expressible as a product of distributions over actions at each decision point. %
    \item The \emph{sequence-form representation}~\cite{Stengel96:Efficient,Romanovskii62:Reduction} of the strategy spaces enables efficient computation of Nash equilibria via a wide variety of different methods.
    In particular, uncoupled learning dynamics such as \emph{counterfactual regret minimization} (CFR)~\cite{Zinkevich07:Regret} converge to a Nash equilibrium by employing a regret minimizer at each decision point of the strategy tree.
\end{enumerate}
There have been significant recent speed improvements to CFR-based techniques~\cite{Tammelin15:Solving,Brown19:Solving,Farina21:Faster,Zhang24:Faster}, and other techniques have been built on top of CFR-based techniques, for example, abstraction algorithms~\cite{Sandholm15:Abstraction,Sandholm15:Solving}, subgame solving~\cite{Gilpin06:Competitive,Ganzfried15:Endgame,Moravcik16:Refining,Brown17:Safe,Moravvcik17:DeepStack,Brown18:Superhuman,Brown19:Superhuman}, further enhancing scalability.
Notable results on large-scale games include poker~\cite{Bowling15:Heads,Moravvcik17:DeepStack,Brown18:Superhuman,Brown19:Superhuman}, Stratego~\cite{Perolat22:Mastering}, and Diplomacy~\cite{FAIR22:Human}.

This work seeks to extend these techniques beyond the perfect-recall two-player zero-sum setting. In particular, we focus on computing mixed Nash equilibria in the two equivalent settings of \emph{imperfect-recall games} and \emph{adversarial team games}\footnote{This equivalence is formalized in \Cref{sec:adversarial_team_games}.}, for which it is known that computing a Nash equilibrium is NP-hard \cite{Koller92:Complexity}.

Two-player zero-sum imperfect-recall games are characterized by players who may forget information at some point in the game. 
In this case, a mixed strategy corresponds to a distribution over pure strategies, while a behavioral strategy corresponds to a distribution that performs an independent sampling procedure at each decision point. Unlike for perfect-recall games, Kuhn's theorem does not apply in imperfect-recall games: mixed strategies can in general be more expressive than behavioral strategies.
Imperfect-recall games have been employed in the literature to compress a game representation through forgetfulness (this is the case of some abstraction techniques \cite{Waugh09:Abstraction,Lanctot12:No,Kroer16:Imperfect}), or by considering human-like agents with imperfect memories \cite{Camerer03:Behavioral}.

Adversarial team games portray two teams of agents facing each other adversarially. Each team member has utilities identical to their teammates and opposite to members of the opposing team. Effective team coordination is a non-trivial challenge in this setting because team members may have different imperfect information about the current node and no communication channels are available during the game. 
Intuitively, a player cannot distinguish nodes that are different due to private information revealed to a teammate (such as private cards revealed to them solely).
In this case, mixed strategies correspond to strategies coordinated \emph{before the start of the game} through \emph{ex-ante coordination}, while behavioral strategies represent strategies that are not coordinated, in the sense that each agent samples their actions independently from other teammates.
Recreational and non-recreational examples of team games include Bridge, security games with multiple defenders and attackers \cite{Jiang13:Defender}, and poker with colluding agents.

Overall, team games are a more common application setting than imperfect-recall games; they have many competing works in the equilibrium computation literature and allow a more intuitive game description. On the other hand, imperfect-recall games yield a cleaner formalism. As these two perspectives are equivalent for our purposes, we choose to adopt an imperfect-recall perspective throughout the rest of the paper to simplify the notation, while using team games to make more intuitive examples for some of the notions introduced.

The main objective of this paper is to propose a novel representation for team/imperfect-recall games by constructing an equivalent \emph{perfect-recall} two-player zero-sum game. This enables the use of all the solving techniques previously developed for perfect-recall two-player zero-sum games.

We now summarize the contributions of the paper.
In \Cref{sec:beliefs_obs,sec:auxgame}, we present an algorithm that converts any two-player zero-sum imperfect-recall game into a strategically-equivalent {\em perfect-recall game} which we call the \emph{belief game}. In \Cref{sec:auxgame_strategic_equivalence} we formally prove the equivalence between the two games, and in \Cref{sec:worst-case-size-aux} we show worst-case bounds on the size of the belief game in terms of the number of histories of the original game. 
In particular, we show that in the worst case the number of histories of the belief game is $O(b^{dk})$, where $b$ is the maximum branching factor of the original game, $d$ is its depth, and $k$ is a parameter we introduce called the \emph{information complexity}, which intuitively measures the amount of information that can be forgotten by the player---or, in the case of team games, the amount of {\em information asymmetry} between players on the team.

In \Cref{sec:dag-decision-problems}, we introduce a notion of DAG-form decision-making that we use to generalize counterfactual regret minimization (CFR) beyond tree-form games. While we introduce it for the purpose of applying it to imperfect-recall games, we believe it to be of independent interest as well.

In \Cref{sec:tbdag}, we use DAG-form decision problems to efficiently represent each player's strategy space in the belief game through a construction we call the \emph{team-belief DAG} (TB-DAG). We show that the TB-DAG representation of a game with imperfect recall can be exponentially smaller than the size of the belief game and that it can be constructed directly from the original game without first constructing the belief game, thus leading to exponentially faster algorithms in the worst case. %
This construction improves the worst-case efficiency\footnote{By {\em efficiency} here we mean the size of the representation of the strategy spaces of the players. Algorithms such as CFR have per-iteration complexity that scales linearly in this size.} of our technique to $O(|\H|(b + 1)^{k+1})$, where $|\H|$ is the number of nodes in the original game. We also show that this bound is essentially optimal: under reasonable computational assumptions (namely, the exponential time hypothesis), we show that there cannot exist an algorithm for solving even single-player games of imperfect recall the runtime of which is of the form $f(k) \poly(|\H|)$ \changed{(that is, with no dependence on $b$)}, for {\em any} function $f$.

In \Cref{sec:complexity} we investigate the computational complexity of computing mixed Nash equilibria with imperfect recall. We prove that computing a max-min strategy in mixed or behavioral strategies in games where both players have imperfect recall is $\Delta_2^\P$-complete and $\Sigma_2^\P$-complete respectively.

In \Cref{sec:experiments}, we evaluate our methods empirically by benchmarking our construction on a standard testbed of imperfect-information games, compared to state-of-the-art baselines. We find that our technique allows much faster equilibrium computation when the information complexity $k$ of the game is low.

\section{Preliminaries}\label{sec:prelims}
In the following, we introduce the concepts of \emph{game}, \emph{strategy}, and \emph{equilibrium} with a focus on classic results regarding strategy representation. We remark that the paper's main contribution is to extend these results from the setting of \emph{perfect-recall} games to imperfect-recall and adversarial team games.

\subsection{Extensive-form games}
A canonical representation for sequential games is the one of \emph{extensive-form games}. In the following, we introduce the notation adopted to describe their different components. 
A crucial aspect is that players have imperfect information about the state of the game during gameplay. That is, they may not know the exact point of the game that has been reached; instead, they have to consider a set of possible candidates that fit the information they know. These sets are called \emph{information sets} (in short \emph{infosets}).

\begin{definition}
    [Extensive-form game]
    An \emph{extensive-form game} $G$ is a tuple $(\N, \H, \Z, \A, \I, \vec p, u)$ that describes a sequential game, where
    \begin{itemize}
        \item \N is the set of \emph{players} in the game. A special symbol $\nature \in \N$ is used to indicate \emph{Nature} player (also called \emph{chance}). Nature encodes all the stochasticity of the environment known a priori, and therefore it picks actions according to a fixed probability distribution. We will commonly use a letter $i$ to indicate any player $i \in \N$ and $-i$ to indicate $\mc N\setminus i$.
        \item \H is a rooted tree of \emph{nodes} (also called \emph{histories}) of the game. Terminal nodes $\Z \subset \H$ represent possible game endings after which nobody plays anymore. At every non-terminal node in $\H \setminus \Z$, there is exactly one \emph{active} player who picks an action. $\H \setminus \Z$ is partitioned into sets $\{\H_i\}_{i \in \N}$, where $\H_i$ contains all nodes at which player $i$ is the active player. We use $h$ to indicate a generic node $h \in \H$,  $z$ to indicate a generic terminal node $z \in \Z$, and $\Root$ to indicate the root node of the game tree. Moreover, we use the symbol $\H_{-i}$ to indicate the nodes at which player $i$ is \emph{inactive}, that is, nodes at which another player $j \in \N \setminus i$ is playing.
        \item \A is the set of actions in the game. A non-empty set of actions $A_h \subseteq \A$ is available to $i$ at any node $h \in \H_i$. By assumption, each player knows the actions available at each node. Playing an action $a$ at $h$ leads to a node $h'$ which we denote as $ha \equiv h'$.
        \item \I is the set of all information sets. An information set $I \subseteq \H_i$ is a set of nodes that are indistinguishable by the active player $i$. It is assumed that players know the actions available to them whenever they are active, so it is required that for each $I \in \I$, all $h \in I$ have identical $A_h$. This common action set is denoted by $A_I$. $\I$ is partitioned into $\{\I_i\}_{i \in \N\setminus\nature}$, where $\I_i$ is the set of information sets of player $i$. Every node $h\in\H_i$ where player $i$ is active is contained in exactly one information set $I_h\in\I_i$.
        \item $\vec p \in [0,1]^\Z$ is a vector encoding Nature's fixed strategy. $\vec p[z]$ denotes the probability that Nature plays all the actions leading to $z \in \Z$.
        \item For every player $i \in \N \setminus \nature$, $u_i: \Z \to \R$ is the \emph{utility} function of player $i$, which associates each terminal node to the payoff that player $i$ receives when the game reaches this terminal node.
    \end{itemize}
\end{definition}
In the rest of this paper, we will use the following notation.
\begin{itemize}
    \item %
    The symbol $|\cdot|$ denotes the length of a node $h$, that is, the depth of the node from the root. The root node $\Root$ has length $0$ and for each $h\in\H\setminus\Z$ and $a\in\A_h$, it holds that $|ha| = |h|+1$. \changed{The {\em depth} $d$ of $G$ is the length of the longest history in $G$. The {\em branching factor} $b$ of a game is the maximum number of actions available at any information set: $b = \max_{h \in \H \setminus \Z} |A_h|$.}
    \item A set of nodes $S$ (usually $S$ will be an information set) precedes another set $S'$, denoted by $S \preceq S'$, if there are nodes $h \in S, h' \in S'$ such that $h'$ is a descendant of $h$ (or $h'=h$). If one of $S$ and $S'$ is a singleton, we omit the braces; for example, $h \preceq I$ means there is a path from $h$ to some node $h' \in I$.
\end{itemize}
The analysis of the information structure of an extensive-form game requires the introduction of specific concepts called \emph{sequences} and \emph{connectedness}.
Sequences characterize the structure of the information revealed to a player at any node of the game.
\begin{definition}[Sequence] \label{def:sequence} The \emph{sequence} $\sigma_i(h)$ of player $i$ at node $h$  is the ordered list of pairs of infosets reached and actions played, by player $i$, on the $\Root \to h$ path, excluding the information set of $h$. A special sequence $\Root$ corresponding to an empty list is also considered for each player.
\end{definition}

The set containing all sequences of all players and of a single player are respectively denoted by $\Sigma$ and $\Sigma_i$, with $\Sigma \equiv \bigcup_{i \in \N\setminus\nature} \Sigma_i$.
We use the notation $\sigma_i + (I,a)$ to indicate the sequence obtained by extending $\sigma_i$ by adding an (infoset, action) pair $(I,a)$.

This paper focuses on extensive-form games that satisfy some regularity properties.

\begin{definition}
    [Timeability] \label{def:timeability}
    An extensive-form game $G$ is \emph{timeable} if any path from the root to any node in the same infoset has the same length (equivalently, all histories belonging to the same infoset have the same depth). Formally, $G$ is timeable if for every infoset $I \in \mc I$ and every $h, h' \in I$, we have $|h| = |h'|$.
\end{definition}
As an assumption, we will consider timeable games throughout the paper. Intuitively, this corresponds to assuming that all players know the number of actions played in the game whenever they play.

\begin{definition}
    [Perfect recall]\label{def:perfect_recall}
    A player $i$ has \emph{perfect recall} if all histories in the same infoset of that player have the same sequence. Formally, player $i$ has perfect recall if for every infoset $I \in \mc I_i$ and every $h, h' \in I$, we have $\sigma_i(h) = \sigma_i(h')$.
    The common sequence is denoted by $\sigma_I$. 
\end{definition}
When a player has perfect recall, their nonempty sequences are uniquely identified by the last infoset-action pair contained within them. Thus, when player $i$ has perfect recall, we will use $Ia \in \Sigma_i$ to indicate the sequence of player $i$ that ends with playing action $a$ at infoset $I$. 
Intuitively, assuming that all the players have perfect recall corresponds to assuming that each of them never forgets information they previously acquired during the game's unfolding.
A game is said to be \emph{perfect-recall} when all players in the game have perfect recall.
We say that a player or a game is \emph{imperfect-recall} whenever the perfect recall condition does not hold.

We remark that the timeability assumption excludes \emph{absentmindedness}. That is, players never have nodes on the same path belonging to an identical information set. Intuitively, knowing the number of actions played in the game allows the player to distinguish nodes on the same path.

\subsection{Strategies}

Extensive-form games represent the sequential interaction among players that are required to take an action given a specific information state. The description of a possibly stochastic behavior of a player in a game is called a \emph{strategy}. \changed{ In this section, we will define the various kinds of strategy that we need for this paper.

\begin{definition}[Pure strategy]\label{def:pure_strategy}
    A {\em pure strategy} $\vec\pi$ for player $i$ is a vector $\vec\pi \in \bigtimes_{I \in \I_i} \A_I$. The {\em realization form} of a pure strategy is the vector $\xvec \in \{0,1\}^\Z$ where $\xvec[z] = 1$ if and only if the pure strategy prescribes all player actions on the path $\treepath{\Root}{z}$. 
\end{definition}

The definition of a pure strategy includes some redundancy. Suppose that $I$ is an information set such that $\vx[h] = 0$ for every $h \in I$, that is, no node in $I$ is played by $\vec\pi$. Then the value of $\vec\pi[I]$ does not matter, that is, the realization form does not depend on $\vec\pi[I]$. We are thus motivated to define the reduced pure strategy.

\begin{definition}\label{def:played_sequence}
    A sequence $\sigma_i(h)$ is \emph{played} or {\em reached} by pure strategy $\vec\pi$ for player $i$ if for all $(I', a') \in \sigma_i(h)$ we have $\vec\pi[I'] = a'$. A node $h \in \H_i$ is played if $\sigma_i(h)$ is played. An infoset $I$ is played if some $h \in I$ is played.
\end{definition}
\begin{definition}
    The {\em reduced form} of a pure strategy is the vector $\vec\pi \in \bigtimes_{I \in \I_i} \A_I \cup \{\bot\}$, for which $\vec\pi[I]$ is replaced by $\bot$ if and only $I$ is not played by $\vec\pi$.
\end{definition}
Two pure strategies with the same reduced form also have the same realization form, and are thus equivalent for all purposes in this paper. Thus, for the rest of the paper, we will use only the reduced form.

\begin{definition}[Mixed strategy] A \emph{mixed strategy} is a distribution over pure strategies. The realization form of a mixed strategy is the vector $\xvec \in [0,1]^\Z$ obtained through the convex combination of realization forms of pure strategies according to the probabilities in the mixed strategy.
\end{definition}

It is to be noted that a realization-form strategy, in general, is not a probability distribution over $\Z$. \changed{Instead, $\vx_i[z]$ is only player $i$'s contribution to this probability distribution. The product $\prod_{i\in\N} \xvec_i[z]$, however, {\em is} a probability distribution representing the probability that a terminal node $z$ is reached when each player plays strategy $\xvec_i$.}

\begin{definition}
    [Behavioral strategy] A \emph{behavioral strategy} is a mixed strategy for which the actions selected at different information sets are mutually independent.
\end{definition}
We will use the following notation: for each player $i$, $\Pi_i, \X_i$, and $\hat\X_i$ denote the sets of pure, mixed, and behavioral strategies respectively, all in realization form. Similarly, $\Pi := \prod_{i \in \N \setminus 0} \Pi_i, \X = \prod_{i \in \N \setminus 0}$, and $\hat\X = \prod_{i \in \N \setminus 0} \hat\X_i$ denote the sets of pure, mixed, and behavioral strategy {\em profiles}, respectively.

}

A classic question is to characterize the strategy spaces given their different definitions and to find conditions to guarantee the equivalence of these different notions.
In particular, the local decomposability at each infoset of behavioral strategies offers efficiency advantages over mixed strategies for describing strategies in an extensive-form game. 
Luckily, a sufficient condition for their equivalence is that the considered game is perfect-recall, as stated in the following classic theorem.
\begin{theorem}
    [Kuhn's theorem, \citep{Kuhn53:Extensive}]\label{th:kuhn}
    In every extensive-form game with perfect recall, $\hat \X_i = \X_i$, that is, for each mixed strategy, there is a behavioral strategy with the same realization form, and vice versa.
\end{theorem}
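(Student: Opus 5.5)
The inclusion $\hat\X_i \subseteq \X_i$ holds by definition, since a behavioral strategy is a particular mixed strategy. The work is the reverse inclusion. Given a mixed strategy $\mu$ over reduced pure strategies with realization form $\xvec$, I will construct a behavioral strategy with the same realization form.

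\textbf{Step 1: extend realization weights to sequences.} For each sequence $\sigma \in \Sigma_i$, let $x(\sigma)$ be the $\mu$-probability that $\sigma$ is played, in the sense of \Cref{def:played_sequence}. By perfect recall, every node $h$ of an infoset $I$ has the same sequence $\sigma_I$. Hence whether $I$ is played depends only on whether $\sigma_I$ is played, and $x(\sigma_I)$ is the probability that $I$ is reached under $\mu$.

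\textbf{Step 2: check the flow constraints.} For each $I \in \I_i$ I need
\[
\sum_{a \in A_I} x(Ia) = x(\sigma_I).
\]
Take any pure strategy that plays $\sigma_I$. Since it plays $\sigma_I$, it reaches $I$, so $\vec\pi[I] \neq \bot$ and it selects exactly one $a \in A_I$, thereby playing $Ia$. Conversely, any pure strategy that plays $Ia$ also plays $\sigma_I$. Summing over $\mu$ gives the identity.

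\textbf{Step 3: define the behavioral strategy.} Set
\[
\beta(a \mid I) = \frac{x(Ia)}{x(\sigma_I)}
\]
when $x(\sigma_I) > 0$, and take $\beta(\cdot \mid I)$ arbitrary otherwise. Because a perfect-recall player's sequences form a tree, with each nonempty sequence $Ia$ having the unique parent $\sigma_I$, the probabilities telescope along the path. For every $\sigma = (I_1,a_1),\dots,(I_m,a_m)$ this gives
\[
\prod_{j} \beta(a_j \mid I_j) = x(\sigma).
\]
If some prefix has weight zero, both sides are zero, since then all extensions have weight zero as well.

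\textbf{Step 4: compare realization forms.} For a terminal node $z$, the entry $\xvec[z]$ is the probability that all of player $i$'s actions on the path to $z$ are prescribed, which is exactly $x(\sigma_i(z))$, where $\sigma_i(z)$ includes every $i$-action on the path. Under $\beta$, independent sampling at each infoset reproduces the product computed in Step 3, so the realization forms coincide.

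\textbf{Main obstacle.} The step I expect to need the most care is Step 2 together with the telescoping in Step 3. This is precisely where perfect recall enters. Without it, reaching $I$ would not be determined by a single sequence, so the conditional action probabilities at $I$ would depend on the history and a single $\beta(\cdot\mid I)$ could not match them. Timeability rules out absentmindedness, which ensures each infoset appears at most once along a path, so no infoset factor is counted twice in the product.
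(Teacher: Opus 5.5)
Your proof is correct. The paper gives no proof of its own here: it quotes the result from \citet{Kuhn53:Extensive}. So there is nothing in the paper to compare against, and your argument is the standard realization-weight proof.

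It also lines up with how the paper later uses perfect recall. Your Step~2 is exactly the flow constraint of the sequence-form polytope in \Cref{pr:seq_form_constraints}, with decision points $\mathcal{I}_i$ and observation points $\Sigma_i$. Your Step~4 is the paper's identification $\vec x_i[z] = \vec q[\sigma_i(z)]$. Your Step~3 inverts the top-down map $\vec x[s*] = \vec x[p_s]\,\vec r_s$ that CFR uses to turn local behavioral strategies into sequence-form strategies.

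Two small remarks. First, you do not need timeability to exclude a repeated infoset along a path. Under perfect recall, a node lying below another node of the same infoset would have a strictly longer sequence, so perfect recall alone rules this out. Second, say explicitly that Step~2 is what makes $\beta(\cdot \mid I)$ a probability distribution when $x(\sigma_I) > 0$. As written, that is only implicit.
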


\citet{Kuhn53:Extensive} also showed that perfect recall is a necessary condition for $\hat \X_i = \X_i$; in imperfect-recall games (but without absentmindedness; cf. \Cref{sec:assumptions}), only $\hat \X_i \subseteq \X_i$ holds.

\subsection{Tree-form decision problems and representation of realization-form strategies}\label{sec:tree-form-decision-problems}
This subsection describes \emph{tree-form decision problems}, originally introduced by \citet{Farina19:Regret}. They are highly efficient representations of the strategy space of a specific player in the game.
In perfect-recall games, tree-form decision problems lead to the \emph{sequence-form} representation of the strategy space, first noted by \citet{Romanovskii62:Reduction} and later rediscovered by \citet{Stengel96:Efficient}. The tree-form decision problem will later be generalized to DAG structures arising from imperfect-recall games.

\begin{definition}[Tree-form decision problem]
    A {\em tree-form decision problem} is a rooted tree of nodes representing an interaction between a {\em player} and an {\em environment}. Each node $s$ is one of
    the following types: 
    \begin{itemize}
        \item {\em decision points} $s \in \mc D$, at which the player takes an action, and
        \item {\em observation points} $s \in \mc S$, at which the environment selects the next state.
    \end{itemize}
\end{definition}

We will assume WLOG that the root node ($\Root$) and all terminal nodes are observation points and that decision points and observation points alternate. At a decision point $s$, the set of legal actions will be denoted by $\A_s$. Each action $a \in \A_s$ leads to a different observation node, denoted by $sa$. The parent observation node of $s$ is denoted by $p_s$. For notational simplicity, when $\vec x \in \R^{\mc S}$ is any vector indexed by observation points and $s$ is a decision point, we will use $\vec x[s*] \in \R^{\mc A_s}$ to denote the subvector of $\vec x$ indexed only by the children of $s$.

Similar definitions of strategy notions as those from extensive-form games hold in tree-form decision problems.
A {\em pure strategy} is an assignment of one action to each decision node. Pure strategies are associated with vectors $\vec x \in \{0, 1\}^{\mc S}$ in the following way: for a node $s \in {\mc S}$, $\vec x[s]=1$ if and only if the player plays every action on the path from the root to $s$. We will call such vectors {\em tree-form (pure) strategies}. A {\em tree-form mixed strategy} $\vec x \in \mc Q$ is a convex combination of tree-form pure strategies. 

The set of tree-form mixed strategies has a natural description as a set of linear constraints. Namely: 
\begin{proposition}\label{pr:seq_form_constraints}
    $\mc Q$ is the set of nonnegative vectors $\vec q \in \R^{\mc S}$ satisfying the following constraints:
    \begin{subequations}
        \begin{numcases}{}
            \vec q[\Root] = 1 \label{eq:seq_strategy_root}\\
            \vec q[p_s] = \sum_{a \in \A_s} \vec q[s a] \label{eq:seq_strategy_split} & for all $s \in \mc D$.
        \end{numcases}
    \end{subequations}
\end{proposition}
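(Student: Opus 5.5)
We prove the two inclusions separately. Write $\mc P$ for the polytope of nonnegative vectors satisfying \eqref{eq:seq_strategy_root} and \eqref{eq:seq_strategy_split}.

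\textbf{The inclusion $\mc Q \subseteq \mc P$.} Since $\mc P$ is convex and $\mc Q$ is the convex hull of the tree-form pure strategies, it suffices to check that every pure strategy $\vec x$ lies in $\mc P$. We have $\vec x[\Root]=1$ trivially, because the path to the root contains no actions. For a decision point $s$ there are two cases.
\begin{itemize}
    \item If $\vec x[p_s]=0$, some action on the path to $p_s$ is not played. Then every child $sa$ also has value $0$, since its path extends the path to $p_s$.
    \item If $\vec x[p_s]=1$, the pure strategy assigns exactly one action $a^*$ to $s$. Since $s$ is the only node between $p_s$ and each $sa$, we get $\vec x[sa]=1$ exactly when $a=a^*$.
\end{itemize}
In both cases $\vec x[p_s]=\sum_a \vec x[sa]$, and nonnegativity is immediate.

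\textbf{The inclusion $\mc P \subseteq \mc Q$.} This is the substantive direction. Given $\vec q\in\mc P$, we construct a behavioral-style decomposition. At each decision point $s$, define the local distribution
\[
\beta_s(a)=\vec q[sa]/\vec q[p_s]
\]
when $\vec q[p_s]>0$, and take $\beta_s$ to be an arbitrary fixed distribution otherwise. Constraint \eqref{eq:seq_strategy_split} together with nonnegativity makes each $\beta_s$ a probability distribution over $\A_s$. Now sample a pure strategy by choosing an action independently at every decision point according to $\beta_s$. The expected tree-form vector of this random pure strategy is a convex combination of pure strategies, hence lies in $\mc Q$. It remains to show that this expectation equals $\vec q$. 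With finitely many decision points this is a finite product distribution, so it is a genuine finite convex combination.

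We show by induction on depth that the probability that every action on the path from $\Root$ to an observation point $o$ is played equals $\vec q[o]$. The base case holds because $\vec q[\Root]=1$. For the inductive step, let $o=sa$. Independence gives that this probability is $\vec q[p_s]\cdot\beta_s(a)$.
\begin{itemize}
    \item If $\vec q[p_s]>0$, this equals $\vec q[sa]$.
    \item If $\vec q[p_s]=0$, the product is $0$. Moreover, nonnegativity and \eqref{eq:seq_strategy_split} force $\vec q[sa]=0$, so the two sides again agree.
\end{itemize}
The only delicate point is this zero-probability case, and nonnegativity handles it. Here the tree structure is used: each observation point has a unique path and each decision point a unique parent $p_s$, so the reach probability factorizes along the path.
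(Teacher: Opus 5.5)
Your proof is correct and complete. The forward inclusion checks the flow constraints on pure strategies and then uses convexity of the constraint set. The reverse inclusion normalizes $\vec q$ into local distributions $\beta_s$, samples a pure strategy independently at each decision point, and verifies reach probabilities by induction, handling the $\vec q[p_s]=0$ case through nonnegativity.

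The paper gives no proof of this proposition; it quotes it as the classical sequence-form characterization from the work it cites (Romanovskii, von Stengel, Farina et al.). Your argument is the standard Kuhn-style proof of that fact. The map you are inverting is the same top-down recursion $\vec x[s*] = \vec x[p_s]\,\vec r_s$ that \textsc{NextStrategy} in the paper's CFR algorithm uses to turn local strategies into a tree-form strategy.
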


The tree-form constraints are \emph{probability flow} constraints: the probability of reaching a node starts at $1$ at the root (\ref{eq:seq_strategy_root}), and is split at each decision point on the available actions (\ref{eq:seq_strategy_split}).

Whenever a player $i$ has perfect recall in an extensive-form game, a tree-form decision problem can be extracted from the game by considering the point of view of player $i$.
In this tree-form decision problem, we have $\mc D = \mc I_i$ and $\mc S = \Sigma_i$, i.e.\ decision points are given by information sets of player $i$, and every observation point corresponds to a sequence of player $i$.
In this case, tree-form mixed strategies correspond to the so-called \emph{sequence-form} strategies, which are in natural correspondence with realization-form strategies.

\begin{definition}
    [Sequence-form strategies]
    In a perfect-recall game, a \emph{sequence-form strategy} of player $i$ is a nonnegative vector $\vec q \in \mc Q$, where $\mc Q$ is defined as in \Cref{pr:seq_form_constraints} with $\mc D = \mc I_i$ and $\mc S = \Sigma_i$.
\end{definition}

By definition, a sequence-form strategy $\vec q \in \mc Q$ of player $i$ associates to each sequence $\sigma \in \Sigma_i$ the probability that the player plays all the actions in $\sigma$. In particular, the realization-form strategy of the sequence-form strategy $\vec q$ is $\vec x_i \in \X_i$ where $\vec x_i[z] = \vec q[\sigma_i(z)]$.

This yields a much more compact representation of mixed strategies than the naive one offered by their definition. Indeed, mixed strategies are a probability distribution over an exponential number of pure strategies, while their corresponding sequence-form strategies have one element per sequence (which is uniquely identified by its last infoset-action pair), similar to behavioral strategies.

\subsection{Two-player zero-sum games and max-min strategies}

A (mixed) {\em strategy profile} is a tuple $\vec x = (\vec x_1, \dots, \vec x_n) \in \X_1 \times \dots \times \X_n$ consisting of one mixed strategy in realization form for each player. The {\em expected utility} of player $i$ is $u_i(\vec x) := \E_{z \sim \vec x} u_i(z)$, where $z \sim \vec x$ denotes sampling a terminal node $z$ by following the profile $\vec x$, that is, $u_i(\vec x) \coloneqq \sum_{z \in \Z} u_i(z) \vec p[z]\prod_{j \in \N \setminus \nature} \xvec_j[z]$.
A crucial property of the realization form, which we will use repeatedly, is that the utility functions $u_i : \X_1 \times \dots \X_n \to \R$ are linear in each $\vec x_i$.

After deciding the most suitable representation for strategies in a given problem, an objective to be optimized to find the optimal strategy is required. This is provided by \emph{solution concepts}, which describe the characteristics that make a strategy optimal in a context.
One solution concept relevant for the paper is the {\em Nash equilibrium}:

\begin{definition}
    [Nash equilibrium] A \emph{Nash equilibrium} is a mixed strategy profile $\vec x$ such that no player has a profitable deviation. That is, for every player $i$ and every $\vec x_i' \in \X_i$, we have $u_i(\vec x_i', \vec x_{-i}) \le u_i(\vec x)$.
\end{definition}

 The most studied type of extensive-form games are \emph{two-player zero-sum games}. In this class of games, two players face each other in a setting where a payoff for one is a cost for the other. In this section, we define two-player zero-sum games and explore some basic solution concepts for them.
 
 \begin{definition}
     [Two-player zero-sum game]
     An extensive-form game is two-player zero-sum if
     \begin{itemize}[noitemsep,topsep=0pt,parsep=0pt,partopsep=0pt]
         \item the player set contains Nature and two players, called \pmax and \pmin, namely $\N = \{\nature, \pmax, \pmin \}$, and
         \item the utilities of the two players are opposite, namely  $u_\pmax = -u_\pmin$.
     \end{itemize}

 \end{definition}
In this case, we will use a single utility function $u = u_\pmax = -u_\pmin$ and use $\X$ and $\Y$ to represent the two players' set of realization-form mixed strategies, respectively.
In two-player zero-sum games, Nash equilibria $(\vec x, \vec y) \in \X \times \Y$ are the saddle-point solutions to the bilinear saddle-point problem  \begin{align}
        \max_{\xvec \in \X} \min_{\yvec \in \Y} \xvec^\top \vec U \yvec = \min_{\yvec \in \Y} \max_{\xvec \in \X}  \xvec^\top \vec U \yvec \label{eq:maxmin}
    \end{align}
where $\vec U = \textrm{diag}(\{u(z) \vec p[z]\}_{z \in \Z})$  so that $\vec x^\top \vec U \vec y = \E_{z \sim (\vec x, \vec y)} u(z)$. In addition, in all Nash equilibria of a two-player zero-sum game, \pmax has the same expected utility \cite{vonNeumann28:Zur}, which is called the {\em Nash value} of the game. As discussed in the previous subsection, in games with {\em perfect recall}, mixed and behavioral strategies are equivalent by \Cref{th:kuhn}. In games without perfect recall, the above definition of Nash equilibrium is still valid; however, it will also be useful to define the behavioral max-min strategy:
\begin{definition}[Behavioral max-min strategy]
    In a two-player zero-sum game, a \emph{behavioral max-min strategy} $\xvec \in \hat\X$ is a solution to the optimization problem
    \[
        \max_{\xvec \in \hat\X} \min_{\yvec \in \hat\Y} \xvec^\top \vec U \yvec.
    \]
\end{definition}
The {\em behavioral max-min value} is the optimal value of the above problem.
Since $\hat \X$ and $\hat\Y$ are not necessarily convex sets, the minimax theorem does not apply, so the maximization and minimization cannot be swapped. Therefore, the behavioral max-min strategy is not an {\em equilibrium}. Further, in games with imperfect recall, the tree-form decision problem is not a valid representation of the set of realization-form strategies. Therefore, we will need different techniques to tackle such games.

\subsection{Online learning in tree-form decision problems and extensive-form games}

In this work, we will leverage the \emph{online convex optimization}~\cite{Zinkevich03:Online} (OCO) framework for modeling repeated interactions between a player and an arbitrary environment. According to this model, the player iteratively interacts with the environment by selecting at each timestep $t$ a strategy $\vec x^t$ from a convex, compact set $\mc Q\subseteq \R^m$ and observes a linear utility function chosen (possibly adversarially) by the environment. These linear utility functions will be represented as $u^t(x) = \ip{\vu^t, \vx}$, where $\vec u^t \in [-1,+1]^m$. Considering an interaction repeated over $T$ steps, the objective of the player is to minimize the \emph{cumulative regret}, which is defined as
\[
    R^T_{\mc Q} \coloneqq \max_{\vec x \in\mc Q} \sum_{t=1}^T  \ip{\vu^t, \vx - \vx^t}
\]
We say that an algorithm is a \emph{regret minimizer} if it guarantees that the cumulative regret grows sublinearly with $T$, that is, $R^T=o(T)$.

Regret minimization and equilibrium computation in two-player zero-sum games are tightly connected. In particular, consider the scenario of a repeated two-player zero-sum game in which \pmax and \pmin sequentially interact for $T$ timesteps. Both players use regret minimizers over their strategy sets $\X$ and $\Y$ respectively. Let, at each timestep $t$, the utility vectors observed by the two players be $\vec u^t_\pmax = \vec U \vec y^t$ for \pmax and $\vec u^t_\pmin = - \vec U\vec x^t$ for \pmin. \changed{Then, it follows directly by summing the regrets and dividing by $T$ that the equilibrium gap computed w.r.t. the average strategies $\bar{\vec x} :=\sum_{t=1}^T\vec x^t/T$, $\bar{\vec y} := \sum_{t=1}^T\vec y^t/T$ is equal to the total regret:}
\[
    0 \leq \max_{\vec x\in\X} \vec x^\top \vec U \bar{\vec y} - \min_{\vec y\in\Y} \bar{\vec x}^\top \vec U \vec y = \frac{R^T_\pmax + R^T_\pmin}{T}.
\]
Thus, if the two players pick their strategies according to a no-regret algorithm, this property guarantees the convergence of the average strategies $\bar{\vec x}$ and $\bar{\vec y}$ to a Nash Equilibrium as $T\to\infty$. 

We now show how to construct regret minimizers for arbitrary tree-form decision problems (and thus, in particular, also for perfect-recall zero-sum games). For this problem, we use the {\em counterfactual regret minimization} (\ref{al:cfr})  framework \cite{Zinkevich07:Regret,Farina19:Regret}. Intuitively, \ref{al:cfr} allows one to {\em build} a regret minimizer on a tree-form strategy set $\mc Q$ by running {\em local} regret minimizers at each decision point and combining them in a clever way. The guarantee given by \ref{al:cfr} can be expressed as follows. Call a subset $P \subseteq \mc D$ {\em playable} if there is a pure strategy that reaches every decision point in $P$, that is, there is a pure strategy $\vec x \in \mc Q$ such that $\vec x[p_s]=1$ for every $s \in P$. 

\begin{algorithm}[t]\namealg{CFR}
\caption{Counterfactual regret minimization on tree-form decision problems $\mc Q$. For each decision point $s$, $\mc R_s$ is a regret minimizer on $\Delta(\A_s)$.}\label{al:cfr}
\begin{algorithmic}[1]
\Procedure{NextStrategy}{}
    \State $\vec x^t[\Root] \gets 1$ 
    \For{each decision point $s$, in top-down order} 
        \State $\vec r_s^t \gets \mc R_s$.{\sc NextStrategy()}
        \State $\vec x^t[s*] \gets \vec x^t[p_s] \vec r_s^t$
    \EndFor
    \State \Return{$\vec x^t$}
\EndProcedure
\Procedure{ObserveUtility}{$\vec u^t$}
    \State $\vec v^t \gets \vec u^t$
    \For{each decision point $s$, in bottom-up order} 
        \State $\mc R_s$.{\sc ObserveUtility}($\vec v^t[s*]$) 
        \State $\vec v^t[p_s] \gets \vec v^t[p_s] + \ip{\vec r_s^t, \vec v^t[s*]}$
    \EndFor
    \State $t \gets t + 1$
\EndProcedure
\end{algorithmic}
\end{algorithm}
\begin{theorem}[\cite{Zinkevich07:Regret, Farina19:Regret}]\label{th:cfr}
    After the execution of \ref{al:cfr} for $T$ timesteps on the mixed strategy set $\mc Q$, the cumulative regret is at most $\max_P \sum_{s \in P} [R_s^T]_+$, where $R_s^T$ is the regret of the local regret minimizer $\mc R_s$ and the maximum is taken over all playable sets $P$.
\end{theorem}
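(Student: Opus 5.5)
The plan is to prove the bound by induction on the tree-form decision problem, bottom-up. For each observation point $s$ I will define the subtree regret
\[
R_s^T := \max_{\vec x \in \mc Q_s} \sum_{t=1}^T \bigl(\ip{\vec u^t_{s}, \vec x} - \ip{\vec u^t_{s}, \vec x^t_{s}}\bigr),
\]
where $\mc Q_s$ is the set of tree-form strategies of the subtree rooted at $s$, normalized so that $\vec x[s]=1$. Here $\vec x^t_s$ is \ref{al:cfr}'s strategy restricted to that subtree and rescaled by $1/\vec x^t[s]$, which is exactly the product of the local strategies $\vec r^t$ below $s$. The inductive claim is
\[
R_s^T \le \max_{P \text{ playable in subtree of } s} \sum_{s' \in P} [R_{s'}^T]_+ ,
\]
where on the right $R_{s'}^T$ is the local regret of $\mc R_{s'}$ at decision point $s'$, as in the statement. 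Applying the claim at $s=\Root$ gives the theorem.

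The first step is to identify the counterfactual values. An easy bottom-up induction shows that $\vec v^t[s]$, computed in \textsc{ObserveUtility}, equals $\vec u^t[s]$ plus the expected utility collected strictly below $s$ under the local strategies $\vec r^t$. In other words, $\vec v^t[s] = \ip{\vec u^t_s, \vec x^t_s}$.

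The second step handles an observation point $s$. Its children are decision points $d$ (the environment picks one), and $\mc Q_s$ is the Cartesian product of the subtree sets at those children. The maximum therefore splits into a sum over children, so $R_s^T \le \sum_d R_d^T$, where $R_d^T$ is the subtree regret measured from decision point $d$. The union of playable sets taken from different children is again playable, which matches the sum.

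The third step handles a decision point $d$ with actions $a$ leading to observation points $da$. Any comparator in the subtree of $d$ is a mixture $\vec \lambda \in \Delta(\A_d)$ of comparators in the subtrees of the $da$. Writing $\hat{\vec v}^t[da]$ for the best-response counterfactual value, I will decompose
\[
\sum_t \ip{\vec\lambda, \hat{\vec v}^t[d*]} - \sum_t \ip{\vec r_d^t, \vec v^t[d*]}
= \sum_t \ip{\vec\lambda, \hat{\vec v}^t[d*] - \vec v^t[d*]} + \sum_t \ip{\vec\lambda - \vec r_d^t, \vec v^t[d*]} .
\]
The first term is at most $\max_a R_{da}^T$. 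The second is at most $R_d^T \le [R_d^T]_+$, the local regret. Combining these, the subtree regret at $d$ is at most $[R_d^T]_+ + \max_a R_{da}^T$. Choosing a single action $a$ corresponds to adding $d$ to a playable set that continues only into the subtree of $da$, which is exactly how playable sets are structured.

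The main obstacle is the bookkeeping in this decomposition. The loss fed to $\mc R_d$ is the counterfactual value computed under the algorithm's own future play, not under the comparator's. This is what forces the additive split: the comparator's advantage below $d$ must be charged to the descendants' regrets, which yields the $\max$ over actions and hence over playable sets. The positive parts are needed because some descendant regrets may be negative, and playable sets are allowed to omit nodes. Finally, since the regret is linear, the maximum over mixed comparators is attained at a pure strategy, so restricting to one action per decision point loses nothing.
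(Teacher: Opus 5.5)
Your argument is correct and is essentially the standard proof behind the cited result, which the paper states without proof: writing $\tilde R^T$ for subtree regrets, your identities $\tilde R_o^T=\sum_d \tilde R_d^T$ at observation points and $\tilde R_d^T\le R_d^T+\max_a\tilde R_{da}^T$ at decision points are exactly the laminar regret decomposition of Farina et al. Two points need tightening. First, define $\hat{\vec v}^t[da]$ as the round-$t$ value of a \emph{fixed} hindsight-optimal comparator in the subtree of $da$, not a per-round best response (with a per-round best response the first term is no longer bounded by $\max_a \tilde R_{da}^T$). Second, do not reuse $R_d^T$ in the second step for the subtree regret at $d$, since in the third step it denotes the local regret of $\mathcal R_d$.
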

Local regret minimizers on simplices $\Delta(m)$ the regret of which scales as $O(\sqrt{mT})$ or even $O(\sqrt{T \log m})$ are well known; two examples are {\em regret matching}~\cite{Hart00:Simple}, which has the former guarantee, and {\em multiplicative weights}, which has the latter guarantee. Picking either of these algorithms (or, indeed, any of many regret minimizers with similar guarantees), we have:
\begin{corollary}
    If both players in an extensive-form zero-sum game with perfect recall use \ref{al:cfr} with any efficient local regret minimizer, then after $T$ timesteps, the average strategy profile $(\bar{\vec x}, \bar{\vec y})$ will be an $O(|\Sigma_\pmax| + |\Sigma_\pmin|)/\sqrt{T}$-approximate Nash equilibrium.
\end{corollary}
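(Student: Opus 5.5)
The plan is to combine \Cref{th:cfr} with the regret-to-equilibrium identity derived earlier in this subsection. Since both players have perfect recall, each player's mixed strategy set is exactly a tree-form strategy set $\mc Q$. By \Cref{th:kuhn} and the sequence-form correspondence, its decision points are $\mc D = \I_i$ and its observation points are $\mc S = \Sigma_i$. Running \ref{al:cfr} on this set is therefore a legitimate regret minimizer over $\X$ (respectively $\Y$) in sequence form. The bilinear objective $\vec x^\top \vec U \vec y$ becomes bilinear in the sequence-form vectors, via $\vec x[z] = \vec q[\sigma_\pmax(z)]$.

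First, I would fix the scaling. The paper assumes utility vectors lie in $[-1,+1]^m$, so I normalize $|u(z)| \le 1$. The utility vector fed to \pmax, namely $\vec U \vec y^t$ pushed down to sequences, then has entries of the form $\sum_{z:\sigma_\pmax(z)=\sigma} u(z)\vec p[z]\vec y^t[z]$. These are bounded in absolute value by the total reach probability, hence by $1$. I will treat the resulting constant factor as absorbed into the $O(\cdot)$.

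Second, I need to bound the counterfactual values seen by each local minimizer $\mc R_s$. The vector $\vec v^t[s*]$ accumulated bottom-up in \ref{al:cfr} is at most the expected utility of the subtree, and is therefore bounded by a constant. Using regret matching or multiplicative weights then gives $R_s^T \le O(\sqrt{|\A_s| T})$, or at most $O(|\A_s|\sqrt{T})$. By \Cref{th:cfr}, the total regret satisfies
\[
R^T_\pmax \le \max_P \sum_{s\in P}[R^T_s]_+ \le \sum_{I\in\I_\pmax} O(|A_I|\sqrt T) = O(|\Sigma_\pmax|\sqrt T).
\]
Here I use the fact that every infoset-action pair is a distinct sequence, so $\sum_I |A_I| \le |\Sigma_\pmax|$. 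The same argument gives $R^T_\pmin = O(|\Sigma_\pmin|\sqrt T)$.

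Finally, I plug these bounds into the identity stating that the equilibrium gap of $(\bar{\vec x},\bar{\vec y})$ equals $(R^T_\pmax+R^T_\pmin)/T$. This yields a gap of $O((|\Sigma_\pmax|+|\Sigma_\pmin|)/\sqrt T)$. By definition, that means the average profile is an approximate Nash equilibrium with this error. The main obstacle is the second step: making sure the counterfactual values stay uniformly bounded, so that the local regret bounds hold with constants independent of the tree. I expect this to reduce to noting that each $\vec v^t[sa]$ is a sum of bounded utilities weighted by a sub-probability distribution over terminal nodes, which gives the bound without further work.
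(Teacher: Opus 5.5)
Your proposal is correct and follows the paper's argument. You plug a standard local regret bound (regret matching or multiplicative weights) into \Cref{th:cfr}, sum over infosets to get $O(|\Sigma_i|\sqrt{T})$ per player, and divide by $T$ using the regret-to-gap identity. The paper leaves these steps implicit; your check that the counterfactual values stay in a constant range, so the local bounds hold with a uniform constant, makes explicit a step the paper only assumes.
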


We refer the interested reader to \cite{Brown19:Solving, Farina21:Faster} for further reading on practical state-of-the-art no-regret-based solvers for two-player zero-sum games.

\subsection{Equivalence across games}

The contributions presented in this paper will rely on auxiliary games to represent strategy optimization problems.
In order for the results obtained in the auxiliary game to map to the original one we want to solve, we need to define what it means for two games to be equivalent. Let $G = (\N, \H, \Z, \A, \I, \vec p, u)$ and $G' = (\N, \H', \Z', \A', \I', \vec p', u')$ be extensive-form games with the same set of players. Let $\Pi_i$ and $\Pi_i'$ be player $i$'s pure strategy set in $G$ and $G'$, respectively, and similarly let $u_i$ and $u_i'$ be player $i$'s utility function in $G$ and $G'$, respectively.

\begin{definition}
    [Strategic Equivalence]\label{def:strategic_equivalence}
    Two games $G$ and $G'$ are {\em strategically equivalent} if there are bijective {\em strategy maps} $\rho_i : \Pi_i \to \Pi_i'$ for each player $i$ such that, for every profile $\vec\pi \in \Pi$ and every player $i$ we have $u_i(\vec\pi) = u'_i(\rho(\vec\pi))$ where $\rho(\vec\pi) := (\rho_i(\vec\pi_i))_{i\in \N \setminus 0}$.
\end{definition}
This definition is a very strong notion of equivalence: if two extensive-form games are equivalent in the above sense, then every strategy $\vec\pi_i$ in one game is equivalent to some strategy $\rho(\vec\pi_i)$ in the other game. Thus, in particular, a solution to one game will give a solution to the other game. 

\subsection{Adversarial team games}\label{sec:adversarial_team_games}
The general framework of adversarial team games has first been studied by \citet{vonStengel97:Team} in the context of normal-form games, while \citet{Celli18:Computational} first addressed them in an extensive-form setting.
Adversarial team games describe situations where multiple agents are organized in two teams, receiving zero-sum payoffs. Like the above papers, ours 
focuses on the setting in which no extra communication channel is available to the players during the game, but they are allowed to communicate freely before the start of the game. This means that the only form of coordination across players' strategies available is \emph{preplay coordination}; that is, any coordination has to be prepared before the start of the game.

Adversarial team games can be modeled as extensive-form games as follows.
\begin{definition}
    [Adversarial team game]
    An extensive-form, perfect-recall game $(\N, \H, \Z, \A, \I, \vec p, u)$ is said to be an \emph{adversarial team game} (ATG), or \emph{two-team zero-sum game} if
    \begin{itemize}[noitemsep,topsep=0pt,parsep=0pt,partopsep=0pt]
        \item the player set is partitioned in two sets called \emph{teams}, symbolized by \tmax and \tmin. Formally, $\N = \tmax \cup \tmin \cup \{\nature\}$, and
        \item the utilities of the players belonging to the same team are identical, and the total utilities of the two team are opposites. Formally,
        \begin{gather}
            u_i = u_j \quad \text{for all $i,j \in \tmax$} \\
            u_i = u_j \quad \text{for all $i,j \in \tmin$} \\
            \sum_{i \in \tmax} u_i = -\sum_{j \in \tmin}u_j.
        \end{gather}
    \end{itemize}
\end{definition}
In adversarial team games, the Nash equilibrium fails to consider that teams can coordinate among themselves. Indeed, it is possible for there to be a Nash equilibrium in which two teammates could profit by {\em jointly} switching strategies, but no {\em individual} player can profit from a unilateral deviation. To consider these joint deviations, it is most natural to reformulate an adversarial team game as a two-player zero-sum game of imperfect recall, in which a {\em team coordinator} plays on behalf of all team members. In this manner, deviations of the team coordinator correspond to {\em simultaneous, joint} deviations of all team members. We now formalize this conversion.
\begin{definition}
    [Coordinator game]\label{def:coordinator_game}
    Let $G = (\N, \H, \Z, \A, \I, \vec p, u)$ be an adversarial team game. The {\em coordinator game} $G'$ corresponding to $G$ is the two-player zero-sum imperfect-recall game defined $G' = (\{ \nature, \pmax, \pmin\}, \H, \Z, \A, \I', \vec p, u')$, where
    \begin{gather}
        \I'_\pmax = \bigcup_{i \in \tmax} \I_i, \qq{} \I'_\pmin = \bigcup_{i \in \tmin} \I_i, \qq{} u'_\pmax = \sum_{i \in \tmax} u_i, \qq{and} u'_\pmin = \sum_{i \in \tmin} u_i.
    \end{gather}
\end{definition}
The coordinator game merges all members of a team (\tmax or \tmin) into a coordinator (\pmax or \pmin). Therefore:
\begin{itemize}
    \item Pure strategies of a coordinator correspond to {\em pure profiles} of the team.
    \item Behavioral strategies of a coordinator correspond to {\em behavioral profiles} of the members of the team. Since behavioral strategies enforce actions at different infosets to be independently sampled, this means that team members can {\em privately} sample randomness for their own personal use but cannot share that randomness with teammates.
    \item Mixed strategies of a coordinator correspond to {\em correlated} strategy profiles of the members of the team. In a correlated profile, team members may {\em jointly} sample randomness that they use to correlate their actions.
\end{itemize}
We remark on the role that preplay coordination has in allowing the coordination capabilities modeled by the coordinator game. In fact, before starting the game, players are allowed to jointly sample a pure profile from their coordinator's mixed strategy and then individually play the specified actions at the infosets in which they play. This allows the team to play any randomized strategy of the coordinator effectively.

The coordinator game allows us to define notions of equilibrium specialized for team games:
\begin{definition}\label{def:tmecor}
    A {\em team max-min equilibrium with correlation}  (TMECor) of an ATG $G$ is a mixed-strategy Nash equilibrium of $G'$.
\end{definition}
\begin{definition}\label{def:tme}
    A {\em team max-min equilibrium}  (TME) of an ATG $G$ is a behavioral max-min strategy of $G'$.
\end{definition}
The {\em TMECor value} and {\em TME value} are defined analogous to the Nash value and behavioral max-min value. As discussed before, behavioral max-min strategies in $G'$ are not equilibria in $G'$, so one may wonder about the name ``team max-min equilibrium''. However, there {\em is} a sense in which TMEs are equilibria: \citet{vonStengel97:Team} showed that, at least in the case where $|\tmin| = 1$, the TMEs are precisely the Nash equilibria of the team in $G$ that maximize the utility of team $\tmax$. 

\changed{
\begin{figure*}[t]
\centering
\begin{forest}
[,nat
  [1,p1,name=1L
    [2,p1,name=2A,action={$\ell$}
        [3,p2,name=3A
            [\util0{+1},terminal]
            [\util0{--1},terminal]
        ]
        [\util0{--1},terminal]
    ]
    [2,p1,parent=1R,name=2B,action={L}
        [\util0{--1},terminal]
        [3,p2,name=3B
            [\util0{--1},terminal]
            [\util0{+1},terminal]
        ]
    ]
  ]
  [1,p1,name=1R
    [2,p1,parent=1L,name=2C,action={r}
        [3,p2,name=3C
            [\util0{+1},terminal]
            [\util0{--1},terminal]
        ]
        [\util0{--1},terminal]
    ]
    [2,p1,name=2D,action={R}
        [\util0{--1},terminal]
        [3,p2,name=3D
            [\util0{--1},terminal]
            [\util0{+1},terminal]
        ]
    ]
  ]
]
\draw[infoset1] (2A) to (2B);
\draw[infoset1] (2C) to (2D);
\draw[infoset2, bend right=30] (3A) to (3B);
\draw[infoset2, bend right=30] (3C) to (3D);
\end{forest}
\caption{
An example of an adversarial team game. There are three players: (1) and (2) are on team \tmax, and (3) is on team \tmin. Dotted lines connect nodes in the same information set. The (total) utility of \tmax is listed on each terminal node. The root node is a chance node, at which Nature selects uniformly at random. Actions at the first level nodes are labeled for easier reference.
}\label{fig:example-atg}
\end{figure*}

}

An example adversarial team game in which the difference between TME and TMECor is relevant can be found in \Cref{fig:example-atg}. The coordinator game is constructed simply by erasing the player labels, creating a two-player zero-sum game.  \changed{The coordinator game is indeed an imperfect-recall game: at the P2-infosets, the coordinator forgets the Nature bit, which they knew at the P1-infosets.} This game is a simple signaling game: Nature selects a bit, which is privately revealed to P1. P1 then communicates a single bit, which is publicly revealed. Then P2 and P3 both attempt to guess Nature's selected bit, and \tmax wins if and only if P2's guess is correct \changed{and P3's is wrong.} Therefore, the goal of P1 and P2 is for P1 to ``securely'' communicate the bit to P2 without also revealing it to P3. With a behavioral profile, this is impossible since P1 and P2 cannot correlate their strategies; therefore, the TME value is $-1/2$. However, if P1 and P2 are allowed to correlate their strategies, they can do the following: jointly flip a coin. If that coin landed heads, P1 communicates the true bit, and P2 plays what P1 communicates. If that coin landed tails, P1 communicates the opposite of the true bit, and P2 plays the opposite of what P1 communicates. In this way, P2 will always play the true bit, but P3 (who does not know the outcome of the correlating coin flip) does not learn any information. Therefore, the value of this strategy for \tmax is $0$ (since P3 wins half the time by randomly guessing the bit).

\changed{We have defined equilibrium concepts for team games by constructing a coordinator game that is a two-player zero-sum imperfect-recall game. It turns out that the ``converse'' also holds: 
\begin{proposition}
    Every two-player zero-sum imperfect-recall game $G'$ has an ATG $G$ whose coordinator game is $G'$.
\end{proposition}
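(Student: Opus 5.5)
The natural construction makes every information set of $G'$ its own team member. Given a two-player zero-sum imperfect-recall game $G' = (\{\nature,\pmax,\pmin\}, \H, \Z, \A, \I', \vec p, u')$, build $G$ on the same tree $\H$, terminal set $\Z$, action set $\A$ and chance distribution $\vec p$. For every information set $I \in \I'_\pmax$ introduce a fresh player $i_I$ assigned to team $\tmax$. Similarly, every $I \in \I'_\pmin$ gets a fresh player $i_I$ on team $\tmin$. Player $i_I$ is active exactly at the nodes of $I$ and has the single information set $\I_{i_I} = \{I\}$.

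Utilities must be chosen so that the coordinator's total utility matches $u'$ exactly, since \Cref{def:coordinator_game} sums utilities over the team. Set $u_i := u'_\pmax / |\tmax|$ for every $i \in \tmax$ and $u_j := u'_\pmin / |\tmin|$ for every $j \in \tmin$, assuming both teams are nonempty. Then teammates have identical utilities, and $\sum_{i\in\tmax} u_i = u'_\pmax = -u'_\pmin = -\sum_{j\in\tmin} u_j$, so the team-game conditions hold. Applying \Cref{def:coordinator_game} gives $\I'_\pmax = \bigcup_{i\in\tmax}\I_i$ and coordinator utilities $u'_\pmax$ and $u'_\pmin$, so the coordinator game of $G$ is literally $G'$. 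The degenerate case where one side of $G'$ has no information sets needs a one-line convention: add a dummy member with no infosets whose utility is the corresponding $u'$.

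The only substantive check is that $G$ is a \emph{perfect-recall} game, which the definition of an adversarial team game requires. Each player $i_I$ has exactly one information set, $I$, and by timeability no path from the root passes through $I$ twice, so there is no absentmindedness. Hence, for any $h \in I$, the sequence $\sigma_{i_I}(h)$ in the sense of \Cref{def:sequence} is the empty sequence $\Root$, because the information set of $h$ itself is excluded. All nodes of $I$ therefore share the same sequence, and \Cref{def:perfect_recall} holds for every player.

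I expect this no-absentmindedness step to be the main point needing care. Without timeability, a path could visit $I$ twice, and then the player's sequence at the second visit would contain $I$ and break perfect recall. That is exactly where the paper's standing timeability assumption is used. The remaining conditions of an extensive-form game (partition of $\H\setminus\Z$ into $\H_i$, common action sets per infoset) transfer directly from $G'$.
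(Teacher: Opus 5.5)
Your proposal is correct and uses the same construction as the paper: one team member per information set, with each side's utility divided by its number of information sets. You also supply two details the paper's one-line proof leaves implicit: the perfect-recall check for single-infoset players, which uses timeability to rule out absentmindedness, and the dummy-member convention for a side with no information sets, where the paper's division would otherwise be by zero.
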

\begin{proof}
    Consider the ATG $G$ constructed by starting with $G'$, dividing each player's utility by the number of information sets of that player, and then assigning every information set to a different player. By construction, the coordinator game of $G$ is $G'$.
\end{proof}
}
 Therefore, team games and imperfect-recall games are equivalent in a very strong sense. All of the results of this paper, unless otherwise stated, therefore apply equally to team games and to two-player zero-sum imperfect-recall games. 

In the remainder of the paper, we consider the point of view of two-player zero-sum games with imperfect recall. A summary of the different equivalent terms that are used in the two settings can be found in \Cref{tab:terms_mapping}.

\begin{table}
    \centering
    \definecolor{newcolor}{RGB}{192,255,192}
\begin{tabular}{c|c}
\textbf{Adversarial Team Games} & \textbf{Imperfect-Recall Games}
\\
\hline
Team \tmax \tmin & Player \pmax \pmin
\\
Correlated team strategy & Mixed strategy
\\
Uncorrelated team strategy & Behavioral strategy
\\
TMECor & Mixed-strategy Nash Equilibrium
\\
TME & Behavioral max-min strategy
\end{tabular}

\captionof{table}{Translation table between terms commonly employed in the adversarial team games and two-player imperfect-recall games. The translation happens through the introduction of coordinator games (\Cref{def:coordinator_game}).}
\label{tab:terms_mapping}
\end{table}

\section{Related research}\label{sec:related_works}

This section describes the most important related works on computing Nash equilibria in imperfect-recall games and team games.
We start with a review of the results in the literature on imperfect-recall games.
To the best of our knowledge, there are no previous works concerned specifically with efficient algorithms for computing mixed-strategy Nash equilibria in imperfect-recall and timeable games, which is the main objective of this paper. A similar approach to ours is taken by \citet{Tewolde23:Computational}, investigating imperfect-recall games without the timeability assumption; that is, they include the absent-minded case; however, their solution concept is weaker than ours, and their setting is more difficult. %

There is a stream of works studying the computational complexity of finding Nash equilibria in imperfect-recall games.
\citet{Koller92:Complexity} study the NP-hardness of computing mixed Nash equilibria in imperfect-recall games. In particular, they proved that computing Nash equilibria in imperfect-recall games is NP-hard and that computing a max-min equilibrium in pure strategies is $\Sigma_2^\P$-complete.
Our results from \Cref{sec:complexity} refine those results proving $\Delta_2^\P$-completeness and $\Sigma_2^\P$-completeness of computing Nash equilibria in mixed or behavioral strategies, respectively.

\citet{Gimbert20:Bridge} study imperfect-recall games with or without absentmindedness and one or two players. In particular, they connect the problem of computing optimal values in behavioral strategies with polynomial optimization. They also make use of their results on imperfect-recall games to characterize the complexity of the bidding phase of the \emph{team} game of Bridge, acknowledging the equivalence of team games and imperfect-recall games.
Our complexity results from \Cref{sec:complexity} focus on the computation of mixed-strategy Nash equilibria in two-player imperfect recall without absentmindedness and can be thought of as a refined (more powerful) version of the results of \citet{Gimbert20:Bridge}.

Similarly, previous uses of imperfect recall for abstractions in two-player zero-sum games make use of behavioral strategies with the aim of computing an approximate Nash equilibrium in the original perfect-recall game \cite{Waugh09:Abstraction,Lanctot12:No,Kroer16:Imperfect,Cermak18:Approximating}. In this case, the imperfect-recall game is used as a compressed representation to compute behavioral strategies that are then mapped to a hopefully good strategy in the original game. 
Our work is distinguished from theirs as our approach targets imperfect-recall games per se, and we are not interested in mapping optimal strategies to a perfect-recall game from which the game is generated.
Some works, in particular, are interested in A-loss imperfect-recall games. The A-loss condition was originally introduced by \citet{Kaneko95:Behavior} and is related to imperfect recall due to a player forgetting their own actions played in the past. 
Our results generalize theirs in the sense that, since A-loss games are those that inflate to games with perfect recall, \Cref{pr:inflate1} guarantees that our algorithm will run in polynomial time for games with A-loss recall.

On the other hand, the adversarial team game literature presents many works interested in finding TMECors (\Cref{def:tmecor}), corresponding to mixed-strategy Nash equilibria in imperfect-recall games.
Past works typically overcome the issue of coordination among team members by directly optimizing over the space of coordinated strategies, that is, the space of mixed strategies of the coordinator game.
The algorithm commonly adopted is that of \emph{column generation (CG)}, which for games essentially amounts to the \emph{double oracle} \cite{McMahan03:Planning} algorithm. CG is an iterative approach that considers only a subset of pure strategies for the team at a time and progressively expands it in order to best approximate the support of the TMECor. Each CG iteration is composed of two steps: firstly, a TMECor restricted to the current set of pure strategies is computed; secondly, a joint best response to the current equilibrium is computed for each team, and these strategies are added to the sets of available strategies.
Different variants of this algorithm have been developed in the literature. \citet{Celli18:Computational} first defined TMECors in the extensive-form games setting and proposed a CG algorithm based on integer linear programming for the best response computation and linear programming for the equilibrium computation.
Successive works \citep{Farina18:ExAnte,Zhang21:Computing,Farina21:Faster,Zhang22:Optimal} further refined the results in terms of execution speed and scalability while still operating in the CG framework. 
Similarly, \citet{Zhang20:Converging,Zhang21:Computing} focused on the computation of TMEs (\Cref{def:tme}), which are behavioral-strategy equilibria for the settings in which no coordination signals are shared among the team.

Contrary to the direction followed by previous works, the purpose of this paper is to enable a sequence-form-like description of the strategy space of each team, thus bridging the gap in the literature between two-player zero-sum games and adversarial team games and empirically evaluating the tradeoff imposed by the solution proposed.

A technique based on a belief-prescription scheme was originally proposed by \citet{Nayyar13:Decentralized} in the field of decentralized stochastic control in a cooperative setting.
The term \emph{prescriptions} was later proposed by \citet{Foerster19:Bayesian} in the setting of cooperative reinforcement learning to indicate the meta-actions played when coordinating multiple agents.
Our construction of the auxiliary game in \Cref{sec:auxgame} can be seen as a generalization of those approaches to a two-team adversarial setting. In addition, the adoption of observations (defined in \Cref{sec:beliefs_obs}) in place of public states and the DAG approach from \Cref{sec:tbdag} are novel contributions that refine the efficiency of our methods compared to theirs.
A more detailed comparison between observations and public states is given in \Cref{sec:discussion:obs}.

\changed{Lastly, \citet{Lanctot12:No} introduced the idea of a {\em well-formed game} of imperfect recall, which in essence is a game for which CFR will return the same sequence of iterates on the imperfect-recall game as it would on the perfect-recall refinement; therefore CFR can be safely applied to the imperfect-recall game to find an equilibrium. However, not all imperfect-recall games are well formed; as such, their technique is restricted to a small class of games, whereas ours applies to an arbitrary imperfect-recall game. Indeed, this can be clearly seen because well-formed games can be solved in polynomial time, whereas general imperfect-recall games, even against one player, are NP-hard \cite{Koller92:Complexity}. Nonetheless, there is an interesting connection between well-formed games and our techniques: 
the idea that infosets on ``identical'' paths of play can be merged while maintaining equivalence of the game is the fundamental idea behind well-formed games.
We take the idea one step further, compressing all {\em paths} resulting from identical infosets to create a DAG structure, and giving an algorithm that creates this DAG structure without even building the entire tree beforehand. 
In other words, the DAG-form decision problem that we construct can be thought of as a well-formed abstraction of the belief game for each player. The main advantage of our approach is that we directly construct this DAG-form decision problem from the original (much smaller) team game, without needing the intermediate step of creating the belief game, which can be much larger than the DAG-form decision problems in both theory and practice.} %

\section{Mixed Nash computation through beliefs and observations}

We introduce the paper's fundamental contribution: a novel technique for computing a mixed Nash equilibrium in two-player zero-sum imperfect-recall games (or, equivalently, for computing a TMECor in adversarial team games) based on the construction of an equivalent two-player zero-sum game with perfect recall.

Our technique attains the perfect recall condition by suitably changing the information available to the players and their action sets.
The main intuition behind the belief game is to consider the point of view of a perfect recall player in place of the imperfect recall one. Differently from the imperfect-recall player, this player reasons only using information the player would never forget due to imperfect recall and chooses an action for every possible information set the imperfect-recall player may be in. The game then transitions by applying the action corresponding to the information set of the current node. Crucially, the perfect-recall player can strategically refine the set of reached nodes over time by carefully considering reachable nodes given the played strategy and the perfect-recall results of their actions.

After introducing the main concepts and the construction algorithm, we prove that the original and the belief games are strategically equivalent. This means that the perfect-recall player we introduce is an equivalent representation of both the imperfect-recall player and the corresponding preplay coordinated team (thanks to the considerations from \Cref{sec:adversarial_team_games}).

\subsection{Beliefs and Observations}\label{sec:beliefs_obs}
The main purpose of this section is to formally define \emph{beliefs}, which are sets of nodes $B \subseteq \H$ derived from information sets $\I_i$ of player $i \in \{\pmax, \pmin\}$. Informally, beliefs are the ``information sets'' that $i$ would have if they could not distinguish nodes that cannot be distinguished using information from a later stage. This notion is formalized by putting in the same belief any two nodes that have descendent nodes in the same information set (even if they belong to different information sets).\footnote{We remark that this is a condition of imperfect recall equivalent to \Cref{def:perfect_recall}, because information sets $I,I'$ at the same level of the game having descendent nodes in the same information set $J$ implies that $J$ has nodes with different sequences.}
Similarly to information sets:
\begin{enumerate}[i)]
    \item nodes in the same belief would be indistinguishable to $i$,
    \item one action is chosen at each belief, and then this action is followed in all nodes in the belief, and
    \item \changed{if the player knows that the current node of the game $h$ lies in a set $H$, and the player observes that their current belief is $B$, then the player knows that the current node lies in $B \cap H$ (that is, beliefs correspond to observations over the state of the game).}
\end{enumerate}
Crucially, beliefs can be organized in the tree-like structure needed by algorithms finding Nash equilibria in two-player zero-sum games, as we will see in \Cref{sec:auxgame,sec:regret-minimization-team-games}. This is thanks to the guarantee that once a group of nodes is split among two different beliefs, then any group of descendent nodes from one belief will be distinguishable from any group of descendants from the other.

In the following, we formalize the notion of beliefs and observations. We consider a two-player zero-sum game with imperfect recall $G$.

\changed{
\begin{figure*}[t]
\centering
\begin{forest}
p1/.style={text=white, fill=p1color, minimum size=15pt, inner sep=0pt, draw=p1color, font=\bfseries\sffamily\footnotesize},
    terminal/.style={draw=black, minimum size=15pt, inner sep=0pt},
    for tree={no edge, circle, fill=white, thick, anchor=center}
[a,nat
  [b,p1,name=1L
    [d,p1,name=2A
        [h,p2,name=3A
            [$z_1$,terminal]
            [$z_2$,terminal]
        ]
        [$z_3$,terminal]
    ]
    [e,p1,name=2B
        [$z_4$,terminal]
        [i,p2,name=3B
            [$z_5$,terminal]
            [$z_6$,terminal]
        ]
    ]
  ]
  [c,p1,name=1R
    [f,p1,name=2C
        [l,p2,name=3C
            [$z_7$,terminal]
            [$z_8$,terminal]
        ]
        [$z_9$,terminal]
    ]
    [g,p1,name=2D
        [$z_{10}$,terminal]
        [m,p2,name=3D
            [$z_{11}$,terminal]
            [$z_{12}$,terminal]
        ]
    ]
  ]
]
\draw[infoset1, draw=black] (1L) to (1R);
\draw[infoset1] (2A) to (2B);
\draw[infoset1] (2C) to (2D);
\draw[infoset2, bend right=35] (3A) to (3B);
\draw[infoset2, bend right=35] (3C) to (3D);
\end{forest}
\caption{
The connectivity graph for player \pmax and \pmin in the example game from \Cref{fig:example-atg}. The nodes in the two figures correspond in position, and letter identifiers are added to each node for easier reference.
We highlight using the black color the connection between nodes $b$ and $c$, which is due to the infoset connecting $d$ and $e$.
}\label{fig:example-connectivity}
\end{figure*}

}

\paragraph{Connectivity graph} We say that two nodes $h$ and $h'$ are \emph{unforgettably distinguished} by $i$ if they do not belong to the same infoset and no pair of descendants of those two nodes belong to the same infoset, that is, $i$ will never be in an information set where these two ancestors are both possible.
This condition guarantees that if the set of candidates is $H = \{h,h'\}$, then the player can discern $h$ from $h'$ and will never forget which of the two nodes has been reached in the next steps of the game.%
\footnote{In particular, any team member in the corresponding team game can always recall, after reaching $h$ or $h'$, whether $h$ or $h'$ has been reached.}

For the purpose of our definitions, we are concerned with pairs of nodes that are \emph{not} distinguishable. This can be represented through a \emph{connectivity graph} over \H as follows.
\begin{definition}
    [Connectivity graph]
    The \emph{connectivity graph} $\mathcal G_i = (\H, \mc E_i)$ for player $i\in\{\pmax,\pmin\}$ is the graph with nodes $\H$ and edges $\mc E_i$, where $(h, h')\in\mc E_i$ if $h$ and $h'$ are at the same depth in $G$ and there exists $I \in \I_i$ such that $h\preceq I$ and $h'\preceq I$.
\end{definition}

\changed{\Cref{fig:example-connectivity} is an example of a connectivity graph in adversarial team games. Blue edges connect nodes of player \pmax at the same depth and in the same infoset, while black edges connect nodes preceding an identical infoset at deeper depth. For example, the black edge $b$-$c$ in \Cref{fig:example-connectivity} is due to $d$ and $e$ belonging to the same infoset. Note that in \Cref{fig:example-atg}, $d$ is a child of $b$ and $e$ is a child of $c$.}

\paragraph{Beliefs} Consider now a set $H$ of nodes such that the induced subgraph $\mathcal G_i[H]$ is connected. Player $i$ cannot distinguish any subset of $H$ from the others because no node can be distinguished from its neighbors.
\emph{Beliefs} are defined as these sets of indistinguishable nodes.%

\begin{definition}
    [Belief]\label{def:belief}
    A set of nodes $B \subseteq \H$ is a \emph{belief} for player $i$ if the induced subgraph $\mathcal G_i[B]$ is connected.
\end{definition}

Since only nodes of the same depth can be connected via edges in $\mathcal G_i$, all nodes in the same belief have the same depth.
A direct consequence of the definition of beliefs is that $\{\Root\}$ and $\{z\}$ for $z\in\Z$ are singleton beliefs for both players.

\paragraph{Observations} Consider a set $H$ of nodes such that the induced subgraph $\mathcal G_i[H]$ has multiple connected components. In this case, player $i$ can distinguish those components one from the other, thus partitioning $H$ into multiple beliefs. Intuitively, the unforgettable information is enough to distinguish every node in a component from any node in other components. The player can, therefore, exclude nodes from components that are distinguishable from the current reached node.
We say that upon reaching a node $h$ among possible candidates $H$, player $i$ \emph{observes belief $B \subseteq H$}, meaning that player $i$ uses the newly acquired unforgettable information acquired in $h$ to refine their imperfect information from $H$ to $B$.
We formalize this notion of observation through the function \splitb{i}:\footnote{%
In team games, the belief returned by \splitb{i} is the team-common knowledge update happening when reaching $h$ among a set of candidates $H$. \changed{That is, upon \splitb{i} returning a belief $B$, it is team-common knowledge that the game reached a node in $B$.}}
\begin{definition}[Observation]\label{def:splitb}
    The \emph{observation} for player $i \in \{\pmax, \pmin\}$ when reaching node $h \in H$ of candidate nodes is:
    \[
    \splitb{i}(H, h) \coloneqq \text{the connected component of $\mc G_i[H]$ containing $h$}.
    \]
    \changed{The set of all possible observations at $H$ is thus the set of all connected components of $\mc G_i[H]$, and will be denoted by $\mc B_H$.}\footnote{%
    We remark that the belief-based constructions employed by the paper would also work when
    considering a coarser set of beliefs, that is, pretending that more edges existed in $\mc G_i[H]$. For example, to align with the framework of {\em factored-observation games}~\cite{Kovavrik22:Rethinking}, one could define $\splitb{i}$ using the explicitly-given public observations. However, since the efficiency of the proposed algorithms depends on the size of the beliefs employed, we opt not to allow the use of beliefs larger than needed.  As we show in \Cref{sec:worst-case-size-aux}, any reduction in the size of the beliefs in a game brings exponential benefits in the size of the belief game obtained.}
\end{definition}

\changed{
An example of observations can be given by considering the team game depicted in \Cref{fig:example-connectivity} and a candidate set $H=\{d,e,f\}$, which is possible when player 1 plays a mixed strategy that excludes $g$ from being reached. 
What team-public information can player 2 use to distinguish the current node of the game? Player 2 knows that the current node $h$ in the game is in $H$ because they know the strategy of player 1. However, player 2 also observes their current information set $I = \{d,e\}$ if $h \in \{d,e\}$, or $J =\{f,g\}$ if $h \in \{f,g\}$. This observation can be used to split the team-public belief $H$ since there is no future infoset of the team containing both descendants of $\{d,e\}$ and $\{f\}$; indeed, there are no future infosets for \pmax at all in this game. Thus, $d$ and $e$ are unforgettably distinguished from $f$ by \pmax.

The reasoning from this example is generalized by \Cref{def:splitb}: we have 
\begin{gather}
    \splitb{\pmax}(H,d) = \splitb{\pmax}(H,e) = \{d,e\} \\ \qq{and} \splitb{\pmax}(H,f) = \{f\}.
\end{gather}}%
\paragraph{Team public states} We compare our notion of beliefs with \emph{public states}, an alternative customarily used in the related literature. A public state $P$ for player $i$ is a connected component of the connectivity graph $\mathcal G_i$. The set of all public states of $i$ is denoted by $\mathcal P_i$. 

Public states identify sets of nodes that are distinguishable to a player without considering a possibly pruned subgraph of $\mathcal G_i$ as instead done for team observations. Therefore, every belief is contained in a public state. 
In \Cref{fig:example-connectivity} we have that $\mathcal P_\pmax = \{\{a\}, \{b,c\}, \{d,e\}, \{f,g\}, \{h\}, \{i\}, \{l\}, \{m\}\} \cup \{\{z\} : z\in\Z\}$.

Public states are the customarily adopted alternative to observations when partitioning a set $H$ of candidates in beliefs by splitting $H$ in $\{H \cap P : P \in \mc P_i\}$. However, public states may return a coarser partition than the one returned by observations, as the absence of specific nodes from $H$ may disconnect components in $\mc G$. We will, therefore, use observations in place of public states whenever possible.
An example illustrating the difference between the two definitions is available in \Cref{sec:discussion:obs}.

\paragraph{Prescriptions} Restricting the information available to player $i$ to their beliefs also affects the set of actions available. In fact, multiple infosets may intersect a given belief. \changed{In the belief game, the player only knows the current belief, and in particular does not know which infoset within the current belief is the true infoset. Therefore, the player cannot just specify a single action at the true infoset.}

We overcome this issue by associating to each belief $B$ a set of meta-actions $A_B^i$ such that an action is specified for every infoset that intersects the belief. We call such structured meta-actions \emph{prescriptions} and use a symbol $\vec a$ to indicate them. 
The concept is formally defined as follows.
\begin{definition}
    [Prescription]\label{def:prescription}
    Consider a belief $B$ of a player $i\in\{\pmax,\pmin\}$. A \emph{prescription} $\vec a$ is a selection of one action at each infoset having a nonempty intersection with $B$:
    \[
        \vec a \in \bigtimes_{I \in \I_i[B]} A_I \qq{where} \mc I_i[B] = \{ {I \in \I_i : I \cap B \not = \emptyset}\}.
    \]
\end{definition}

Given a prescription $\vec a $ for a belief $B$ and an infoset $I$ such that $I\cap B\neq \varnothing$, we denote as $\vec a[I]$ the action relative to infoset $I$ which is specified by prescription $\vec a$.
Note that we have \emph{empty prescriptions} at beliefs containing no active nodes for a player.

As we will see in the next section, our equivalent belief game introduces one perfect-recall player per team, with information sets associated with beliefs corresponding to the perfect-recall part of the information available to this unique player. Prescriptions will allow this player to have an identical expressive power in terms of actions without accessing the exact information set of the player, which is their imperfect-recall information. 
Moreover, specifying a prescription at each reached belief for $i$ incrementally defines a pure strategy %
of player $i$. This allows us to consider a reduced set of candidate nodes $H$ for the reached node $h$ from which the belief is observed, as a non-played action implies that all the descendant nodes are not reached and, therefore, excluded from the candidates.

\changed{We now give a complete example of the concepts introduced so far.} Consider a 3-player poker instance where two players collude to form a team.
At any time of the game, we can consider the point of view of a team coordinator, who acts as the single imperfect recall player. We can imagine this coordinator as sitting at the same table as the players, and therefore, they cannot access the private cards given to the players but can access the same public information as the players, that is, the bet, fold, and check actions of the players. Their belief at any point regards the private cards that each team member has. At the start of the game, this belief is uniform over all pairs of cards, as no information regarding these cards is available from an external point of view. The coordinator emits prescriptions for the players to follow as the game progresses. Since the coordinator does not know the card held by a player, they have to prescribe an action for each possible card the current player may hold. The player receives this prescription and follows the part of it that matches the private card. By observing the action played by the player, the coordinator can exclude from their belief the cards for which they prescribed different actions.
While there are no means of communicating prescriptions during play at the poker table, this mechanism can be implemented \emph{ex ante}; that is, each team of players jointly samples a pure strategy of this coordinator before the start of the game, and each member simulates the coordinator locally.

\paragraph{Information complexity} We quantify the number of information sets reaching a belief through the notion of \emph{information complexity} $k$. This quantity will allow us to bound the size of the belief games in \Cref{sec:worst-case-size-aux,sec:worst_case_size_tbdag}.

We first characterize the notion of \emph{remembered} information sets and the set of \emph{last-infosets} at a node.
Intuitively, an infoset $J$ remembers another infoset $I$ if reaching a node in $J$ implies traversing a node in $I$ and picking a specific action. Therefore, knowing to be at a node in $J$ allows the player to recall having traversed $I$ and have played action $a$ there.
The last-infosets of player $i$ at $h$ are the information sets traversed by $i$ on the path to $h$ and not remembered by any following information set of the player up to $h$.\footnote{In an adversarial team game, the only possible last-infosets at a node $h$ for a given team $t$ are the most recent infosets of each player in $t$: every other infoset belongs to one of these players, and therefore, by perfect recall, is remembered by the most recent infoset of that player.} This set quantifies the knowledge lost by the player at a node due to imperfect recall. 

\changed{Intuitively, the set of last-infosets at a node $h$ completely characterizes all the information ever presented to the imperfect-recall player: if $I_h$ remembers $J$, then keeping $J$ in the set of last-infosets would be redundant. If $K \ne I_h$ is a last-infoset at $h$, that means that player $i$ at $h$ has forgotten that they have traversed $K$ in the past. For the team setting, the set of last-infosets completely characterizes the common knowledge of the team at node $h$.}

\begin{definition}
    An infoset $J$ {\em remembers} another infoset $I$ if there exists an action $a \in A_{I}$ such that, for every $h \in J$, we have $h' a \preceq h$ for some $h' \in I$.
\end{definition}%

\begin{definition}
    The set of {\em last-infosets} at node $h$ for player $i$ is the set of infosets $I \in \I_i$ such that $I \preceq h$ and there is no other infoset $J \in \I_i$ such that $J \preceq h$ and $J$ remembers $I$.
\end{definition}

We will use $\LI_i(h)$ to denote the set of last-infosets at $h$ for player $i$. Note that if $h \in \H_i$ then $I_h \in \LI_i(h)$.

Now define the {\em information complexity} $k$ of a two-player game $G$ as follows.
    \[
        k = \max_{\substack{i \in \{ \pmax, \pmin \}, \\P \in \mathcal P_i} }\abs{\bigcup_{h \in P} \LI_i(h)}
    \]
Intuitively, $k$ is a representation of {\em how much information can be worst-case forgotten} by player $i$. In the team game interpretation, $k$ is a representation of how {\em asymmetric} the information is among team members: \changed{if $k$ is small, then all team members have approximately the same information; in particular if $k=1$ then all team members have {\em the same} information, that is, the ``imperfect-recall'' team coordinator actually has perfect recall.} 

The information complexity characterizes both the number of beliefs in a public state $P$, and the number of prescriptions that are available at such beliefs. In fact, the actions played at information sets in $\bigcup_{h \in P} \LI_i(h)$ determine which nodes in $P$ are reached (that is, a belief $B \subseteq P$)\changed{; these actions upper-bound the number of worst-case information sets available, which can be used to bound the worst-case amount of prescriptions at a belief. In fact, the number of prescriptions at a public state $P$ is upper-bounded by $(b+1)^{|\bigcup_{h \in P} |\LI_i(h)|} \le (b+1)^k$, where $b$ is the \emph{branching factor}, which is the worst-case number of actions at an information set: for each possible information set, either that information set is not reached, or it is reached and one of the $b$ available actions is played. Specifying one of these $b+1$ possibilities at each of the (at most) $k$ last-infosets uniquely determines the prescription. This analysis will be formalized in the proof of  \Cref{th:tb-dag-size}}.

\changed{Since $\LI_i(h) \subseteq \I_i$, a trivial bound on $k$ is $k \le \max_i |\I_i| \le |\Z| \le b^d$, where $b$ is the branching factor and $d$ is the depth. Later in the paper we will show that our algorithms have exponential dependence on $k$, which in the worst case according to this bound will lead to exponential time in $|\Z|$. This exponential dependence is inevitable: as we will discuss in depth in \Cref{sec:complexity}, there cannot exist efficient algorithms for representing general imperfect-recall decision problems under reasonable complexity assumptions. However, there are many classes of game in which $k \ll \max_i |\I_i|$. As an example, consider four-player Texas hold'em (THE) poker, where the four players are partitioned into two teams of two. Then a public state is defined by the betting history and any public cards. Given a public state, each player only has at most $\binom{52}{2} = 1326$ different possible last-infosets, one for each possible choice of hole cards. Thus, we have $k \le 2 \cdot 1326 = 2652$, which is much smaller than the number of histories.\footnote{For example, in two-player no-limit THE, $|\Z| > 10^{160}$~\cite{Johanson13:Measuring}. This example applies more generally to games in which player actions are public, and the only private information is due to one of $t$ ``private types'' ($t=1326$ for THE) being assigned to each player. In such games, we have $k \le nt$, where $n$ is the size of the larger team.} In some sense (which we will make precise in \Cref{sec:complexity}), information complexity precisely characterizes the hardness of representing the strategy set of a player with imperfect recall.
}

\changed{
As an example, consider the game from \Cref{fig:example-connectivity}, whose public states for \pmax are:
\[
    \mathcal P_\pmax = \{\{a\}, \{b,c\}, \{d,e\}, \{f,g\}, \{h\}, \{i\}, \{l\}, \{m\}\} \cup \{\{z\} : z\in\Z\}.
\]
The information complexity in this case is $k=3$. Indeed, for public state $P = \{d,e\}$ of player \pmax, we have
\[
    \bigcup_{h' \in \{d,e\}} \LI_\pmax(h') = \LI_\pmax(d) \cup \LI_\pmax(e) = \{I_b, I_d\} \cup \{I_c, I_e\} = \{I_b, I_c, I_d\}.
\]
This information complexity can be used to bound the number of possible prescriptions for player \pmax given a public state. At $P = \{d,e\}$, the actions played at the three information sets $I_b, I_c, I_d$ are enough to characterize which belief $B \subseteq P$ is reached during the game and which prescription was played at $B$.
In fact, the action at $I_b$ decides whether $d$ is reached, and the action at $I_c$ decides whether $e$ is reached.
Overall, we bound the number of prescriptions in the public state as $(b+1)^k = (2+1)^3 = 27$.  The bound of $(b+1)^k$ is not tight in this example because all three information sets are always played by $\pmax$---indeed, the true number of prescriptions at $P$ is $b^k = 8$.
}

\subsection{Belief game construction}\label{sec:auxgame}

\begin{algorithm}[!tbp]\namealg{MakeBeliefGame}
\caption{Belief game construction}\label{al:auxgame}
\begin{algorithmic}[1]
\Procedure{MakeBeliefGame}{$G$} 
    \State \algoname{MakeNode}$_\pmax(\Root, \{\Root\}, \{\Root\}, \emptyset, \emptyset)$
\EndProcedure
\Procedure{MakeNode$_\pmax$}{$h, B_\pmax, B_\pmin$, $\tilde \sigma_\pmax$, $\tilde \sigma_\pmin$}
    \State create node $\tilde h \in \tilde\H_\pmax$
    \State add $\tilde h$ to infoset labeled $(\tilde\sigma_\pmax,B_\pmax)$
    \For{each prescription $\vec a_\pmax \in \mc A_{B_\pmax}^{\pmax}$}
        \State $\tilde h \vec a_\pmax \gets \algoname{MakeNode}_\pmin(h, B_\pmax, B_\pmin, \tilde\sigma_\pmax, \tilde\sigma_\pmin,  a_\pmax)$
    \EndFor
    \State \Return $\tilde h$
\EndProcedure
\Procedure{MakeNode$_\pmin$}{$h, B_\pmax, B_\pmin$, $\tilde \sigma_\pmax$, $\tilde \sigma_\pmin$, $\vec a_\pmax$}
    \State create node $\tilde h\vec a_\pmax \in \tilde\H_\pmin$
    \State add $\tilde h\vec a_\pmax$ to infoset labeled $(\tilde\sigma_\pmin,B_\pmin)$
    \For{each prescription $\vec a_\pmin \in \mc A_{B_\pmin}^{\pmin}$}
        \State $\tilde h \vec a_\pmax  \vec a_\pmin \gets \algoname{MakeNode}_\nature(h, B_\pmax, B_\pmin, \tilde\sigma_\pmax, \tilde\sigma_\pmin, \vec a_\pmax, \vec a_\pmin)$
    \EndFor
    \State \Return $\tilde h\vec a_\pmax$
\EndProcedure
\Procedure{$\algoname{MakeNode}_\nature$}{$h, B_\pmax, B_\pmin$, $\tilde\sigma_\pmax$, $\tilde\sigma_\pmin$, $\vec a_\pmax$, $\vec a_\pmin$}
    
    \If{$h$ is terminal node}
        \State create new terminal node $\tilde h\vec a_\pmax \vec a_\pmin \in \tilde\Z$
        \State $u_i(\tilde h\vec a_\pmax \vec a_\pmin) \gets u_i(h)$ for each player $i$
        \State $\vec p[\tilde h\vec a_\pmax \vec a_\pmin] \gets \vec p[h]$
        \State \Return $\tilde h\vec a_\pmax \vec a_\pmin$
    \EndIf
    \State create new chance node $\tilde h \vec a_\pmax  \vec a_\pmin \in \H_\nature$
    \If{$h$ is a chance node} $S \gets \{ ha : a \in A_h \}$ \label{al:line:chance_actions}
    \Else{} $S \gets \{ h\vec a_i[I_h]  \}$ where $h \in \H_i$ \EndIf
    \For{each node $ha \in S$}
        \State $ B_i' \gets \splitb{i}(B_i \vec a_i, ha)$ for each player $i$
        \State $\tilde h\vec a_\pmax \vec a_\pmin a \gets \textsc{MakeNode}_\pmax(ha, B_\pmax', B_\pmin', \tilde \sigma_\pmax + (\tilde\sigma_\pmax, a_\pmax), \tilde \sigma_\pmin + (\tilde\sigma_\pmin, a_\pmin))$
    \EndFor
    \State \Return $\tilde h\vec a_\pmax \vec a_\pmin$
\EndProcedure
\end{algorithmic}
\end{algorithm}

We now introduce an algorithm that explicitly constructs a belief game given any two-player game.
We will use $\tilde G$ to denote the belief game and distinguish components of the original game $G$ from components of the belief game by writing tildes: for example, a generic history is $\tilde h \in \tilde{\mc H}$, a generic information set is $\tilde I_i \in \tilde{\mc I}_i$, and so on.

A node $\tilde h \in \tilde\H$ in the belief game is identified by a tuple $(h, B_\pmax, B_\pmin)$ such that $h \in B_\pmax \cap B_\pmin$, where $h \in \H$ is the corresponding node in the original game describing the underlying state of the game, $B_\pmax, B_\pmin$ are the current beliefs of \pmax and \pmin respectively. At $\tilde h$, each player $i \in \{ \pmax, \pmin \}$ has a (possibly empty) collection of infosets, $\mc I[B_i]$, at which they needs to prescribe an action. The two players simultaneously submit prescriptions $\vec a_i \in A_{B_i}^i$. The next belief game node is $(ha, B'_\pmax, B'_\pmin)$, where the action $a$ and the next beliefs $(B'_\pmax, B'_\pmin)$ are defined as follows.
\begin{enumerate}[label=(\roman*)]
    \item The action $a$ is the one taken by the player at $h$: if $h$ is chance node, then $a$ is sampled from chance's action distribution at $h$; otherwise, $a = \vec a_i[I_h]$.
    \item The beliefs evolve as follows. For each player $i$, the set of candidate next histories in the original game compatible with $i$'s current belief $B_i$ and their prescription $\vec a_i$ is given by:
    {
    \[
        B_i\vec a_i \coloneqq \underbrace{\{ha : h\in B_i \cap \mc H_i,  a= \vec a [I_h]\}}_\text{
    \scalebox{0.6}{\setstretch{1.0}\begin{tabular}{c}when player $i$ acts, \\ it must be according to the prescription\end{tabular}}} \cup \underbrace{\{ha : h\in B_i \setminus \mc H_i, a \in A_h\}}_\text{\scalebox{0.6}{
    \setstretch{1.0}\begin{tabular}{c}when player $i$ does not act, \\ $i$ does not know what action is taken\end{tabular}}} ,
    \]
    }
    Next, player $i$ observes the information revealed by the next history $ha$, thus arriving at belief:
    \begin{align}
        \tilde B'_i := \splitb{i}(B_i \vec a_i, ha).
    \end{align}
\end{enumerate}

We remark some characteristics of $\tilde G \coloneqq \ref{al:auxgame}(G)$. 
\begin{itemize}
\item \changed{The belief game is a game with simultaneous moves: \pmax, \pmin, and chance simultaneously chooses prescriptions and chance actions. \Cref{al:auxgame} describes the procedure that constructs an extensive-form game (without simultaneous moves) that is equivalent to it.\footnote{Following the usual technique for converting games with simultaneous moves into extensive-form games without simultaneous moves, we implement simultaneous actions by representing each step in the game as a sequence of one node per player \pmax, \pmin, \nature where everyone acts; the effects of the actions taken are applied at the end.}}
\item The depth of the belief game is exactly three times that of the original game. That is, each move in the original game becomes three moves in the belief game: one where \pmax chooses their prescription (this node is trivial if it is not \pmax's turn in the original game), one where \pmin chooses their prescription (again, trivial if it is not \pmin's turn), and one node where Nature chooses its action. 
    \item Multiple different tree nodes $\tilde h$ can correspond to the same $(h, B_\pmax, B_\pmin)$ tuple. In particular, for each terminal node $z \in \Z$ there is only one possible $(z, \{z\}, \{z\})$, but yet many different tree nodes can lead to the same $z$. 
    \item Information sets in $\tilde G$ are associated to sequences of beliefs and prescriptions. In particular, such infosets can be described by tuples of the form $(B^1_i=\{\Root\}, \vec a^1_i, B^2_i, \vec a^2_i, \dots, B^L_i)$, where $\vec a^\ell_i  \in A_{B_i^\ell}$ and $B^{\ell+1}_i = \splitb{i}(B_i^\ell \vec a_i^\ell, h)$ for some $h \in B_i^\ell \vec a_i^\ell$.
    \item By construction of \ref{al:auxgame} we have that $\tilde G$ is a perfect-recall game. In fact, nodes with different sequences are associated to different information sets thanks to including sequences in each information set's label;
    \item When $h$ is terminal, the belief game does not stop until both players have observed the trivial belief $\{ h \}$ at $h$ and then submitted their empty prescriptions at that belief. This is for notational convenience: it ensures that terminal sequences for a player $i$ will always end with singleton beliefs, which will make the later analysis cleaner.
    \item Modulo trivial reformulations (namely, the insertion of nodes with a single child), if $G$ is perfect recall, then $\tilde G$ is identical to $G$.
\end{itemize}

Given a pure strategy $\tilde{\vec \pi}_i \in \tilde \Pi_i$, we say that $\tilde{\vec \pi}_i$ plays to a belief $B_i$ of player $i$ if $\tilde{\vec \pi}_i$ plays to some node corresponding to $(h, B_i, B_{-i})$. 

\changed{
\begin{figure*}[t]
\centering
\newcommand{\noded}{[k,p2,[$z_1$][$z_2$]][$z_3$]}
\newcommand{\nodee}{[$z_4$][i,p2,[$z_5$][$z_6$]]}
\newcommand{\nodef}{[l,p2,[$z_7$][$z_8$]][$z_9$]}
\newcommand{\nodeg}{[$z_{10}$][m,p2,[$z_{11}$][$z_{12}$]]}
\adjustbox{center}{
\begin{forest}
for tree={
if n children=0{
  terminal
}{}
}
[,nat
  [b,p1,name=b
    [d,p1,action={$\ell$L},@\noded]
    [d,p1,action={$\ell$R},@\noded]
    [e,p1,action={rL},@\nodee]
    [e,p1,action={rR},@\nodee]
  ]
  [c,p1,name=c
    [f,p1,action={$\ell$L},@\nodef]
    [g,p1,action={$\ell$R},@\nodeg]
    [f,p1,action={rL},@\nodef]
    [g,p1,action={rR},@\nodeg]
  ]
]
\end{forest}
}
\caption{
The belief game constructed by applying \ref{al:auxgame} to the adversarial team game in \Cref{fig:example-atg}. To improve clarity, nodes with single actions have been removed. Note how the imperfect recall between nodes $b$ and $c$ from \Cref{fig:example-connectivity} is solved by merging them and using prescription to express an action for each node. Prescriptions have been labeled with the actions prescribed for each node in the belief.
}\label{fig:example-belief-game}
\end{figure*}

}

\changed{As an example, the belief game for the game in  \Cref{fig:example-connectivity} can be found in \Cref{fig:example-belief-game}.}

\subsection{Strategic equivalence}\label{sec:auxgame_strategic_equivalence}

The main result of this section is the following.

\begin{theorem}\label{th:strategic-equivalence}
    Let $G$ be any two-player imperfect-recall extensive-form game, and $\tilde G$ be the belief game constructed by \ref{al:auxgame}. $G$ and $\tilde G$ are strategically equivalent.
\end{theorem}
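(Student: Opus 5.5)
We will construct the strategy maps $\rho_i$ explicitly, working with reduced pure strategies on both sides (two pure strategies with the same reduced form have the same realization form, so this is harmless). The plan is to show that $\rho_i$ is well defined and bijective. Separately, we show that under $(\vec\pi_\pmax,\vec\pi_\pmin)$ and $(\rho_\pmax(\vec\pi_\pmax),\rho_\pmin(\vec\pi_\pmin))$ the induced distributions over terminal nodes of $G$ coincide. Since \ref{al:auxgame} copies $u$ and $\vec p$ from $h$ to its terminal copies, this gives $u_i(\vec\pi)=\tilde u_i(\rho(\vec\pi))$.

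\textbf{Forward map.} Given a reduced pure strategy $\vec\pi_i$ of $G$, define $\tilde{\vec\pi}_i=\rho_i(\vec\pi_i)$ inductively along $\tilde G$. At a reached infoset labeled $(\tilde\sigma_i,B_i)$, prescribe $\vec a_i[I]=\vec\pi_i[I]$ for every $I\in\I_i[B_i]$ that is played by $\vec\pi_i$, and a fixed default otherwise. The key invariant, proved by induction on depth, is the following.

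\emph{Invariant.} If $\tilde{\vec\pi}_i$ reaches a belief-game infoset with current belief $B_i$, then $B_i$ is exactly a connected component of the set of nodes at that depth which are consistent with $\vec\pi_i$ (i.e.\ whose $i$-actions agree with $\vec\pi_i$) and with the observation history. In particular, every $I$ intersecting $B_i$ at a consistent node is played by $\vec\pi_i$.

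The induction step uses the definition of $B_i\vec a_i$ together with the definition of $\splitb{i}$. From the invariant, the node $h$ tracked in $\tilde G$ follows exactly the path of $G$ under $(\vec\pi_\pmax,\vec\pi_\pmin)$ and the same chance action. Hence each terminal $z$ is reached with the same probability in both games.

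\textbf{Inverse map and the main obstacle.} Given a pure $\tilde{\vec\pi}_i$, we read off $\vec\pi_i[I]$ as the action prescribed for $I$ at the belief-game infoset reached by $\tilde{\vec\pi}_i$ whose belief contains nodes of $I$. The main obstacle is showing this is well defined: every infoset $I$ played in $G$ must meet exactly one reached belief. Existence is routine. Uniqueness must rule out two distinct reached beliefs $B\ne B'$ that both intersect $I$ and would prescribe different actions at $I$. Here we use the defining property of the connectivity graph: two nodes of the same depth with descendants in, or themselves belonging to, a common infoset are joined by an edge of $\mc G_i$.

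We then argue by induction over the reached belief tree. Suppose the $\tilde{\vec\pi}_i$-reachable beliefs at each depth partition the set of nodes consistent with the induced play. When the beliefs $B\ne B'$ both meet $I$, the ancestors of $h\in B\cap I$ and $h'\in B'\cap I$ are connected at every earlier depth, since both precede $I$. So $h$ and $h'$ could never have been separated by $\splitb{i}$ into different components along the way. The point to check carefully is that the candidate sets $B_i\vec a_i$ contain both ancestor chains, which follows because both $h,h'$ are consistent with the prescriptions played.

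Since $B=B'$, the prescriptions agree, and the inverse map is well defined. We then check $\rho_i^{-1}\circ\rho_i=\mathrm{id}$ and $\rho_i\circ\rho_i^{-1}=\mathrm{id}$ on reduced forms; this is immediate from the constructions once the invariant holds, because unreached belief infosets are $\bot$ in the reduced form. The utility equality then follows from the path-tracking argument.
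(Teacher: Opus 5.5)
Your proposal is correct and is essentially the paper's proof. It uses the same forward map $\rho_i$, the same trajectory-matching argument for utilities, and the same key fact that a pure $\tilde{\vec\pi}_i$ cannot reach two distinct belief-game infosets whose beliefs both meet an infoset $I$ of $G$. Your forward induction (the two paths never split, because the ancestors of $h$ and $h'$ stay adjacent in $\mathcal G_i$ inside a common candidate set) is the mirror image of the paper's lowest-common-ancestor argument in \Cref{lm:belief_common_ancestor}. State its conclusion as ``same belief-game infoset (same sequence and belief)'' rather than only $B=B'$ as sets. The remaining difference is bookkeeping: you get bijectivity by checking both compositions with the explicit inverse, whereas the paper proves injectivity separately via \Cref{lm:terminal_reach} and surjectivity via the same inverse construction.
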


We use the remainder of the subsection to prove this result. The requirements for strategic equivalence as per \Cref{def:strategic_equivalence} that we will show are: i) there exists a function $\rho$ mapping pure strategies in the two games ii) $\rho$ is a bijective function iii) $\rho$ is value-preserving, that is, the mapped strategies have the same expected utilities.

We first construct the strategy maps $\rho_i : \Pi_i \to \tilde \Pi_i$. 
A pure strategy $\vec\pi_i \in \Pi_i$ in $G$ assigns one action to each information set.\footnote{At infosets not reached by $\vec\pi_i$, actions can be selected arbitrarily.} From such a strategy we construct a strategy $\tilde{\vec\pi}_i = \rho_i(\vpi_i)$ which plays prescriptions consistent with $\vec \pi_i$. At belief $B_i$, $\tilde{\vec\pi}_i$ plays the prescription $\vec a_i$ given by $\vec a_i[I] = \vec\pi_i[I]$ for each $I \in \mc I_{B_i}$ reached by $\tilde{\vec\pi}$.

We now show that $\rho_i$ is injective. This will follow from the following lemma.
\begin{lemma}\label{lm:terminal_reach}
    Let $\tilde\vpi_i = \rho_i(\vpi_i) \in \tilde\Pi_i$. Then for every $z \in \Z$, $\tilde\vpi_i$ plays to belief $\{z\}$ if and only if $\vpi_i$ plays to $z$.
\end{lemma}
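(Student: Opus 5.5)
I would prove a stronger statement by induction on depth, from which the lemma is the special case $h=z$. The invariant I aim for is: for every node $h \in \H$, $\tilde\vpi_i$ plays to some belief-game node $(h, B_i, B_{-i})$ (meaning all of $i$'s prescriptions on the path to that node agree with $\tilde\vpi_i$) if and only if $\vpi_i$ plays to $h$. Moreover, in that case the belief $B_i$ along any such path is $\splitb{i}$ of the candidate set $H_i$, where $H_i$ consists of the nodes at depth $|h|$ that are consistent with $\vpi_i$ and lie in the component of $h$ tracked so far. In particular, $h \in B_i$, and every node of $B_i$ is reached by $\vpi_i$.

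The base case is $h=\Root$, where the belief is $\{\Root\}$ and both sides hold trivially. For the inductive step, take a child $ha$ of $h$. There are three cases.
\begin{itemize}
\item If $h$ is a chance node, or a node of $-i$, then $\vpi_i$ plays to $ha$ if and only if it plays to $h$. On the belief-game side, \ref{al:auxgame} produces a child for every action (chance) or for every prescription of $-i$, so $\tilde\vpi_i$ reaches some $(ha,\cdot,\cdot)$ exactly when it reaches some $(h,\cdot,\cdot)$.
\item If $h \in \H_i$, then by the induction hypothesis $h \in B_i$, so $I_h \in \I_i[B_i]$. The prescription played by $\tilde\vpi_i = \rho_i(\vpi_i)$ at $B_i$ sets $\vec a_i[I_h] = \vpi_i[I_h]$, because $I_h$ is reached by $\vpi_i$ through $h$. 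Hence the belief-game transition goes to $h\vpi_i[I_h]$, and $ha$ is reachable if and only if $a = \vpi_i[I_h]$, which is exactly when $\vpi_i$ plays to $ha$.
\end{itemize}
In each case the new belief $\splitb{i}(B_i\vec a_i, ha)$ contains $ha$ by the definition of a connected component. Its elements are children of nodes in $B_i$ that follow $\vpi_i$ at $i$'s nodes, so by the hypothesis they are reached by $\vpi_i$, and the invariant is maintained.

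The step I expect to be the main obstacle is that $\rho_i$ only sets $\vec a_i[I] = \vpi_i[I]$ for infosets $I$ that are reached, with the remaining entries arbitrary. I need these arbitrary entries never to influence which original-game node is reached. This is exactly why the invariant must record that every node of the current belief is consistent with $\vpi_i$. With that in hand, any $I_h$ with $h$ in the belief is reached by $\vpi_i$, so the prescription at $I_h$ is the one determined by $\vpi_i$. A second subtlety is that several tree nodes of $\tilde G$ can correspond to the same $h$. I handle this by phrasing the statement existentially ("plays to some node $(h,\cdot,\cdot)$") and noting that the argument applies to every such node.

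Finally, the beliefs at terminal nodes are always $\{z\}$, since $\{z\}$ is a singleton belief for both players. Applying the invariant at $h = z$ therefore shows that $\tilde\vpi_i$ plays to belief $\{z\}$ if and only if $\vpi_i$ plays to $z$, which is the lemma.
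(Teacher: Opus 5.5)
Your proposal is correct and is essentially the paper's argument. Both trace the root-to-$z$ path through the belief game and use that the prescriptions of $\rho_i(\vec\pi_i)$ agree with $\vec\pi_i$ at every reached infoset; you fold the two directions into one induction on depth, and your invariant ($h \in B_i$, with every node of $B_i$ reached by $\vec\pi_i$) makes explicit the point about arbitrary prescription entries that the paper leaves implicit. One small tightening for the terminal step: the belief there is $\{z\}$ not merely because $\{z\}$ is a belief, but because $z$ is an isolated vertex of $\mathcal G_i$ (no infoset lies weakly below $z$), so the observation at $z$ is always $\{z\}$.
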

\begin{proof}
    First suppose $\tilde\vpi_i = \rho_i(\vpi_i) \in \tilde\Pi_i$ plays to $\{z\}$. Thus $\tilde\vpi_i$ plays some sequence of beliefs and prescriptions $(B^1_i=\{\Root\}, \vec a^1_i, \dots, B^2_i, \vec a^2_i, \dots, B^L_i=\{z\})$. But then, by construction, every ancestor $h \prec z$ is included in one of the $B^\ell_i$s, and %
    for $ha \preceq z$ to appear in $B^\ell_i \vec a^\ell_i$, if $h \in \H_i$, it must be the case that $\vpi_i$ plays $a$. Thus $\vpi_i$ plays to $z$.

    Conversely suppose $\vpi_i$ plays to $z$. Then, construct a sequence of beliefs and prescriptions as follows. Let $\vec a^\ell_i$ be the prescrption played by $\tilde\vpi_i$ at belief $B_i^\ell$, and $B_i^{\ell+1} = \splitb i (B_i^\ell \vec a_i^\ell, h)$ where $h \in B_i^\ell\vec a_i^\ell$ and $h \preceq z$. (The fact that $\vpi_i$ plays to $z$ ensures that such $h$ must exist). Then, by induction, $\tilde\vpi_i$ plays to this sequence, and the sequence must eventually terminate at $\{z\}$ because it always contains at least one predecessor of $z$. Thus $\tilde\vpi_i$ plays to $\{z\}$.
\end{proof}
Since different pure strategies (by definition) play to different sets of terminal nodes, this immediately shows that $\rho$ is injective. We now show that $\rho_i$ is a surjection (and hence a bijection), that is, any $\tilde{\vec\pi}_i \in \tilde\Pi_i$ is the image of some $\vec\pi_i \in \Pi_i$. We remark that a pure strategy $\tilde{\vec\pi}_i \in \tilde\Pi_i$ assigns one prescription to every reached infoset of player $i$ in $\tilde G$. %

We will require the following lemma. Informally, it states that no player can play to two nodes with intersecting beliefs.
\begin{lemma}\label{lm:belief_common_ancestor}
    Let $\tilde{\vec\pi}_i \in \tilde\Pi_i$ be any pure strategy. Let $\tilde I, \tilde I' \in \tilde{\mc I}_i$ be two distinct infosets of player $i$ that are simultaneously reached by $\tilde{\vec\pi}_i$. Let $B_i$ and $B_i'$ be the beliefs for player $i$ at $\tilde I$ and $\tilde I'$ respectively. Then $B_i$ and $B_i'$ are not connected and do not intersect. That is, there do not exist nodes $h \in B_i, h' \in B_i'$ with $(h, h') \in \mc E_i$  or  $h = h'$. 
\end{lemma}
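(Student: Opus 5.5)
The plan is to take the deepest common ancestor of $\tilde I$ and $\tilde I'$ in the belief game and show that their histories separate there. Since $\tilde G$ has perfect recall and $\tilde{\vec\pi}_i$ is pure, the two player-$i$ sequences of $\tilde I$ and $\tilde I'$ agree up to some last common infoset, with belief $B^\ell_i$ and the single prescription $\vec a^\ell_i$ chosen there by $\tilde{\vec\pi}_i$. After that they diverge. They cannot diverge by $i$ choosing different prescriptions, since $\tilde{\vec\pi}_i$ is pure. So they diverge because different observations were made from the same candidate set $H := B^\ell_i \vec a^\ell_i$. That is, the next beliefs $C = \splitb{i}(H, h)$ and $C' = \splitb{i}(H, h')$ are distinct connected components of $\mc G_i[H]$.

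There is one edge case: the two infosets may lie on the same path, one below the other. This case is ruled out when the beliefs lie at the same depth, so first I would reduce to equal depth. The beliefs of an infoset at belief-game depth $3t$ sit at original depth $t$. If $B_i$ and $B_i'$ have different depths, they cannot share a node, and they cannot be joined by an edge of $\mc E_i$, because edges only join nodes of equal depth. So the claim is trivial there, and I would assume equal depth.

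The key step is an invariant: descendants of two distinct components stay separated. Fix depths $t_0 < t$. Suppose $C, C'$ are distinct components of $\mc G_i[H]$ at depth $t_0$, with $C \cup C' \subseteq H$. Then no depth-$t$ node descending from $C$ is equal or $\mc E_i$-adjacent to a depth-$t$ node descending from $C'$. I would prove this by contradiction. If $x \succeq c \in C$ and $x' \succeq c' \in C'$ satisfy $x = x'$, then $c$ and $c'$ are both ancestors of $x$ at the same depth, so $c = c'$, contradicting the disjointness of the components. If instead $(x, x') \in \mc E_i$, there is an $I \in \I_i$ with $x \preceq I$ and $x' \preceq I$. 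By transitivity $c \preceq I$ and $c' \preceq I$, and $c, c'$ have equal depth, so $(c, c') \in \mc E_i$. This edge lies inside $\mc G_i[H]$ and joins $C$ to $C'$, contradicting the fact that they are distinct components.

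It then remains to check that every node of $B_i$ descends from $C$ and every node of $B_i'$ descends from $C'$. By construction each later candidate set is $B\vec a \subseteq \{ha : h \in B\}$, and each later belief is a subset of its candidate set. So by induction along the sequences, every node of $B_i$ has an ancestor in $C$ at depth $t_0 + 1$, and likewise every node of $B_i'$ has an ancestor in $C'$. Applying the invariant with $t$ equal to the common depth gives the lemma.

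The main obstacle is the first step: pinning down that the divergence point is an observation with a common candidate set $H$. This requires arguing from the labeling $(\tilde\sigma_i, B_i)$ of infosets, from perfect recall of $\tilde G$, and from purity of $\tilde{\vec\pi}_i$. The remaining bookkeeping is routine.
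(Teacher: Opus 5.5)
Your proposal is correct and follows essentially the same route as the paper's proof. Take the lowest common ancestor of $\tilde I$ and $\tilde I'$ in player $i$'s decision problem in $\tilde G$, note that it must be an observation point because $\tilde{\vec\pi}_i$ is pure, so the two branches start from distinct connected components of $\mc G_i[B^\ell_i\vec a^\ell_i]$, and then pass disjointness and non-adjacency down to descendants. Your explicit invariant (using transitivity of $\preceq$ and uniqueness of same-depth ancestors) and your separate handling of the case where one infoset lies below the other make rigorous steps the paper covers in one line; the paper's proof does not address that second case, where its lowest common ancestor would be a decision point, but the claim is immediate there by depth.
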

\begin{proof}
    Consider the tree-form decision problem $\tilde{\mathcal T}$ of player $i$ in $\tilde G$. Since $\tilde G$ is perfect-recall, $\tilde{\mathcal T}$ is indeed a valid representation of player $i$'s strategy set in $\tilde G$. Let $s$ be the lowest common ancestor of $\tilde I$ and $\tilde I'$ in $\tilde{\mathcal T}$. %
    
    Since $\tilde{\vec\pi}_i$ plays to both $\tilde I$ and $\tilde I'$, the node $s$ must be an observation point, as a pure strategy plays a single action at each decision point. At observation points, the next observation made by player $i$ is the next belief. Let $C_i$ and $C_i'$ be the different observed beliefs at node $s$ that lead to $\tilde I$ and $\tilde I'$, respectively. Then, by construction of \splitb{i}, $C_i$ and $C_i'$ are disconnected and disjoint. Since every node in $B_i$ is a descendant of some node in $C_i$ (and the same for $C_i'$), it follows that $B_i$ and $B_i'$ are also disconnected and disjoint.
\end{proof}

Thus, in particular, for any infoset $I \in \mc I_i$ in the original game, $\tilde{\vec\pi}_i$ can only play to one infoset $\tilde I \in \tilde{\mc I}_i$ whose belief $B_i$ overlaps $I$. At $B_i$, the prescription chosen by $\tilde{\vec\pi}_i$ includes an action $\vec a[I]$ at $I$ (by construction of prescriptions at a node). Thus, consider the strategy $\vec\pi_i$ defined such that $\vec\pi_i[I] = \vec a[I]$ for every infoset $I$ such that $\vpi_i$ plays to a belief $B_i$ overlapping $I$.

\begin{lemma}
    $\vpi_i$ is a well-defined strategy. That is, if $\vpi_i$ plays to a node $h \in \H_i$, then $\tilde\vpi_i$ plays to a belief $B_i \ni h$, and hence, if $h \in I \in \I_i$ then  $\vpi_i[I]$ is defined.
\end{lemma}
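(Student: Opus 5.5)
The plan is to argue by induction on the depth of $h$, along the path $\Root = h_0 \prec h_1 \prec \dots \prec h_m = h$. The induction hypothesis is stronger than the statement itself. For every $\ell \le m$ such that $\vpi_i$ plays to $h_\ell$ (meaning every player-$i$ action on $\treepath{\Root}{h_\ell}$ agrees with $\vpi_i$), we claim that $\tilde\vpi_i$ plays to a belief $B^\ell_i \ni h_\ell$. Moreover, the sequence $B^0_i,\dots,B^\ell_i$ of beliefs of $\tilde\vpi_i$ satisfies $B^{\ell+1}_i = \splitb{i}(B^\ell_i \vec a^\ell_i, h_{\ell+1})$, where $\vec a^\ell_i$ is the prescription $\tilde\vpi_i$ plays at $B^\ell_i$. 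The base case is immediate: $B^0_i = \{\Root\}$ is the root belief, and $\tilde\vpi_i$ trivially plays to it.

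For the inductive step, assume $\vpi_i$ plays to $h_{\ell+1} = h_\ell a$, so it also plays to $h_\ell$. By the hypothesis, $\tilde\vpi_i$ plays to a belief $B^\ell_i \ni h_\ell$ and chooses a prescription $\vec a^\ell_i$ there. We then verify $h_{\ell+1} \in B^\ell_i \vec a^\ell_i$, splitting into two cases.

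If $h_\ell \notin \H_i$, every child of $h_\ell$ lies in $B^\ell_i \vec a^\ell_i$ by definition, so the claim holds.

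If $h_\ell \in \H_i$ with $h_\ell \in I$, then $I \cap B^\ell_i \neq \emptyset$, so $\vpi_i[I]$ is defined as $\vec a^\ell_i[I]$. This uses \Cref{lm:belief_common_ancestor}, which makes the definition unambiguous: the belief overlapping $I$ that is reached by $\tilde\vpi_i$ is unique, so the action assigned to $I$ is exactly the one prescribed at $B^\ell_i$. Since $\vpi_i$ plays to $h_\ell a$, we get $a = \vpi_i[I] = \vec a^\ell_i[I]$, and therefore $h_\ell a \in B^\ell_i\vec a^\ell_i$.

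In both cases, $B^{\ell+1}_i := \splitb{i}(B^\ell_i \vec a^\ell_i, h_{\ell+1})$ is a belief containing $h_{\ell+1}$. It remains to show that $\tilde\vpi_i$ actually plays to it. By the construction in \ref{al:auxgame}, the player-$i$ infoset following the observation of $B^{\ell+1}_i$ after sequence $(\dots,B^\ell_i,\vec a^\ell_i)$ is reached whenever the preceding one is reached and $\vec a^\ell_i$ is played. This requires exhibiting an actual belief-game node $(h_{\ell+1}, B^{\ell+1}_i, B_{-i})$ reachable under $\tilde\vpi_i$ together with some strategy of the opponent. We choose the opponent and chance moves along $\treepath{\Root}{h_{\ell+1}}$ and take the opponent's prescriptions to agree with those actions, which is always possible since prescriptions are arbitrary assignments. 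Because the belief-game node we reach for $\tilde\vpi_i$ at step $\ell$ has underlying history $h_\ell$, the transition rule (i) sends it to history $h_\ell a = h_{\ell+1}$ and $i$'s belief to $B^{\ell+1}_i$.

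Applying the induction at $\ell = m$ gives a reached belief $B_i \ni h$. If $h \in I$, then $I \cap B_i \ne \emptyset$, so $\vpi_i[I]$ is defined.

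The main obstacle is this last bookkeeping point. "Plays to a belief" is defined through belief-game nodes $(h, B_i, B_{-i})$, so the induction must track a concrete belief-game history whose underlying original history is $h_\ell$, not just the belief itself. To keep the argument clean, we will carry the whole triple $(h_\ell, B^\ell_i, B^\ell_{-i})$ in the induction hypothesis, with the opponent's beliefs generated by its chosen consistent prescriptions.
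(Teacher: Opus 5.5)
Your proposal is correct and follows essentially the same route as the paper. Both argue by induction on the depth of $h$ and split into two cases. If $h_\ell \in \H_i$, the definition of $\vpi_i$ forces $\vec a^\ell_i[I] = a$; if $h_\ell \notin \H_i$, every child of $h_\ell$ lies in $B^\ell_i \vec a^\ell_i$. Either way, $\tilde\vpi_i$ plays to $\splitb{i}(B^\ell_i\vec a^\ell_i, h_{\ell+1})$. Your explicit tracking of a concrete belief-game node $(h_\ell, B^\ell_i, B^\ell_{-i})$ only spells out a step the paper states in one line. It is harmless but not strictly needed, since whether $\tilde\vpi_i$ plays to a node depends only on player $i$'s own prescriptions along the path.
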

\begin{proof}
    By induction on the length of the history $h$. For $h = \Root$ this is trivial. Now let $h = h'a'$ be a non-root node, and suppose $\vpi_i$ plays to $h$. Then by inductive hypothesis, $\tilde\vpi_i$ plays to a belief $B_i \ni h'$. Let $\vec a$ be the prescription played by $\tilde\vpi_i$ at $B_i$. 
    \begin{itemize}
        \item If $h' \in I \in \I_i$, then by construction of $\vpi_i$, it must be the case that $\tilde\vpi_i$'s prescription $\vec a$ at $B_i$ satisfies $\vec a[I] = a'$, so $B_i \vec a \ni h$.
        \item If $h' \notin \H_i$, then for {\em any} prescription $\vec a$ at $B_i$ we have that $h \in B_i \vec a$. 
    \end{itemize}
    In either case, we have $B_i \vec a \ni h$. Thus, $\tilde \vpi_i$ must also play to the belief $\splitb i (B_i \vec a, h) \ni h$. 
\end{proof}

Further, from the definition of $\rho_i$ it follows immediately that $\rho_i(\vec\pi_i) = \tilde{\vec\pi}_i$. It only remains to show that $\rho_i$ is value-preserving. 
But this is easy: $\rho_i(\vec\pi_i)$ prescribes the same actions as $\vec\pi_i$. Thus, for any pure strategy profile $\vec\pi \in \Pi$, following profile $\vec\pi$ through $G$ will yield exactly the same trajectory as following profile $\rho(\vec\pi)$ through $\tilde G$. Thus, their expected utilities will also coincide, and the proof is complete.

\subsection{Worst-case dimension of the belief game}\label{sec:worst-case-size-aux}

The per-iteration time complexity of \ref{al:cfr} depends linearly on the size of the game on which the algorithm is applied. Thus, it is critical for complexity analysis to bound the size of the belief game produced by \ref{al:auxgame}.

\paragraph{Lower Bound} We first present a lower bound of the worst-case size of the belief game by describing a worst-case game whose belief game has a large number of histories.
\begin{figure}[t]
\newcommand{\makeinfoset}[1]{
    #1/.style={tier=#1,
        if nodewalk valid={previous on tier=#1}%
            {tikz={\draw[infoset1, bend left=35] (!previous on tier=#1.35) to (.145);}}%
            {}%
    },
}
\centering
\begin{forest}
terminal/.style={draw=none, inner sep=4pt},
@\makeinfoset{i1}
@\makeinfoset{i2}
@\makeinfoset{i3}
@\makeinfoset{i4}
[,nat,label=right:\emph{\tiny depth $0$},
    [1, p1,
        [2,p1,i1,edge label={node[midway,draw=none,left,fill=white,font=\scriptsize]{$1\dots b-1$}}
            [z, terminal]
        ]
        [,nat,
            [2,p1,i2
                [z, terminal]
            ]
        ]
    ]
    [1, p1, edge label={node[midway,draw=none,left,fill=white,font=\scriptsize]{$1\dots k$}}
        [2,p1,i1,
            [z, terminal]
        ]
        [,nat,
            [2,p1,i2
                [z, terminal]
            ]
        ]
    ]
    [,nat,label=right:\emph{\tiny depth $1$},
        [1, p1
            [2,p1,i2
                [z, terminal]
            ]
            [,nat,
                [2,p1,i3
                    [z, terminal]
                ]
            ]
        ]
        [1, p1
            [2,p1,i2
                [z, terminal]
            ]
            [,nat,
                [2,p1,i3
                    [z, terminal]
                ]
            ]
        ]
        [,nat,tikz={\node [draw,dashed,fit=()(!next leaf)(!last leaf),inner sep=2mm,label=north:\emph{\tiny mini-game}] {};},label=right:\emph{\tiny depth $d-4$},
            [1, p1
                [2,p1,i3
                    [z, terminal]
                ]
                [,nat,
                    [2,p1,i4
                        [z, terminal]
                    ]
                ]
            ]
            [1, p1,tikz={\node [draw,dotted,fit=()(!next leaf)(!last leaf),inner sep=1mm,label=right:\emph{\tiny subgame},] {};}
                [2,p1,i3
                    [z, terminal]
                ]
                [,nat,
                    [2,p1,i4
                        [z, terminal]
                    ]
                ]
            ]
        ]
    ]
]
\end{forest}
\caption{An example of imperfect-recall game derived from a team game whose rationale is described in the proof of \Cref{th:aux-size-lower}.
Given that the terminal values of this game are not relevant, we use z to indicate a generic payoff. The boxes indicated by \emph{mini-game} and \emph{subgame} correspond to the terms used in the description.}
\label{fig:worst-case-game}
\end{figure}
\begin{theorem}\label{th:aux-size-lower}
    There exists a game $G$ with depth $d$, information complexity $k$, and maximum branching factor at a node $b$ such that the number of nodes in the belief game $\tilde G$ is $|\tilde \H| \geq b^{2k(d-4)}$.
\end{theorem}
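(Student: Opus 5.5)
The plan is to build the game of \Cref{fig:worst-case-game} explicitly and count the distinct belief-game histories it generates, mini-game by mini-game.

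\textbf{Construction.} The game is a chain of $d-4$ ``mini-games'', each hanging off a Nature node at increasing depth. A mini-game rooted at a Nature node $n_t$ at depth $t$ has $k$ children (plus one further Nature child continuing the chain). These $k$ children are \pmax nodes, labelled ``1'', each in its own singleton infoset. Each such node has $b-1$ actions leading to a \pmax node ``2'' at depth $t+2$, plus one action leading to a Nature node whose child is a ``2'' node at depth $t+3$. All ``2'' nodes at a given depth lie in a single infoset (the tiers $i_1, i_2, \dots$), and this infoset also contains the ``2'' nodes of the neighbouring mini-game at that depth. Every ``2'' node leads to terminals.

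\textbf{Step 1: parameters.} I will verify that the branching factor is $b$ and the depth is $d$. I will then compute the information complexity by identifying the public states of \pmax. The ``1'' nodes of one mini-game are all connected: they are siblings whose descendants meet in the common ``2''-infoset. The ``2'' tiers are connected through the shared infosets. The last-infosets over such a public state are then the $k$ singleton ``1''-infosets plus the current ``2'' infoset. This gives $k$ up to an additive constant, and the statement should be read (or the construction tuned, e.g.\ using $k-1$ first-level nodes) so that the exact value is $k$.

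\textbf{Step 2: counting.} At the belief $B$ consisting of the $k$ ``1'' nodes of a mini-game, \pmax chooses a prescription, one action per ``1'' node, giving $b^k$ prescriptions. Consider the tuples (prescription, subsequent observation) that lead to distinct \pmax infosets $(\tilde\sigma, B')$ in the belief game. Because infoset labels carry the full sequence $\tilde\sigma$, distinct prescription histories yield distinct belief-game nodes even when the underlying nodes coincide. The beliefs at the two next tiers are determined by which ``1'' nodes chose the Nature-delayed action and which chose which direct action. I therefore expect roughly $b^k$ distinct continuations per tier, and two relevant tiers (depths $t+2$ and $t+3$) per mini-game. Hence each mini-game multiplies the number of distinct \pmax sequences by about $b^{2k}$. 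The overlap between consecutive mini-games (infosets $i_2, i_3, \dots$ shared across levels) is exactly what forces the beliefs at the second tier to stay large and carry the prescription information forward rather than collapsing.

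Chaining $d-4$ mini-games along the Nature spine, sequences concatenate. Since $\tilde G$ has perfect recall and is a tree, each distinct sequence is a distinct node. This gives $|\tilde\H|\ge (b^{2k})^{d-4}$.

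\textbf{Main obstacle.} The delicate part is Step 2. I must show that the $b^{2k}$ factor really arises per mini-game, which means arguing that different prescriptions lead to different belief-game nodes, not merely different labels. This follows from the construction of $\tilde G$ as a tree of prescriptions: every prescription at every reached node spawns a fresh child. So it suffices to lower-bound the number of prescription choices along some single root-to-leaf path through the spine, by multiplying the prescription counts $b^{k}$ (or $b^{2k}$ when two prescription rounds occur) at each level. I will also check that the choice at one mini-game does not shrink the beliefs, and hence the prescription counts, at later mini-games. This holds because the spine Nature nodes are unaffected by \pmax's actions.
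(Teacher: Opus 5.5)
Step 2 has a genuine gap, and it cannot be repaired within your construction, because your game is effectively one-player. The size of $\tilde G$ depends only on branching along its paths. A \pmax layer branches into the prescriptions at \pmax's current belief, a \pmin layer into those at \pmin's belief, and a Nature layer into the children of the underlying history. Observations only determine infoset labels; they never create nodes.

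In your game, the only nontrivial branching on the spine at each level is \pmax's single choice among the $b^k$ prescriptions at the belief containing the $k$ ``1'' nodes. Observing the effect of that one choice at two depths does not multiply the count; you are counting the same $b^k$ options twice. The ``2'' tiers add nothing. Their nodes only lead to terminals, and each tier is a single infoset, so there are at most $b$ prescriptions there. They also split off into a separate belief, off the spine. Finally, \pmin never moves, so every \pmin layer is trivial.

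In fact, no game in which only one player moves can attain the bound. Every infoset active at a belief $B$ is a last-infoset of a node in the public state containing $B$, so $B$ has at most $b^k$ prescriptions. Each level of $G$ therefore multiplies the number of paths of $\tilde G$ by at most $b^k\cdot b$. This gives $|\tilde\H| = O(d\, b^{(k+1)d})$, which falls below $b^{2k(d-4)}$ once $k\ge 2$ and $d$ is large.

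The missing idea is that the factor $2$ in the exponent comes from the two players, mirroring the per-player $b^k$ count in \Cref{th:aux-size-upper}. In the paper's construction, each spine chance node has $k$ \pmax children \emph{and} $k$ \pmin children, each heading a copy of the same subgame owned by that player. The figure omits \pmin only for readability.

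Along the spine path, \pmax's belief has $k$ active singleton infosets, giving $b^k$ prescriptions, and \pmin's belief gives another $b^k$ on the next layer of $\tilde G$. The spine continuation is a chance move unaffected by either prescription, so the next level again offers $b^k$ prescriptions to each player. Multiplying $b^k\cdot b^k$ over the levels yields $(b^{2k})^{d-4}$. Your ``Main obstacle'' paragraph, which counts prescription choices along a single spine path and uses that the spine is unaffected by the players, is exactly the paper's counting; it just needs the \pmin copy.

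As a side remark, your Step 1 bookkeeping is also off. The tier ``2'' nodes form their own public state. The Nature-delay nodes of the previous mini-game do lie in the spine public state, and they carry that mini-game's $k$ singleton infosets as last-infosets. The last-infoset count there is therefore about $2k$, not $k+1$, so tuning by an additive constant does not fix it. The paper's own argument only uses that each player's spine belief has $k$ active infosets.
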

\begin{proof}
    We construct a parametric game for depth $d \geq 4$, information complexity $k$, and branching factor $b \geq k+1$.
    
    Consider a game that consists of $d-3$ repetitions of the following \emph{mini-game}.
    There is a \emph{"root"} chance node with $k$ nodes of \pmax, $k$ nodes of \pmin and the root of the next repetition of the mini-game as children. Each of \pmax's and \pmin's nodes belong to different information sets.
    Each of those is the root of \emph{subgames} with identical structures. Their children are $b-1$ player nodes followed by a single terminal node, and there is a chance node followed by a player node and then a single terminal node. These player nodes belong to the same player as the root of the subtree, namely \pmax for the first $k$ children and \pmin for the second $k$ children of the \emph{"root"} of the mini-game.
    All nodes of \pmax and \pmin at the same level of the subgames apart from the $2k$ nodes belong to the same information set.
    
    A representation of such a game for $k=2, b=2, d=6$ is given in \Cref{fig:worst-case-game}, where nodes of \pmin have been omitted to improve clarity. Those have the same structure as the nodes of \pmax.
    The main point of this game is that at any level $l=1,\dots,d-3$, both \pmax and \pmin have a public state containing the $k$ infosets corresponding to the nodes after \nature's \emph{"root"} of the \emph{mini-game} at depth $l-1$. 
    At each of those public states, both \pmax and \pmin have a single belief containing $k$ information sets and the chance node that leads to the next \emph{mini-game}. Therefore, there are $b^{k}$ prescriptions at this belief, and each prescription played leads, among the others, to the belief containing $k$ information sets of the next \emph{mini-game}.

    Multiplying this factor for each step of the level gives $|\H'| \geq \left(b^{2k}\right)^{d-4}$.
\end{proof}

\paragraph{Upper Bound} We now present an upper bound on the number of histories of the belief game.

\begin{theorem}\label{th:aux-size-upper}
    Let $G$ be a game with depth $d$, information complexity $k$, and maximum branching factor $b$.
    The number of nodes in the belief game $\tilde G$ is $|\tilde\H| \leq b^{2kd+d}$.
\end{theorem}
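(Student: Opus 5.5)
We bound $|\tilde\H|$ by bounding the branching of the belief game level by level. By construction in \ref{al:auxgame}, each move of $G$ becomes three consecutive layers of $\tilde G$: a \pmax-prescription layer, a \pmin-prescription layer, and a Nature layer. Every root-to-leaf path of $\tilde G$ therefore passes through exactly $d$ such triples, plus the trailing trivial prescriptions at terminal singleton beliefs, which have a single child. Since each node of $\tilde G$ is created as a child of a node in the previous layer, it suffices to show that each triple multiplies the number of nodes by at most $b^{k}\cdot b^{k}\cdot b = b^{2k+1}$. Summing over all layers then gives $|\tilde\H| \le \sum_{\ell\le d} b^{(2k+1)\ell}$, which is $O(b^{(2k+1)d}) = O(b^{2kd+d})$. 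The claimed bound should hold exactly, perhaps after absorbing the geometric sum by noting that the prescription and Nature layers of the final triple are trivial, or by a slightly more careful count.

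The Nature layer is easy. At a node $(h,B_\pmax,B_\pmin)$, the set $S$ built in \ref{al:auxgame} is either $\{ha : a\in A_h\}$ (if $h$ is a chance node) or a single node. In either case $|S|\le b$, and each element of $S$ yields exactly one child, since the new beliefs $\splitb{i}(B_i\vec a_i, ha)$ are determined by $ha$.

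The key step is bounding the number of prescriptions at a belief $B_i$ by $b^{k}$. A prescription is an element of $\bigtimes_{I\in\I_i[B_i]} A_I$, so the number is at most $b^{|\I_i[B_i]|}$, and we must show $|\I_i[B_i]|\le k$. The belief $B_i$ is a connected set in $\mc G_i$, so it lies inside a single public state $P\in\mc P_i$. Every infoset $I$ intersecting $B_i$ contains some $h\in B_i\subseteq P$ with $h\in\H_i$, and by the note after the definition of last-infosets we have $I = I_h\in\LI_i(h)$. Hence $\I_i[B_i]\subseteq\bigcup_{h\in P}\LI_i(h)$, whose size is at most $k$ by the definition of information complexity. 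This yields at most $b^k$ children at each \pmax node and at most $b^k$ at each \pmin node.

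The main obstacle I expect is the exact constant bookkeeping rather than the combinatorial content. Two points need care: the extra trivial layers appended after terminal nodes, and turning the geometric sum over layers into exactly $b^{2kd+d}$ instead of a constant multiple of it. I would handle this by counting leaves and internal nodes together. Every internal node of $\tilde G$ has at least one child, and nodes with a single child can be charged to their unique child. Alternatively, one can use $b\ge 2$ and bound $\sum_{\ell=0}^{d} x^\ell \le x^{d}\cdot \frac{x}{x-1}$, noting that the last triple contributes no real branching. If this proves delicate, I would fall back to stating the bound up to a constant factor, since the asymptotic content $b^{O(kd)}$ is what matters for the subsequent comparison with the TB-DAG.
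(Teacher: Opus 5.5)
Your approach is the paper's: group the layers of $\tilde G$ into \pmax/\pmin/Nature triples, bound their branching by $b^k$, $b^k$ and $b$, and multiply over at most $d$ triples. (It is ``at most'' $d$, not ``exactly'' $d$, since terminals of $G$ can occur above depth $d$.) Your justification of the $b^k$ bound is correct, and it supplies a step the paper only asserts. A belief is connected in $\mc G_i$, so it lies inside one public state $P$, and every infoset meeting $B_i$ equals $I_h \in \LI_i(h)$ for some $h \in B_i \subseteq P$. The membership $I_h \in \LI_i(h)$ relies on timeability: an infoset $J \preceq h$ that remembered $I_h$ would need all of its nodes strictly deeper than $h$.

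Your worry about the constant is justified, and no sharper bookkeeping will remove it. The paper's own proof silently passes from ``branching at most $b^{2k+1}$ per triple over $d$ triples'' to ``at most $b^{d(2k+1)}$ nodes''. That argument only bounds the number of \emph{leaves} of $\tilde G$. By induction from the bottom, a \pmax-layer node whose underlying history has depth $\ell$ has at most $b^{(2k+1)(d-\ell)}$ leaves below it.

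For the total node count, the literal inequality can fail. Take $G$ to be a single \pmax decision node, forming its own infoset, with two actions leading to terminals. Then $d=1$, $b=2$, $k=1$, so $b^{2kd+d}=8$. Yet \Cref{al:auxgame} creates $1+2+2$ nodes in the first triple and $2+2+2$ in the trivial layers appended at the terminals, for $11$ nodes in total.

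So what your argument actually proves is the following. $\tilde G$ has at most $b^{(2k+1)d}$ terminal nodes, and $|\tilde\H| \le c\, b^{(2k+1)d}$ for a small absolute constant $c$ when $b \ge 2$. Your geometric-series estimate yields the second bound once you also count two things: the \pmin and Nature layers inside each triple, which multiply the number of \pmax-layer nodes by $1+b^k+b^{2k}$, and the three trivial layers at terminals. Your stated sum omits both. As you note, this constant-factor version is all that matters for the comparison with the TB-DAG bound of \Cref{th:main}.
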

\begin{proof}
    Consider the algorithm \ref{al:auxgame}. Grouping levels of the belief game $\tilde G$ three by three, we have that \pmax, \pmin, and \nature each play at a different level in each group, and we have $d$ groups. Moreover, no more than $b$ actions for the chance player and $b^k$ prescriptions are available to each player. We, therefore, have that the number of nodes in-game tree $\tilde G$ is $|\tilde \H| \leq b^{d(2k+1)}$.
\end{proof}

\paragraph{Discussion}%
The bounds presented in this section highlight the main computational limitation of \Cref{al:auxgame}, the explicit dependence on depth introduced by explicitly using sequences to distinguish information sets in the belief game.

We remark that we can replace $k$ here with the maximum number of infosets (not the last-infosets) in any public state. We opted not to introduce two different notions of information complexity to have bounds comparable with the TB-DAG ones in \Cref{sec:worst_case_size_tbdag}. We will explore the effects of introducing the different definitions of $k$ in \Cref{sec:discussion:information_complexity}.

\subsection{Regret minimization on team games}\label{sec:regret-minimization-team-games}

This section shows how to find a mixed Nash equilibrium in a generic two-player zero-sum game with imperfect recall $G$ by applying \ref{al:cfr} on the belief game $\tilde G$ obtained by running \Cref{al:auxgame} on $G$.

Let $\tilde\X$ and $\tilde\Y$ be the realization-form strategy spaces for $\pmax$ and $\pmin$ in $\tilde G$ derived from the sequence-form representation as in \Cref{sec:tree-form-decision-problems}. Specifically, vectors $\tilde{\vec x} \in \tilde\X$ are indexed by terminal sequences for \pmax in $\tilde G$ (similarly for \pmin). Such a sequence $\sigma$ can be identified by a list of beliefs and prescriptions, ending in a singleton belief $\{ z \}$ for terminal node $z \in \Z$. For any terminal node $z$, let $\Sigma^z_\pmax$ be the set of terminal sequences for \pmax that end at belief $\{z\}$. Then computing a Nash equilibrium in $\tilde G$ (and hence a mixed Nash in $G$) can be done by solving the max-min problem
\begin{align}
    \max_{\tilde{\vec x} \in \tilde\X} \min_{\tilde{\vec y} \in \tilde\Y} \sum_{z \in \Z}  u(z) \sum_{\tilde\sigma_\pmax \in \Sigma^z_\pmax} \tilde{\vec x}[\tilde\sigma_\pmax] \sum_{\tilde\sigma_\pmin \in \Sigma^z_\pmin} \tilde{\vec y}[\tilde\sigma_\pmin]. \label{eq:Gtilde-maxmin}
\end{align}
This is equivalent to the max-min problem for the coordinator game (Eq. \eqref{eq:maxmin}) by setting $\vec x[z] := \sum_{\tilde\sigma_\pmax \in \Sigma^z_\pmax} \tilde{\vec x}[\tilde\sigma_\pmax]$ (and similar for $\vec y$). That is, from an optimization perspective, what has happened is that we have constructed sets $\tilde\X$ and $\tilde\Y$ that are described by linear constraints, just like the sequence form, and {\em project} onto $\X$ and $\Y$ respectively, allowing the reformulation and equivalence of problems.

We now analyze the time complexity and regret of running CFR on $\tilde G$. Fix a player, say, $\pmax$. (The same analysis will apply to $\pmin$.) First, recall from \Cref{sec:tree-form-decision-problems} that, in a decision problem, a set $P$ of \pmax-decision points is called {\em playable} if there exists a pure strategy of \pmax that plays to all the decision points in $P$. But by \Cref{lm:belief_common_ancestor}, the size of any playable set $P$ of \pmax is at most $|\mc H|$. Further, the branching factor of $\tilde G$ is at most $b^k$, where $b$ is the branching factor of $G$ and $k$ is the information complexity (see \Cref{sec:worst-case-size-aux}). Thus, applying multiplicative weights ({\tt MWU}) as the local regret minimizer at each decision point and using \Cref{th:cfr}, we have:
\begin{theorem}\label{th:aux-tree-cfr}
    After $T$ iterations of \ref{al:cfr} on $\tilde G$ with {\tt MWU} as the local regret minimizer, the average strategy profile $(\bar{\vec x}, \bar{\vec y})$ is an $O(\eps)$-Nash equilibrium of $G$, where
    \begin{align}
        \eps = |\mc H| \sqrt{\frac{k \log b}{T}}.
    \end{align}
    The per-iteration complexity is linear in the size of $\tilde G$.
\end{theorem}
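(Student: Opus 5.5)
The plan is to combine the CFR guarantee of \Cref{th:cfr} with the strategic equivalence of \Cref{th:strategic-equivalence}, using \Cref{lm:belief_common_ancestor} to control the size of playable sets. I would first apply \ref{al:cfr} to each player's tree-form decision problem in $\tilde G$. This is legitimate because $\tilde G$ has perfect recall, so its sequence-form polytopes $\tilde\X,\tilde\Y$ are exactly the players' mixed strategy sets there. By \Cref{th:cfr}, each player's cumulative regret is then at most $\max_P \sum_{s\in P}[R_s^T]_+$, where $P$ ranges over playable sets of decision points.

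The first step bounds the local regrets. A decision point of \pmax in $\tilde G$ is an infoset labelled $(\tilde\sigma_\pmax,B_\pmax)$, and its action set is the set of prescriptions at $B_\pmax$. I will argue that this set has size at most $b^k$. A belief lies inside a single public state $P$, because connected subsets of $\mc G_i$ stay within one connected component. The infosets intersecting $B$ are active at nodes $h\in B$, so each such infoset is $I_h\in\LI_i(h)$. Hence the number of infosets intersecting $B$ is at most $|\bigcup_{h\in P}\LI_i(h)|\le k$, and there are at most $b^k$ prescriptions. With {\tt MWU}, each local regret is then $R^T_s=O(\sqrt{T\log b^k})=O(\sqrt{Tk\log b})$.

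The second step bounds the size of a playable set. Fix a pure strategy and the decision points it reaches. By \Cref{lm:belief_common_ancestor}, any two distinct reached infosets have disjoint beliefs. Every belief is nonempty, since it contains the underlying history, so I can injectively assign to each reached infoset some $h\in\H$ in its belief. This gives $|P|\le|\H|$. Combining with the first step, each player's regret is $O(|\H|\sqrt{Tk\log b})$. The folklore identity recalled before \Cref{th:cfr} then shows that $(\bar{\tilde x},\bar{\tilde y})$ is an $O(|\H|\sqrt{k\log b/T})$-Nash equilibrium of $\tilde G$.

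The third step transfers this to $G$. Using the projection $\vec x[z]=\sum_{\tilde\sigma\in\Sigma^z_\pmax}\tilde{\vec x}[\tilde\sigma]$, the objective \eqref{eq:Gtilde-maxmin} equals the original bilinear objective. By \Cref{th:strategic-equivalence} and linearity, this projection maps $\tilde\X$ onto $\X$: convex combinations of pure strategies correspond under the bijection $\rho$, and \Cref{lm:terminal_reach} ensures that realization forms agree. Since the map is onto and preserves payoffs, best-response values in the two games coincide, so the equilibrium gap is the same and the approximation transfers. The per-iteration cost statement is immediate, because CFR performs constant work per node of the decision problem, and that problem is no larger than $\tilde G$.

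I expect the main obstacle to be the second step. The delicate point is that the tree of $\tilde G$ contains many copies of the same belief under different sequences, so $|P|\le|\H|$ genuinely relies on the disjointness given by \Cref{lm:belief_common_ancestor}. One also has to handle the trivial decision points where the prescription set is a singleton; these contribute zero regret, so they do no harm.
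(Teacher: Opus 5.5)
Your proposal is correct and follows the same route as the paper. It bounds every playable set of $\tilde G$ by $|\mc H|$ via \Cref{lm:belief_common_ancestor}, bounds the number of prescriptions at a decision point by $b^k$, feeds the resulting $O(\sqrt{Tk\log b})$ MWU local regret into \Cref{th:cfr}, and passes to $G$ through the projection $\vec x[z]=\sum_{\tilde\sigma\in\Sigma^z_\pmax}\tilde{\vec x}[\tilde\sigma]$. You give more detail than the paper, which only asserts the $b^k$ bound; your justification, that beliefs lie in one public state and $I_h\in\LI_i(h)$, is the right one. One small addition belongs in your third step: by \Cref{lm:belief_common_ancestor}, a pure strategy plays at most one terminal sequence ending at $\{z\}$, so the projection of a pure strategy is exactly its $\{0,1\}$-valued realization form rather than a count.
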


While the regret above is polynomial in $\mc H$, the per-iteration complexity depends on the size of $\tilde G$, which is worst-case exponentially larger than $G$, as shown in \Cref{sec:worst-case-size-aux}.

\section{DAG decision problems}\label{sec:dag-decision-problems}

\begin{algorithm}[t]\namealg{DAG-Generic}
\caption{Generic construction of a regret minimizer $\mc R$ on $\mc Q$ from a regret minimizer $\hat{\mc R}$ on its tree form $\hat{\mc Q}$.}\label{al:dag-rm-generic}
\begin{algorithmic}[1]
\Procedure{NextStrategy}{}
    \State $\hat{\vx}^t \gets \hat{\mc R}$.{\sc NextStrategy()}
    \State \Return{$\vec D\hat\vx^t$}
\EndProcedure
\Procedure{ObserveUtility}{$\vec u^t$}
    \State $\hat{\mc R}$.{\sc ObserveUtility($\vec D^\top {\vec u}^t$)}
\EndProcedure
\end{algorithmic}
\end{algorithm}

\begin{algorithm}[t]\namealg{DAG-CFR}
\caption{Counterfactual regret minimization on DAG-form decision problems $\mc Q$. For each decision point $s$, $\mc R_s$ is a regret minimizer on $\Delta(\A_s)$.}\label{al:dag-cfr}
\begin{algorithmic}[1]
\Procedure{NextStrategy}{}
    \State $\vx^t[\Root] \gets 1$ 
    \For{each decision point $s$, in top-down order} 
        \State $\vec r_s^t \gets \mc R_s$.{\sc NextStrategy()}
        \State $\displaystyle \vx^t[s*] \gets  \sum_{p \in P_s} \vx^t[p] \vec r_s^t$
    \EndFor
    \State \Return{$\vx^t$}
\EndProcedure
\Procedure{ObserveUtility}{$\vec u^t$}
    \State $\vec v^t \gets \vec u^t$
    \For{each decision point $s$, in bottom-up order} 
        \State $\mc R_s$.{\sc ObserveUtility}($\vec v^t[s*]$) 
        \For{$p \in P_s$} $\vec v^t[p] \gets \vec v^t[p] + \ip{\vec r_s^t, \vec v^t[s*]}$ \EndFor
    \EndFor
    \State $t \gets t + 1$
\EndProcedure
\end{algorithmic}
\end{algorithm}

In this section, we will develop a general theory of DAG-form decision problems and regret minimization on them, analogous to the tree-form theory in \Cref{sec:prelims}. Although our main interest in DAG-form decision-making is its application to two-player imperfect-recall games (which we will develop in \Cref{sec:tbdag}), the observations made in this section also have general applicability beyond this setting. For example, since the publications of earlier versions of the present paper, DAG-form decision-making has been applied toward the efficient computation of many other solution concepts, including {\em linear correlated equilibria} and {\em optimal extensive-form correlated equilibria}~\cite{Zhang22:Polynomial,Zhang22:Optimal,Zhang23:Computing,Zhang24:Efficient,Zhang24:Mediator}.

As one may expect, DAG-form decision problems are identical to tree-form decision problems except that the graph of nodes is allowed to be a DAG, albeit with some restrictions. 
\begin{definition}
    A {\em DAG-form decision problem} is a DAG with a unique source (root node) $\Root$, wherein each node $s$ is either a decision point $(s\in \mc D)$ or an observation point $(s\in \mc S)$, with the following properties.
    \begin{enumerate}
        \item Observation points other than the root have exactly one incoming edge.
        \item For any two paths $p_1$ and $p_2$ from the root that end at the same node, the last node in common between $p_1$ and $p_2$ is a decision point.
    \end{enumerate}
\end{definition}
As with tree-form decision problems, we will also assume (WLOG) that decision and observation points alternate along every path and that both the root node and all terminal nodes are observation points. A {\em pure strategy} is once again an assignment of one action to each decision point. The {\em DAG form} of a pure strategy is the vector $\vx \in \{0, 1\}^{\mc S}$, where $\vx[s]=1$ if there is {\em some} $\Root \to s$ path along which the player plays all actions. A mixed strategy $\vx \in \mc Q$ is a convex combination of pure strategies. Since decision points can now have multiple parents, we will use $P_s$ to denote the set of parents of a decision point $s$.

Like tree-form decision problems, the mixed strategy set in a DAG-form decision problem has a convenient representation using linear constraints, namely:
\begin{align}
\left\{
\begin{aligned}
            \vec x[\Root] &= 1 \\
            \sum_{p \in P_s} \vec x[p] &= \sum_{a \in \A_s} \vec x[s a]  & \qq{for all} s \in \mc D.
    \end{aligned} \right.
    \label{eq:dag_strategy_root}
    \end{align}

DAG-form decision problems and tree-form decision problems are closely related. Of course, all tree-form decision problems are DAG-form decision problems. Conversely, any DAG-form decision problem can be thought of as a ``compressed'' representation of the tree-form decision problem \changed{whose paths correspond to} all the different paths through the DAG. While this tree will generally be exponentially larger than the DAG, we will find it useful to compare the DAG and tree representations.

We now formulate a general theory of regret minimization for DAG-form decision problems. We will use hats $(\hat{\mc Q}, \hat{\mc D}, \hat{\mc S}, \hat{\vx})$ to denote components of the tree form of a generic DAG-form regret minimizer. For each tree-form observation point $s \in \hat{\mc S}$, let $\delta(s) \in \mc S$ be the corresponding observation point in $\mc S$. Note that, by construction, $\delta$ is surjective but not injective unless the DAG happens to be a tree.

We now show how tree-form strategies and utilities correspond to DAG-form strategies and utilities. Concretely, we define a matrix $\vec D \in \R^{\mc S \times \hat{\mc S}}$ by $\vec D \hat{\vx}[s] = \sum_{\hat s :\delta(\hat s) = s} \hat{\vx}[\hat s]$ for all $\hat{\vx} \in \R^{\hat{\mc S}}$.  This is the matrix of the linear map that transforms tree-form strategies to their corresponding DAG-form strategies.  That is, $\vec D : \hat \X \to \X$ is a bijection.

Dually, for DAG-form utility vectors $\vec u \in \R^{\mc S}$, the vector $\vec D^\top \vec u \in \R^{\hat{\mc S}}$ is a utility vector on the tree form, with the property that
$
    \ip{\vec D^\top \vec u, \hat{\vx}} = \ip{ \vec u,\vec D \hat{\vx}}
$
by definition of the inner product. That is, the DAG-form strategy $\vec D \hat{\vx}$ achieves the same utility against DAG-form utility vector $\vec u$ as the tree-form strategy $\hat{\vx}$ achieves against the utility $\vec D^\top{\hat{\vu}}$. 

The relationship between trees and DAGs allows us to use {\em any} regret minimizer on $\hat{\mc Q}$ to construct a regret minimizer with the same guarantee on $\mc Q$. We do this in \Cref{al:dag-rm-generic}. 
\begin{restatable}{proposition}{daggeneric}
    Let $\mc R$ and $\hat{\mc R}$ be as in \Cref{al:dag-rm-generic}. Then the regret of $\mc R$ with utility sequence $\vec u^1, \dots, \vec u^T$ is equal to the regret of $\hat{\mc R}$ with utility sequence $\vec D^\top \vec u^1, \dots, \vec D^\top \vec u^T$. 
\end{restatable}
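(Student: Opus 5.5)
The plan is to unfold both regret definitions and push everything through the linear map $\vec D$. At time $t$, $\mc R$ outputs $\vx^t = \vec D\hat\vx^t$, where $\hat\vx^t$ is the iterate of $\hat{\mc R}$, and $\hat{\mc R}$ is fed the utility $\vec D^\top\vec u^t$. The regret of $\mc R$ is therefore
\[
    R^T = \max_{\vx\in\mc Q}\sum_{t=1}^T \ip{\vec u^t, \vx - \vec D\hat\vx^t},
\]
and the regret of $\hat{\mc R}$ is
\[
    \hat R^T = \max_{\hat\vx\in\hat{\mc Q}}\sum_{t=1}^T\ip{\vec D^\top\vec u^t,\hat\vx - \hat\vx^t}.
\]
By the adjoint identity $\ip{\vec D^\top\vec u,\hat\vx} = \ip{\vec u,\vec D\hat\vx}$ noted above, each summand of $\hat R^T$ equals $\ip{\vec u^t,\vec D\hat\vx - \vec D\hat\vx^t}$. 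Hence
\[
    \hat R^T = \max_{\hat\vx\in\hat{\mc Q}}\sum_{t=1}^T\ip{\vec u^t,\vec D\hat\vx - \vx^t}.
\]
The two regrets therefore coincide once we know that $\vec D$ maps $\hat{\mc Q}$ onto $\mc Q$. That fact also guarantees that $\mc R$ outputs feasible points, since $\vec D\hat\vx^t\in\mc Q$.

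The key step, and the only nonroutine one, is justifying $\vec D(\hat{\mc Q}) = \mc Q$. The text asserts this, but I would verify it at the level of pure strategies and then extend by linearity and convexity, since both sets are convex hulls of their pure strategies.

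\textbf{Inclusion $\vec D\hat\Pi\subseteq\Pi$.} Take a pure tree-form strategy $\hat\vx$. Using the defining property that any two root paths to a common node last meet at a decision point, together with the fact that a pure strategy picks one action per tree decision point, I would argue that at most one tree copy of each DAG observation point is selected. Consequently $\vec D\hat\vx$ is $0/1$. I would then exhibit a DAG pure strategy realizing it: at each DAG decision point $s$, copy the action chosen by $\hat\vx$ at a played tree copy of $s$. To keep this well defined I would, if needed, restrict to tree strategies consistent across copies, or rather define the tree form so that copies of the same DAG decision point are identified, which makes the tree-form "pure strategies" the DAG pure strategies read along paths.

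\textbf{Inclusion $\Pi\subseteq\vec D\hat\Pi$.} Given a DAG pure strategy, play at every tree copy the action the DAG strategy plays at the corresponding DAG point. The resulting tree strategy maps under $\vec D$ to the DAG form, since $\vx[s]=1$ exactly when some fully played root path reaches $s$.

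I expect the main obstacle to be the uniqueness (0/1) claim in the first inclusion, which requires a careful use of property 2 of the DAG definition. Since the statement only concerns equality of regrets, a fallback suffices: it is enough that $\vec D(\hat{\mc Q}) = \mc Q$ as sets, and by the second inclusion together with convexity, $\mc Q\subseteq\vec D(\hat{\mc Q})$ is immediate. The reverse inclusion can alternatively be checked by showing that $\vec D\hat\vx$ satisfies the flow constraints \eqref{eq:dag_strategy_root}. This follows because summing the tree flow constraints over all copies $\hat s$ of a DAG decision point $s$ gives exactly $\sum_{p\in P_s}\vx[p] = \sum_a\vx[sa]$, provided the cited linear description is taken as given.
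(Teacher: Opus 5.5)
Your proposal follows the same route as the paper's proof: unfold both regrets, use $\langle \vec D^\top \vec u, \hat{\vec x}\rangle = \langle \vec u, \vec D\hat{\vec x}\rangle$, and replace the maximum over $\mathcal{Q}$ by one over $\hat{\mathcal{Q}}$ via $\vec D(\hat{\mathcal{Q}}) = \mathcal{Q}$, which the paper simply asserts by calling $\vec D$ a bijection. You are right that only this set equality is needed ($\vec D$ is in general not injective on mixed strategies). Note that the property-2 argument you plan for the first inclusion (two distinct root paths to a common node split at a decision point, where a pure strategy follows only one branch) is also needed in your second inclusion: it ensures that $\vec D$ applied to the lifted tree strategy is a $0/1$ vector rather than a count of fully played paths.
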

Proofs of results in this section are deferred to \Cref{app:proof:dag-decision-problems}.

Applying the transformation \ref{al:dag-rm-generic} with \ref{al:cfr} as the tree-form regret minimizer $\hat{\mc R}$, we arrive at a DAG form of CFR, which can be simulated efficiently: \Cref{al:dag-cfr}. One can think of \ref{al:dag-cfr} as a {\em more efficient implementation} of  \ref{al:cfr} when the decision problem happens to have a DAG structure. Of course, the regret bound $O(|\mc S| \sqrt{T})$ is only a worst-case bound; in special cases (such as \Cref{th:aux-tree-cfr}), \ref{al:cfr} does much better than its worst case, and therefore so will \ref{al:dag-cfr}.

Call a utility vector $\hat{\vec u}$ {\em consistent} if it is in the image of $\vec D^\top$. That is (\changed{by} definition of $\vec D^\top$), $\hat{\vec u} \in \R^{\hat{\mc O}}$ is consistent if $\hat{\vec u}[\hat s] = \hat{\vec u}[\hat s']$ if $\delta(\hat s) = \delta(\hat s')$. In essence, a DAG-form regret minimizer is able to ``simulate'' a tree-form regret minimizer so long as the tree-form regret minimizer's utilities are always consistent. We now formalize this idea.

\begin{restatable}[DAG regret minimization via CFR]{theorem}{dagrm}\label{th:dag-cfr}
    \ref{al:dag-cfr} produces the same iterates as \ref{al:dag-rm-generic} with \ref{al:cfr} as $\hat{\mc R}$. Therefore, in particular, the regret of \ref{al:dag-cfr} with utility sequence $\vec u^1, \dots, \vec u^T$ is the same as that of \ref{al:cfr} on the tree form with utility sequence $\hat\vu^1 := \vec D^\top \vec u^1, \dots, \hat\vu^t := \vec D^\top \vec u^T$. Moreover, the per-iteration runtime of \ref{al:dag-cfr} is linear in the number of edges in the DAG. In particular, taking any reasonably efficient regret minimizer over simplices, the regret of \ref{al:dag-cfr} after $T$ iterations is at most $O(|\mc S| \sqrt{T})$.
\end{restatable}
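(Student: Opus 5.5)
The plan is to show by induction that every quantity maintained by \ref{al:dag-cfr} equals a sum (for strategies) or a common value (for counterfactual utilities) of the corresponding quantities in \ref{al:cfr} run on the tree form $\hat{\mc Q}$ with utilities $\hat\vu^t = \vec D^\top \vu^t$. The key coupling is that each tree decision point $\hat s$ maps to a DAG decision point $\delta(\hat s)$, and all copies of $s$ share a single local regret minimizer $\mc R_s$.

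\textbf{Step 1 (observe utility).} First I would prove, bottom-up, the following claim. Suppose the tree utilities are consistent, i.e.\ $\hat\vu^t[\hat s]$ depends only on $\delta(\hat s)$. Then the counterfactual values $\hat{\vec v}^t[\hat s]$ computed by \ref{al:cfr} are also consistent, and they equal $\vec v^t[\delta(\hat s)]$ from \ref{al:dag-cfr}.

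The reason is that the subtree below any copy $\hat s$ of a DAG node $s$ is an isomorphic unrolling of the sub-DAG below $s$. Inductively, all copies of decision point $s$ feed the same vector $\vec v^t[s*]$ into their local regret minimizers.

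Here I must be careful about the add step. In the DAG, $\vec v^t[p]$ receives a contribution $\ip{\vec r_s^t,\vec v^t[s*]}$ from each child decision point $s$ with $p\in P_s$. In the tree, $\hat{\vec v}^t[\hat p]$ receives one contribution per child copy $\hat s$ of $\hat p$, and the children of $\hat p$ correspond bijectively to the DAG children of $p$. This bijection is what makes the two sums match.

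\textbf{Step 2 (identify the regret minimizers).} Because every copy observes the same utility, the copies all behave as one regret minimizer. So I would use a single $\mc R_s$ shared by all copies, which is equivalent to running identical independent copies with identical inputs, since the local minimizers are deterministic. Then all copies also output the same $\vec r_s^t$ at the next iteration. This is an induction over $t$, interleaved with Step 1.

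\textbf{Step 3 (next strategy).} Next I would show, top-down, that $\vx^t[s'] = \sum_{\hat s':\delta(\hat s')=s'} \hat\vx^t[\hat s']$, i.e.\ $\vx^t = \vec D\hat\vx^t$. At a decision point $s$ with child $sa$, the tree gives $\hat\vx^t[\hat s a] = \hat\vx^t[p_{\hat s}]\,\vec r_s^t[a]$.

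Summing over all copies $\hat s$ of $s$, the parents $p_{\hat s}$ range over all tree copies of all $p\in P_s$, each exactly once. By the inductive hypothesis this sum equals $\sum_{p\in P_s}\vx^t[p]\,\vec r_s^t[a]$.

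The main obstacle is justifying that the copies of $sa$ are exactly the children $\hat s a$ for copies $\hat s$ of $s$. This uses the DAG axiom that observation points have a single incoming edge, so every path to $sa$ passes through $s$. It also needs the fact that distinct tree copies of $s$ biject with (copy of parent, edge into $s$) pairs.

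\textbf{Step 4 (conclusions).} With Steps 1–3, \ref{al:dag-cfr} outputs $\vec D\hat\vx^t$ and feeds $\vec D^\top\vu^t$ into the tree \ref{al:cfr}, so it coincides with \ref{al:dag-rm-generic}. The regret equality then follows from the earlier proposition on \ref{al:dag-rm-generic}.

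Runtime is linear in edges, because each decision point does work proportional to $|P_s|+|\A_s|$.

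For the $O(|\mc S|\sqrt T)$ bound, apply \Cref{th:cfr} on the tree. A playable set $P$ contains at most one copy of each DAG decision point along any pure strategy, by axiom 2 (the last common node of two paths is a decision point, where a pure strategy picks one action). Each copy's regret is that of the shared $\mc R_s$, bounded by $O(\sqrt{|\A_s| T})$. Summing over at most $|\mc D|\le|\mc S|$ decision points, with bounded utilities, gives the claim.
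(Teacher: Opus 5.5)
Your proposal is correct and follows the paper's proof. The core is the same leaves-first induction showing that every tree copy $\hat s$ of $s$ feeds its local minimizer the same counterfactual vector $\vec v^t[s*]$ as $\mathcal R_s$ (your Steps 1--2). Your Step 3 identity $\vec x^t=\vec D\hat{\vec x}^t$ fills in a step the paper leaves implicit.

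One caveat concerns Step 4. The local bound $R_s^T=O(\sqrt{|A_s|T})$ needs the counterfactual values $\vec v^t[s*]$ to be $O(1)$. That holds for game utilities $\vec U\vec y$, where reach probabilities keep every counterfactual value within the payoff range. It does not follow from $\vec u^t\in[-1,1]^{\mathcal S}$ alone, since those values can grow with the number of observation points below $s$. Also, the sum is $O(|\mathcal S|\sqrt T)$ because $\sum_s\sqrt{|A_s|}\le\sum_s|A_s|=|\mathcal S|-1$, not merely because $|\mathcal D|\le|\mathcal S|$.
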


\section{DAG decision problems in team games}\label{sec:tbdag}

\begin{algorithm}[!t]\namealg{ConstructTB-DAG}
    \caption{Constructing the TB-DAG. Inputs: imperfect-recall game $G$, player $i$}\label{al:tb-dag}
\begin{algorithmic}[1]
\Procedure{MakeDecisionPoint}{$B$}
\Comment{$B \subseteq \H$ is a belief}
    \If{a decision point $s$ with belief $B$ already exists} \Return{$s$} \EndIf
    \If{$B = \{z \}$ for $z \in \Z$} 
        \Return{new terminal node with belief $\{z\}$}
    \EndIf
    \State $s \gets{}$new decision point with belief $B$
    \For{each prescription $\vec a \in A_B^i$}
        \State add edge $s \to \algoname{MakeObservationPoint}(B \vec a)$
    \EndFor
    \State\Return{$s$}
\EndProcedure
\Procedure{MakeObservationPoint}{$H$}
    \State $s \gets{}$new observation point
    \For{each $B \in \splitb{i}(H)$} 
    \State add edge $s \to \algoname{MakeDecisionPoint}(B)$
    \EndFor
    \State\Return{$s$}
\EndProcedure
\end{algorithmic}
\end{algorithm}

In \Cref{sec:regret-minimization-team-games}, it emerged that applying the \ref{al:cfr} procedure to the belief game produced by \ref{al:auxgame} suffers from the size of the game to solve, which may grow exponentially fast as shown in \Cref{sec:worst-case-size-aux}.
In this section, we show how DAG decision problems can greatly reduce the inefficiencies caused by the previous construction.

The main observation is that \ref{al:auxgame} enforces perfect recall of the belief game by including the players' sequences in the infoset definition. 
On the other hand, the strategic aspect of the game is governed solely by the nodes contained in beliefs. Once the set of possible nodes is fixed, the exact sequence of prescriptions and observations is not relevant, as the game will evolve identically from that point onwards.
This observation leads to considering a DAG structure for the decision problems, where decision nodes are identified by beliefs.

The Nash equilibrium problem in $\tilde G$, namely \eqref{eq:Gtilde-maxmin}, indeed guarantees that both players' utility vectors will be consistent with respect to these DAG-form decision problems. We will call the resulting DAG decision problems the {\em team belief DAGs} (TB-DAGs)\footnote{We keep the name {\em team belief DAG} for continuity with previous versions of the work, even though it applies equally well in the team and imperfect recall settings.}. Therefore, using \ref{al:dag-cfr} as the regret minimizer for both players, we recover the regret guarantee of \Cref{th:aux-tree-cfr} with {\em per-iteration complexity} proportional to the total size of both DAGs.

However, this proposed algorithm still depends on the size of $\tilde G$, because, naively, to construct the DAG representations, one first constructs the augmented game $\tilde G$, and only then does the merging of decision points to create the DAGs.
We therefore describe an algorithm  \ref{al:tb-dag} that recursively constructs the team belief DAGs {\em directly from the original game $G$}, thus bypassing the construction of $\tilde G$. Therefore, we have the following result. For each player $i \in \{ \pmax, \pmin\}$, let $E_i$ be the number of edges in the TB-DAG of player $i$.

\begin{theorem}[TB-DAG and CFR]\label{th:tb-dag-cfr}
Suppose that both players run the algorithm \ref{al:tb-dag} to construct their strategy spaces $\tilde \X, \tilde \Y$, and then run \ref{al:dag-cfr}. Then their average strategy profile converges at the rate shown in \Cref{th:aux-tree-cfr}, and the per-iteration runtime complexity is $O(E_\pmax + E_\pmin)$. 
\end{theorem}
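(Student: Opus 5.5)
The plan is to reduce the statement to \Cref{th:aux-tree-cfr} and \Cref{th:dag-cfr} by showing that the DAG produced by \ref{al:tb-dag} for player $i$ is exactly the DAG obtained from player $i$'s tree-form decision problem in $\tilde G$ by identifying decision points (and the subtrees below them) that carry the same belief. Concretely, define $\delta$ from the tree-form decision problem $\hat{\mc Q}$ of player $i$ in $\tilde G$ to the TB-DAG by sending a tree decision point with label $(\tilde\sigma_i, B)$ to the DAG decision point with belief $B$. Observation points go to the observation point created for $B\vec a$ under that decision point. By induction on depth, comparing \ref{al:auxgame} with \ref{al:tb-dag}, the children line up. Both constructions branch on the same prescriptions $A^i_B$, and both split $B\vec a$ into the same components $\splitb{i}$. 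Hence $\delta$ is a surjective graph homomorphism whose unfolding is $\hat{\mc Q}$. One should also check the two DAG axioms: observation points have one parent by construction, and two paths meeting at a node last separate at a decision point, because after an observation point the child beliefs are disjoint and disconnected, and by the argument of \Cref{lm:belief_common_ancestor} their descendants never coincide.

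Next, verify consistency of utilities. The utility vector player $\pmax$ sees in $\tilde G$, from \eqref{eq:Gtilde-maxmin}, is $\hat{\vec u}[\tilde\sigma] = u(z)\,\vec p[z] \sum_{\tilde\sigma' \in \Sigma^z_\pmin}\tilde{\vec y}[\tilde\sigma']$ for every terminal sequence $\tilde\sigma$ ending at $\{z\}$. This value depends only on $z$, i.e.\ only on $\delta(\tilde\sigma)$, so $\hat{\vec u}$ is in the image of $\vec D^\top$. The same holds for $\pmin$. Moreover, the opponent's term $\sum_{\tilde\sigma'\in\Sigma^z_\pmin}\tilde{\vec y}[\tilde\sigma']$ equals the DAG-form value $\vec y[\{z\}]$, because $\vec D$ sums over tree nodes with the same image. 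So each player can compute its utility directly from the opponent's DAG iterate.

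By \Cref{th:dag-cfr}, \ref{al:dag-cfr} on the TB-DAG then produces exactly the iterates, pushed forward by $\vec D$, of \ref{al:cfr} on $\tilde G$ with the same local regret minimizers (MWU). The realization-form strategies in $G$ are therefore identical. The regrets coincide, and the average profile obeys the bound of \Cref{th:aux-tree-cfr}. That bound relied only on playable sets of size at most $|\H|$ and branching at most $b^k$, both properties of the tree run being simulated. Per-iteration cost is linear in edges by \Cref{th:dag-cfr}, plus $O(|\Z|)$ to form utilities, and $|\Z|$ is at most the number of edges, giving $O(E_\pmax+E_\pmin)$.

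The main obstacle is making the identification $\delta$ rigorous. Two points need care. First, the memoization in \textsc{MakeDecisionPoint} must merge exactly the tree decision points with equal belief, and distinct prescriptions may yield the same $B\vec a$ yet still get distinct observation points; this has to be matched on the tree side. Second, the chance and opponent layers of $\tilde G$ must collapse correctly into player $i$'s observations. I would handle the first by noting that the subtree of $\hat{\mc Q}$ below a decision point is a function of its belief alone. I would handle the second by observing that player $i$'s tree-form problem only records $i$'s beliefs and prescriptions.
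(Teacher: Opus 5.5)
Your proposal is correct and follows the same route as the paper. Both identify the tree form of the TB-DAG with player $i$'s decision problem in $\tilde G$, check that the utilities CFR would see are constant across terminal sequences ending at the same $\{z\}$, and then invoke \Cref{th:dag-cfr}. You are more explicit than the paper, which simply asserts the identification and proves consistency via \Cref{th:strategic-equivalence} for pure opponent strategies; your derivation from \eqref{eq:Gtilde-maxmin} covers mixed iterates directly and is, if anything, cleaner.
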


\begin{proof}
\ref{al:tb-dag} is designed to construct a DAG-form decision problem whose tree form corresponds precisely to the decision problem faced by player $i$ in the belief game. Thus, it only remains to show that \Cref{th:dag-cfr} applies. That is, we need to show that, in the belief game $\tilde G$, the utility vectors $\hat\vu$ that would be observed by \ref{al:cfr} for \pmax (the same proof would hold for \pmin) are such that $\hat\vu[\tilde z] = \hat\vu[\tilde z']$ whenever histories $\tilde z, \tilde z' \in \tilde \Z$ of the belief game represent the same history $z \in \Z$ of the original game. But indeed, for any pure strategy $\tilde\vy \in \tilde\Y$, by \Cref{th:strategic-equivalence} there is a pure strategy $\vy \in \tilde\Y$ such that $\tilde\vy[\tilde z] = \tilde\vy[\tilde z'] = \vy[z]$. Thus, if \pmin plays $\tilde\vy$, the utility observed by \pmax will be given by
\begin{align}
    \hat \vu[\tilde z] = \tilde{\vec p}[\tilde z] \tilde \vy[\tilde z] = \vec p[z] \vy[z] = \tilde{\vec p}[\tilde z'] \tilde \vy[\tilde z'] = \hat \vu[\tilde z']
\end{align}
which is indeed consistent in the required sense. 
\end{proof}

\changed{
\begin{figure*}[t]
\centering
\newcommand{\noded}{d, name=d, no edge, dec,
[h,[h, dec,[$z_1$,][$z_2$]]][$z_3$]
}
\newcommand{\nodef}{f, name=f, no edge, dec,
[l,[l, dec,[$z_7$][$z_8$]]][$z_9$]
}
\newcommand{\nodee}{e, name=e, no edge, dec,
[$z_4$]
[i,[i, dec,[$z_5$][$z_6$]]]
}
\newcommand{\nodeg}{g, name=g, no edge, dec,
[$z_{10}$]
[m,[m, dec,[$z_{11}$][$z_{12}$]]]
}
\begin{forest}
dec/.style={draw, fill=black, text=white},
for tree={
if n children=0{
  terminal
}{}
}
[a, dec,
  [a
    [bc, dec,
      [df, action={$\ell$ L}, name=df, [@\noded]]
      [dg, action={$\ell$ R}, name=dg, [@\nodef]]
      [ef, action={rL}, name=ef, [@\nodeg]]
      [eg, action={rR}, name=eg, [@\nodee]]
    ]
  ]
]
\draw (df.south) to (d.north);
\draw (df.south) to (f.north);
\draw (dg.south) to (d.north);
\draw (dg.south) to (g.north);
\draw (ef.south) to (e.north);
\draw (ef.south) to (f.north);
\draw (eg.south) to (e.north);
\draw (eg.south) to (g.north);
\end{forest}
\caption{
The TB-DAG corresponding to the adversarial team game in \Cref{fig:example-atg}. Decision and observation nodes are represented in black and white respectively. Note how subgames and terminal nodes that were repeated in the belief game from \Cref{fig:example-belief-game} are now merged into a subgraph which can be reached through different trajectories of play.
}\label{fig:example-tbdag}
\end{figure*}

}

\changed{An example of TB-DAG is shown in \Cref{fig:example-tbdag}.}

\subsection{Size analysis of the TB-DAG}\label{sec:worst_case_size_tbdag}

The per-iteration runtime above depends on the number of edges in the TB-DAGs, so it is important to bound this number. We will do so now. Here, we use the same notation as in \Cref{sec:worst-case-size-aux}.

\begin{theorem}\label{th:tb-dag-size}
    For each player $i$, we have $E_i \le |\H| (b+1)^{k+1}$. 
\end{theorem}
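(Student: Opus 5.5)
Since the TB-DAG is a DAG, each decision point has one outgoing edge per prescription and each observation point has one outgoing edge per belief it splits into. I will therefore count edges by charging every edge to a pair (decision point, prescription), that is, to an observation point $B\vec a$. Observation points are created once per prescription at each decision point. An edge out of the observation point $B\vec a$ goes to a belief that contains at least one node of $B\vec a$, and these beliefs are disjoint. So the number of such edges is at most $|B\vec a|$, which is the number of children of nodes in $B$ that are compatible with $\vec a$. Adding the one edge into $B\vec a$, the total is
\[
E_i \;\le\; \sum_{B}\sum_{\vec a\in A^i_B} \bigl(1+|B\vec a|\bigr),
\]
where $B$ ranges over decision points (beliefs) in the TB-DAG. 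The plan is to bound this sum by charging to nodes $h\in\H$: every term $1+|B\vec a|$ is charged to nodes of $B$ or to their children.

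The key step is to bound, for a fixed node $h\in P$ (with $P\in\mc P_i$ its public state), how many pairs $(B,\vec a)$ with $h\in B$ appear in the TB-DAG. I will argue that a reachable belief $B\subseteq P$, together with a prescription at $B$, is determined by the assignment of an element of $A_I\cup\{\bot\}$ to each last-infoset $I\in\bigcup_{h'\in P}\LI_i(h')$. Here $\bot$ means not reached, and an action means reached and played.

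\begin{itemize}
\item \emph{Belief.} Whether $h'\in P$ lies in the candidate set is determined by whether the pure strategy plays to $h'$. That depends only on the actions at the infosets in $\sigma_i(h')$, and each of these is either in $\LI_i(h')$ or remembered by some later infoset on the path, which fixes its action. This gives the candidate set, and \splitb{i} on it determines $B$ once we also know which component contains $h$.
\item \emph{Prescription.} The infosets intersecting $B$ are exactly the infosets $I_{h'}$ for $h'\in B$, and each such $I_{h'}$ belongs to $\LI_i(h')$. So the prescription is also read off from the same assignment.
\end{itemize}

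This gives at most $(b+1)^k$ pairs $(B,\vec a)$ with $h\in B$. The main obstacle is making the first bullet rigorous: I must show that the reachability of each $h'\in P$ is a function only of the actions at the last-infosets in the union. The plan is an induction along the path using the definition of ``remembers'': if $J$ remembers $I$ via action $a$ and $J\preceq h'$, then being at the node in $J$ on that path forces action $a$ at $I$.

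Finally, I charge each term to nodes. The term $1+|B\vec a|$ is at most $\sum_{h\in B}(1+|\text{children of }h\text{ compatible with }\vec a|)\le \sum_{h\in B}(b+1)$. Summing over pairs $(B,\vec a)$ and swapping the order of summation gives
\[
E_i\le \sum_{h\in\H}(b+1)\cdot\#\{(B,\vec a): h\in B\}\le |\H|(b+1)^{k+1}.
\]
Terminal decision points contribute no outgoing edges, so they do not affect the bound.
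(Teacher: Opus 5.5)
Your proof is correct and follows essentially the same route as the paper. Both arguments rest on the same key fact: within a public state $P$, the reached beliefs and their prescriptions are determined by an assignment in $A_I\cup\{\bot\}$ to the at most $k$ last-infosets of $P$. Both also use the same count of one incoming edge and at most $|B\vec a|$ outgoing edges per observation point.

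The only difference is bookkeeping. You charge edges to individual histories $h$, each lying in at most $(b+1)^k$ pairs $(B,\vec a)$. The paper instead fixes a partial strategy per public state and uses that its induced observation points are disjoint, with total size at most $|P|b$.
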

\begin{proof}
    At any given public state $P$ of player $i$, let $\I_i(P) = \bigcup_{h \in P}\I_i(h)$ be the set of last-infosets. Consider a pure strategy $\vpi \in \Pi_i$. For each last-infoset $I \in \I_i(P)$, let $\vpi_P$ be the partial strategy defined only on infosets $I \in \mc I_i(P)$, by $\pi_P[I] = \vpi[I] \in A_I$ if $\vpi$ plays to at least one node in $I$, and $\pi_P[I] = \bot$ if it does not. There are thus at most $(b+1)^k$ such possible partial strategies, since $|\I_i(h)| \le k$ by definition. By construction, each partial strategy $\vpi_P$ completely determines which nodes in $P$ are reached by $\vpi$, as well as the actions played at any such nodes. Thus, $\vpi_P$ induces a disjoint collection of observation points $S \subseteq \mc S$ such that we have $B \vec a \in S$ for each observation point $B \subseteq P$. Now, each observation point $B \vec a$ has at most one incoming edge and at most $|B \vec a|$ outcoming edges. Since the observation points $B \vec a$ are disjoint and have a total size at most $|P|b$, the total number of edges at public state $P$ is at most $|P|(b+1)^{k+1}$. The proof finishes by summing over public states, noting that every history is in exactly one public state.
\end{proof}

Thus, from \Cref{th:tb-dag-cfr} and \Cref{th:tb-dag-size}, it follows that:
\begin{theorem}[Main theorem]\label{th:main}
    Any given imperfect-recall game $G$ can be solved by constructing the TB-DAGs using \ref{al:tb-dag} and running \ref{al:dag-cfr}. After $T$ iterations, the average strategy profile will be an $O(\eps)$-Nash equilibrium where $\eps$ is as in \Cref{th:aux-tree-cfr}. The per-iteration complexity is $O(|\H| (b+1)^{k+1})$.
\end{theorem}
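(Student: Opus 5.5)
The main theorem is essentially a composition of three earlier results, so the plan is to chain them and verify that nothing is lost at the interfaces. First, by \Cref{th:strategic-equivalence}, $G$ and the belief game $\tilde G$ are strategically equivalent. Hence a mixed-strategy Nash equilibrium of $G$ can be read off from one of $\tilde G$ by projecting realization forms, $\vec x[z] = \sum_{\tilde\sigma \in \Sigma^z_\pmax} \tilde{\vec x}[\tilde\sigma]$, as in \eqref{eq:Gtilde-maxmin}. The same projection is linear and maps the saddle-point gap of $\tilde G$ exactly to the gap of $G$, because utilities coincide under $\rho$.

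Second, I would invoke \Cref{th:tb-dag-cfr}. \ref{al:tb-dag} produces, for each player, a DAG-form decision problem whose tree form is that player's decision problem in $\tilde G$. The utilities fed to each player are consistent in the sense required by \Cref{th:dag-cfr}. Therefore \ref{al:dag-cfr} generates exactly the iterates that \ref{al:cfr} would generate on $\tilde G$, and it inherits the regret bound of \Cref{th:aux-tree-cfr}. That bound relied on playable sets having size at most $|\H|$ (\Cref{lm:belief_common_ancestor}) and on the branching factor being at most $b^k$, so that \texttt{MWU} gives local regret $O(\sqrt{Tk\log b})$. Summing the two players' regrets and dividing by $T$ gives the $O(\eps)$-equilibrium guarantee in $\tilde G$, and hence in $G$ by the first step.

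Third, for the per-iteration cost, \Cref{th:tb-dag-cfr} gives $O(E_\pmax + E_\pmin)$, and \Cref{th:tb-dag-size} bounds each $E_i \le |\H|(b+1)^{k+1}$. Together these yield $O(|\H|(b+1)^{k+1})$.

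The one point that needs care is the claim that the TB-DAG built directly by \ref{al:tb-dag} has the right tree form. Concretely, two belief-game infosets with the same belief $B$ must have isomorphic subsequent decision problems, so that merging them is valid and condition 2 of the DAG definition holds. I would check this by observing that the subtree of $\tilde G$ below a $\pmax$ decision node depends only on $B_\pmax$, through the prescriptions $A^\pmax_B$ and the observations $\splitb{\pmax}(B\vec a,\cdot)$, and not on the sequence label. I would also check that observation points have a single parent, since each is created freshly per prescription. Given the earlier results, this is the only step needing verification; the rest is bookkeeping. Construction time is likewise linear in the number of edges, so it does not dominate.
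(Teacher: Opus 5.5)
Your proposal is correct and follows the paper's route: the paper's proof is simply the composition of \Cref{th:tb-dag-cfr} with the edge bound $E_i \le |\mathcal H|(b+1)^{k+1}$ of \Cref{th:tb-dag-size}, and the statement and proof of \Cref{th:tb-dag-cfr} already contain the strategic-equivalence projection, the tree-form correspondence, and the utility-consistency check that you re-derive. Two minor asides, neither affecting the statement: condition~2 of the DAG definition does not follow from the isomorphism of subproblems, but from the fact that sibling beliefs (distinct connected components of $\mathcal G_i[H]$) are disjoint node sets at the same depth, so paths that split at an observation point can never re-merge; and your closing claim that construction time is linear in the number of edges is not justified as stated, since each observation point requires computing the connected components of $\mathcal G_i[B\vec a]$, though this claim is not needed because the theorem only bounds the per-iteration cost.
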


Before proceeding, it is instructive to briefly compare \Cref{th:main} to \Cref{th:aux-size-upper}. The latter result gives a per-iteration complexity that is $O(b^{d(2k+1)})$. Thus, \Cref{th:main} is strictly superior: for \Cref{th:aux-size-upper} to be superior, we would need to have $b^{d(2k+1)} < |\H|(b+1)^{k+1}$, which is impossible for $d \ge 1, k \ge 1, b \ge 2$. We give a more detailed comparison between the two bounds in \Cref{sec:discussion:information_complexity}.

\subsection{Fixed-parameter hardness}

Given the above result, one may ask whether the $b$ can be removed more generally. It turns out that it cannot. Before proceeding, we need to introduce some basic concepts surrounding {\em fixed-parameter tractability}. 

\begin{definition}
    A problem is {\em fixed-parameter tractable} with respect to a parameter $k$ if it admits an algorithm whose runtime on inputs of length $N$ is $f(k) \poly(N)$, for some arbitrary function $f$. 
\end{definition}
The {\em $k$-CLIQUE} problem is to, given a graph $G$ and an integer $k$, decide whether $G$ has a $k$-clique. 
The computational assumption $\FPT \ne \W[1]$ states that $k$-CLIQUE is not fixed-parameter tractable. It is implied by the commonly-believed {\em  exponential time hypothesis} (ETH)~\cite{Chen05:Tight}.\footnote{The exponential time hypothesis states that deciding 3-SAT requires time $2^{\Omega(n)}$, where $n$ is the number of variables.}
\begin{restatable}{theorem}{fixedparameterhard}
    Assuming $\FPT \ne \W[1]$, there is no algorithm for computing the mixed Nash value of a {\em one-player} game of imperfect recall with information complexity $k$ whose runtime is $f(k) \poly(|\H|)$ for some arbitrary function $f$.
\end{restatable}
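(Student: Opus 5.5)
The plan is a parameterized reduction from $k$-CLIQUE. Given a graph $\Gamma=(V,E)$ with $n=|V|$ and an integer $k$, we build in polynomial time a one-player imperfect-recall game $G$ with the following three properties. First, $|\H| = \poly(n)$. Second, the information complexity is $k' = O(k^2)$, or more generally any function of $k$ alone. Third, the mixed Nash value of $G$ (in a one-player game, simply the maximum expected utility over pure strategies) equals $1$ if $\Gamma$ has a $k$-clique and is at most $1-1/\poly(n,k)$ otherwise.

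Given such a reduction, an $f(k')\poly(|\H|)$ algorithm would decide $k$-CLIQUE in time $f(O(k^2))\poly(n)$. That contradicts $\FPT\ne\W[1]$. The value gap is polynomially large, so computing the value exactly (or to polynomial precision) suffices.

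The construction is a "guess and verify" game. The player has $k$ infosets $I_1,\dots,I_k$, each with action set $V$; the intended meaning is that the action at $I_j$ is the $j$-th vertex of the clique. Nature first picks uniformly a pair $(j,l)$ with $j<l$, giving $\binom{k}{2}$ branches. Along branch $(j,l)$, the player acts at a node of $I_j$ and then at a node of $I_l$. The terminal payoff is $1$ if the two chosen vertices are adjacent in $\Gamma$ (hence distinct), and $0$ otherwise.

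For timeability, all these nodes must sit at consistent depths. I would use a fixed depth layout: the node in $I_j$ sits at depth $2j-1$ and the node in $I_l$ at depth $2l$, with chains of single-action Nature nodes padding each branch. Because each $I_j$ merges nodes from different Nature branches, the player forgets which pair was sampled, and the same vertex choice is enforced across all branches containing $j$.

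With this structure, a pure strategy is a $k$-tuple of vertices. Its value is the fraction of pairs that are adjacent. This is $1$ exactly when the tuple forms a $k$-clique, and otherwise at most $1-1/\binom{k}{2}$. Mixed strategies cannot exceed the best pure strategy, so the value behaves as required.

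For the size bound, each of the $\binom{k}{2}$ branches contains a first-level node with $n$ children, each followed by a node with $n$ children. So $|\H| = O(k^2 n^2 + k^2 d)$, which is polynomial.

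The main obstacle is bounding the information complexity: $k'$ must be at most a function of $k$, not of $n$. The public states of the player are connected components of $\mc G_i$. Consider the nodes of $I_l$ reached after different first actions $v\in V$ at $I_j$. All these nodes lie in the same infoset, so they are connected. The set of last-infosets at any node is a subset of $\{I_1,\dots,I_k\}$, because the player only has these $k$ infosets. Hence $|\bigcup_{h\in P}\LI_i(h)|\le k$ for every public state $P$, regardless of how large $P$ is, and so $k'\le k$.

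I would double-check that the padding nodes and the Nature root do not create extra player infosets. They do not, because those nodes are Nature's. Note also that $b=n$, which is consistent with \Cref{th:main}: its $(b+1)^{k+1}$ factor is exactly what cannot be removed.

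The delicate point to verify carefully is the timeability bookkeeping, together with confirming that merging nodes of $I_j$ across branches is legal. It is legal because all nodes of $I_j$ have the same action set $V$ and lie at the same depth.
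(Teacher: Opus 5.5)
Your reduction is a valid alternative to the paper's, but the depth layout you propose is not timeable. You flagged exactly this point as delicate. In branch $(j,l)$ you put the $I_j$-node at depth $2j-1$ and the $I_l$-node at depth $2l$. The depth of a node of $I_m$ therefore depends on whether $m$ is the smaller or the larger index of the sampled pair. For $k\ge 3$ and $1<m<k$ both roles occur. For instance, $I_2$ contains a node at depth $4$ (branch $(1,2)$) and a node at depth $3$ (branch $(2,3)$). The resulting game lies outside the timeable class the statement concerns; the connectivity graph and the definition of $k$ only link same-depth nodes and presuppose timeability.

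The fix is immediate. In every branch the smaller index acts first, so place every node of $I_m$ at depth $m$ in every branch. Pad depths $1,\dots,j-1$ and $j+1,\dots,l-1$ of branch $(j,l)$ with single-action Nature nodes. With that change the rest of your argument goes through:
\begin{itemize}
\item the pure strategies are exactly $V^k$;
\item the value of $(v_1,\dots,v_k)$ is the fraction of pairs $j<l$ with $(v_j,v_l)\in E$;
\item the information complexity is at most $k$, since the player has only $k$ infosets;
\item $|\mathcal{H}|$ is polynomial in $n$.
\end{itemize}

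The paper uses a two-member team instead, equivalently a coordinator with $2k$ infosets. Nature draws $(j_1,j_2)\in[k]^2$ uniformly, and each member sees only its own index and names a vertex. The team scores $1$ iff the two vertices are equal when $j_1=j_2$ and adjacent when $j_1\neq j_2$. Because the two members act at fixed layers (depths $1$ and $2$), timeability is automatic. However, there are now two independent functions $v_1,v_2:[k]\to V$, so the diagonal test $j_1=j_2$ is needed to force $v_1=v_2$, and the information-complexity bound is $2k$.

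Your version lets a single infoset $I_m$ serve both roles of index $m$. This enforces consistency for free and gives information complexity at most $k$. The price is that you must order the two decisions by index to keep the game timeable. Both constructions give polynomial size and a value gap of order $1/k^2$ between the clique and no-clique cases, which is all the FPT argument needs.
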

\changed{The proof via a reduction from $k$-CLIQUE is available in \Cref{app:proof:tbdag}.} Thus, it is impossible to replace the $b$ in \Cref{th:main} with any absolute constant.

\subsection{Branching factor reduction} \label{se:bf reduction proof}

Despite the worst-case hardness of removing the $b$ in \Cref{th:main}, it turns out that, for a natural class of games, we {\em can} remove $b$. In this subsection we will discuss games with {\em action recall}, and prove that in such games, it is without loss of generality to assume that the branching factor is $2$. Intuitively, a player $i$ has action recall if they remember the full sequence of actions they have taken in the past (including the timesteps at which such actions were taken). More formally:
\begin{definition}
    At a node $h \in \H$, let $(a_1, \dots, a_L) \in \A^L$ be the list of actions taken on the $\Root \to h$ path. Define the {\em action sequence} of player $i$ as the sequence $(a_1', \dots, a_L') \in (\A \sqcup \{ \bot \})^L$ where $a_\ell' = a_\ell$ if action $a_\ell$ was taken  by player $i$, and $a_\ell' = \bot$ otherwise. We say that player $i$ has {\em action recall} if, for every infoset $I$ of player $i$, every node in $I$ shares the same action sequence. 
\end{definition}
\begin{theorem}\label{th:team-public actions}
    Given a two-player imperfect-recall game $G$ where both players have perfect action recall, there exists another strategically-equivalent game $G'$ such that the branching factor of $G'$ is at most $2$ at each $h \in \H_\pmax \cup \H_\pmin$, the parameter $k$ in $G'$ is the same as it in $G$, and the size of the game has increased by a factor of $O(\log |\A|)$. 
\end{theorem}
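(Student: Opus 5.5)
The plan is to replace every decision node of \pmax or \pmin by a binary decision tree that encodes the action bit by bit. Fix a player $i$ and an infoset $I \in \I_i$ with $|A_I| = m > 2$. Choose a fixed injective encoding $\mathrm{enc}_I : A_I \to \{0,1\}^{\lceil \log_2 m\rceil}$, which may be a prefix code so that unused codewords are simply absent. At each node $h \in I$, replace the single choice of $a \in A_I$ by a binary tree of depth $\lceil\log_2 m\rceil$ in which player $i$ chooses the bits of $\mathrm{enc}_I(a)$ one at a time. The leaves of this tree are identified with the original children $ha$. For the infosets of the new game $G'$, the node of this gadget at $h$ reached after bit-prefix $\beta$ is placed in the infoset $(I, \beta)$. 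In other words, it is indistinguishable from the same-prefix gadget nodes at every $h' \in I$. To keep $G'$ timeable, every infoset gadget at a given depth of $G$ is padded to a common length $L = \lceil \log_2 |\A| \rceil$ using dummy single-action nodes. Nodes of the other player and of Nature at that depth are padded by chains of $L$ trivial nodes. Each original level therefore becomes $L$ levels, giving the $O(\log|\A|)$ blow-up.

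Strategic equivalence comes from the natural map between pure strategies. A pure strategy of $G$ picks $a$ at $I$, and we map it to the strategy that plays the bits of $\mathrm{enc}_I(a)$ at the infosets $(I,\beta)$ along that codeword. For reduced strategies this map is a bijection, because off-codeword prefixes are unreached and hence are $\bot$. Trajectories, and hence utilities, coincide under this map. Here action recall is what makes the new information structure legitimate. Within a gadget, the player must know which bits they have already chosen, and this is encoded in the label $\beta$. We also need this to remain consistent with later play. That is, infosets of $G$ below the gadget must not merge nodes that differ only in intermediate bit choices in a way that creates new imperfect recall. Because every node of a later infoset $J$ of $G$ shares the same action sequence, the intermediate bit choices are fully determined by $J$'s action sequence. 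The bit choices are thus functions of information the player already has, so they never separate or join public states in a new way.

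The main obstacle is showing that $k$ does not change. I would show that the connectivity graph of $G'$, restricted to the original depths, equals that of $G$. The extra nodes inside gadgets then fall into public states that mirror the public state of their root in $G$. To handle the last-infosets, I would argue that an intermediate infoset $(I,\beta)$ is always remembered by the next gadget infoset $(I,\beta b)$, and by the first infoset of $i$ after the gadget. The latter holds thanks to action recall, since the whole action, and hence all its bits, is recalled. Within a public state of $G'$, the union of last-infosets therefore corresponds bijectively to that of the corresponding public state of $G$. The one exception is that $(I,\beta)$ replaces $I$ while the gadget is in progress, so the count is unchanged. The delicate case is the requirement in the definition of remembering that every $h\in J$ have an ancestor $h'a$ with $h'\in I$. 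Without action recall a player could forget which action they took and still be at the same infoset. In that situation bits would become separate last-infosets and $k$ could grow, and this is exactly where the hypothesis is used.
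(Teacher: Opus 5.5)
\textbf{Gap: per-infoset codes break the bound on $k$.} Your gadget is the paper's construction with one difference, and that difference breaks the bound on $k$. You allow a separate code $\mathrm{enc}_I$ for each infoset, whereas the paper fixes one bitstring per action \emph{label} $a\in\mathcal{A}$.

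Your count needs every original infoset $I$ to contribute a \emph{single} gadget infoset $(I,\beta)$ to each public state of $G'$. That is, all nodes of an intermediate public state must carry the same bit prefix. You justify this by saying the bit choices are ``fully determined by $J$'s action sequence''. But action recall only fixes action labels. With per-infoset codes, the same label taken at different infosets gets different bits, so $G'$ need not have action recall.

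Here is a concrete counterexample.
\begin{itemize}
\item The game: Nature chooses one of $j,j',x,x'$. Player $i$ has infosets $J=\{j,j'\}$ and $X=\{x,x'\}$, each with actions $a_1,\dots,a_4$. The only other nontrivial infosets are $K_1=\{ja_3,xa_3\}$, $K_2=\{xa_2,j'a_2\}$ and $K_3=\{ja_1,x'a_1\}$, each followed by terminals.
\item Action recall holds, and $k=3$ (attained, e.g., by $\{J,X,K_1\}$ at $K_1$).
\item Take $\mathrm{enc}_J:(a_1,a_3,a_2,a_4)\mapsto(00,01,10,11)$ and $\mathrm{enc}_X:(a_3,a_2,a_1,a_4)\mapsto(00,01,10,11)$.
\item After the first bit, $K_1$ joins $(j,0)$ and $(x,0)$; $K_2$ joins $(x,0)$ and $(j',1)$; $K_3$ joins $(j,0)$ and $(x',1)$. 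The gadget infosets join $(j,b)$ with $(j',b)$ and $(x,b)$ with $(x',b)$.
\item So all eight nodes form one public state, with last-infosets $(J,0),(J,1),(X,0),(X,1)$. Hence $k\ge 4$ in $G'$.
\end{itemize}

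\textbf{The fix, and where action recall is really used.} The missing idea is to use one global code for all action labels, as the paper does. Then $G'$ inherits action recall. Under action recall, all nodes of a public state share the same action sequence, because an edge of $\mathcal{G}_i$ requires descendants in a common infoset, whose nodes share an action sequence. Hence an intermediate public state of $G'$ has a common prefix $\tilde a$ and consists of gadget nodes below a single public state $P'$ of $G$. Each last-infoset $I$ of $P'$ then induces at most one last-infoset there:
\begin{itemize}
\item $I\tilde a$, if $I$ is the current infoset;
\item otherwise, the final gadget infoset of $I$ for the common action taken at $I$.
\end{itemize}

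This uniformity across a public state is where action recall is genuinely needed. The node-by-node correspondence of last-infosets you describe holds with or without action recall. Inner gadget infosets are remembered by the next gadget infoset, and the final one is remembered in $G'$ exactly when $I$ is remembered in $G$.

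Separately, your remark that $(I,\beta)$ is remembered by the first infoset of $i$ after the gadget is false in general: $K_1$ above remembers neither $J$ nor $X$. You do not need that claim.
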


\begin{proof}
To every action $a \in \A$ we associate a unique bitstring of length at most $\ell = O(\log |\A|)$. Assume without loss of generality, for simplicity of notation, that all such bitstrings end with a $0$. 
We will call bitstrings $\tilde a \in \{0, 1\}^{<\ell}$ ``partial actions''.

We replace every internal node $h \in \H \setminus \Z$ with a binary tree of depth $\ell$, where bitstrings that are not prefixes of any action $a \in A_h$ are pruned. If $h$ and $h'$ are in the same infoset $I$, then for every partial action $\tilde a \in \{0, 1\}^{<\ell}$ we connect $h \tilde a$ and $h' \tilde a$ in an infoset, which we will call $I \tilde a$. This creates a new game $G'$, whose parameters we must now analyze.

    For each node $h \in \H \setminus \Z$ and action $a$, $G'$ has createad at most $O(\log |\A|)$ additional nodes (namely, the nodes $h \tilde a$ where $\tilde a$ is a prefix of $a$). Thus, the size of $G'$ is at most $O(|\H| \log |\A|)$. 
    
It thus remains only to bound the information complexity of $G'$. Let $P$ be a public state of player $i$ in $G'$. By construction of action sequences, $P$ contains either only nodes in $\H_i$, or only nodes not in $\H_i$. In the latter case there is nothing to check. In the former case, we have $P \subseteq \{ h \tilde a : h \in P'\}$ for some public state $P'$ of $G$, and partial action $\tilde a \in \{0, 1\}^{<\ell}$. Now let $I$ be a last-infoset of $P'$ in $G$. Then $I$ induces at most one last-infoset in $P$: namely, if $I$ overlaps $P$, then this infoset is simply $I \tilde a$; otherwise, it is the infoset $I \tilde a'$ where $\tilde a' \in \{0, 1\}^{\ell-1}$ is the partial action such that action $\tilde a' 0$ leads to $P$ (which must be uniquely defined by definition of action recall). Thus, $P$ has as many last-infosets as $P'$, so the information complexity of $G'$ is, at most, the information complexity of $G$.
\end{proof}

\begin{corollary}\label{cor:pub-actions}
    In games where both players have action recall, \Cref{th:main} applies with the per-iteration runtime replaced with $O(3^k|\H| \log |\A|)$. 
\end{corollary}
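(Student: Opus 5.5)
The plan is to chain \Cref{th:team-public actions} with \Cref{th:main}. First, since both players have action recall, \Cref{th:team-public actions} gives a strategically equivalent game $G'$. In $G'$, every node of \pmax or \pmin has branching factor at most $2$. The information complexity of $G'$ equals the value $k$ of $G$, and $|\H'| = O(|\H|\log|\A|)$. Strategic equivalence (\Cref{def:strategic_equivalence}) gives bijections between pure strategies that preserve utilities. These bijections extend linearly to mixed strategies, so an $O(\eps)$-Nash equilibrium of $G'$ maps back to one of $G$ with the same value. The mapped strategy is obtained by reading off, for each original infoset $I$, the action whose bitstring is played along the chain $I, I0, I01,\dots$.

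Second, I would run \ref{al:tb-dag} and \ref{al:dag-cfr} on $G'$ and invoke \Cref{th:main}. Its per-iteration bound is $O(|\H'|(b'+1)^{k+1})$. The only subtlety is that $b'$ is not $2$ at chance nodes, since the binary expansion was applied only to player nodes. I would handle this in one of two ways. The first option is to also binarize chance nodes, splitting Nature's distribution along the bit tree. This does not affect any player's infosets, public states, or last-infosets, so $k$ is unchanged, and the size still grows by only an $O(\log|\A|)$ factor. The second option is to note that in the proof of \Cref{th:tb-dag-size}, the factor $(b+1)^k$ counts the partial strategies $\vpi_P$ on last-infosets, which are player infosets with at most $2$ actions. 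Hence there are at most $3^k$ such partial strategies. The remaining edge count per observation point is bounded by $|B\vec a|$, and summing over disjoint observation points gives a total of at most $|P|\,b$ per partial strategy. I would use the first option so that $b'=2$ outright and the observation-point term is $O(|P|)$.

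This yields $E_i \le |\H'|\cdot 3^{k}\cdot O(1) = O(3^k|\H|\log|\A|)$, where $(b'+1)^{k+1}=3^{k+1}=O(3^k)$. The convergence rate carries over from \Cref{th:aux-tree-cfr} applied to $G'$, with $|\H'|$ and $\log b' = O(1)$; it is at least as good as stated, up to the $\log|\A|$ factor in $|\H'|$.

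I expect the main obstacle to be the chance-node issue, i.e., certifying that the branching factor entering \Cref{th:tb-dag-size} is really $2$. I would resolve it by checking that binarizing chance nodes preserves the connectivity graphs restricted to the original depths, so public states and last-infosets are unchanged. The inserted intermediate chance levels form public states inheriting the same last-infosets, which keeps $k$ fixed.
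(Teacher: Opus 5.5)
Your proposal is correct and follows the paper's implicit argument. Apply \Cref{th:team-public actions} to obtain a strategically equivalent game with branching factor $2$, the same information complexity $k$, and $O(|\H|\log|\A|)$ nodes. Then \Cref{th:main} gives per-iteration cost $O(|\H|\log|\A|\cdot 3^{k+1}) = O(3^k|\H|\log|\A|)$. Your concern about chance nodes is already handled by the paper, since the proof of \Cref{th:team-public actions} replaces every internal node, chance nodes included, by a pruned binary tree, which is exactly your first option.
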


\section{Complexity of adversarial team games}\label{sec:complexity}

\begin{table}
\centering
        \setlength\tabcolsep{1.1mm}
    \scalebox{0.9}{\begin{tabular}{c|c|c}
                   & Team vs Player                          & Team vs Team                                                                                                                          \\ \toprule
            TMECor & \NP-complete~\citep{Koller92:Complexity} & {\color{red}$\Delta_2^\P$-complete} (This paper, Theorems~\ref{th:tmecor-delta2}~and~\ref{th:tmecor-delta2-hard}) \\\midrule
            TME    & \NP-complete~\citep{Koller92:Complexity} & {\color{red}$\Sigma_2^\P$-complete} (This paper, Theorems \ref{th:tme-sigma2} and~\ref{th:tme-sigma2-hard})       \\ \bottomrule
        \end{tabular}}
        \caption{Summary of most of the complexity results shown in \Cref{sec:complexity}.}
\end{table}

Here, we state and prove several results about the {\em complexity} of finding various equilibria in timeable two-player zero-sum games of imperfect recall. 

In all cases, the goal is to solve the following promise problem: given game $G$, threshold value $v$, and error $\eps > 0$ (where all the numbers are rational), determine whether the (mixed or behavioral) value of the game is $\ge v$, or $< v - \eps$. The allowance of an exponentially-small error is to circumvent issues of bit complexity that arise due to the fact that exact behavioral max-min strategies may not have rational coefficients~\cite{Koller92:Complexity}. Throughout this section, it will often be convenient to formulate the hardness gadgets in terms of adversarial team games. We will thus freely utilize the analogy between adversarial team games and coordinator games. For mixed Nash and behavioral Nash respectively, we will refer to the problems as {\sc Mixed} and {\sc Behavioral}.

Although we do not explicitly state it in the theorem statements, all the hardness results are proven by constructing adversarial team games in which both teams have a constant number of players. \changed{Proofs of all theorems in this section are available in \Cref{app:proof:complexity}.}

\begin{restatable}[\citealp{Koller92:Complexity,Chu01:NP,Stengel08:Extensive}]{theorem}{basichardness}\label{th:basic-hardness} %
    Finding the optimal strategy in a one-player, timeable game of imperfect recall is \NP-hard, \changed{even when the desired precision $\eps$ is an absolute constant.}
\end{restatable}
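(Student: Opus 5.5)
We will reduce from a gap version of an \NP-hard problem so that the constant gap survives, namely MAX-3SAT with the PCP-based gap: it is \NP-hard to distinguish satisfiable 3-CNF formulas from those in which at most a $(7/8+\delta)$ fraction of clauses can be simultaneously satisfied. Given a formula $\varphi$ with variables $x_1,\dots,x_n$ and clauses $C_1,\dots,C_m$, we build a one-player timeable game of imperfect recall in which a pure strategy corresponds to a truth assignment and the expected utility equals the fraction of satisfied clauses. In the team view this is the classic Koller--Megiddo construction.

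\textbf{Game.} Nature first picks a clause $C_j$ uniformly at random. At depth $1$ the player observes $j$ and picks one of the (at most three) literals of $C_j$; this is a perfect-recall decision, and each clause gets its own infoset. At depth $2$ the player must declare a truth value for the variable underlying the chosen literal. Crucially, all depth-$2$ nodes whose chosen literal concerns variable $x_i$ are placed in a single infoset $J_i$, regardless of which clause led there. The player thereby forgets the clause, and timeability holds because every node in $J_i$ has depth $2$. The payoff is $1$ if the declared value makes the chosen literal true and $0$ otherwise.

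\textbf{Values.} If $\varphi$ is satisfiable, take the pure strategy that sets $J_i$ to a satisfying assignment and, at each clause infoset, picks a literal true under it; the utility is $1$. Conversely, any pure strategy fixes an assignment through its actions at the $J_i$. At clause $C_j$ the player wins only if the chosen literal is true under that assignment, which requires $C_j$ to be satisfied by it. So the pure value is at most the maximum fraction of satisfiable clauses. Mixed strategies are convex combinations of pure strategies, and the utility is linear in the realization form, so the mixed value equals the best pure value. Behavioral strategies are a subset of mixed strategies, so they are bounded by the same quantity, and they attain the pure value. Hence both values are $1$ in the yes case and at most $7/8+\delta$ in the no case. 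Setting $v=1$ and $\eps=1/16$ gives \NP-hardness with constant precision.

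\textbf{Main obstacle.} The delicate step is the constant-precision claim, which rests entirely on invoking the PCP gap theorem (H\aa stad) rather than plain 3SAT. A plain 3SAT reduction would give a gap of only $1/m$, which is an inverse-polynomial precision. The remaining checks, that the construction is timeable and has no absentmindedness, are routine because every path to an infoset has a fixed depth.
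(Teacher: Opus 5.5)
Your proof is correct, but it uses a different gadget from the paper's, which follows Chu and Halpern rather than Koller and Megiddo.

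\textbf{The paper's gadget.} Nature draws both a clause $j$ and a variable $x_i$ occurring in it. The first decision (P1) sees only $x_i$ and assigns it a value. The second decision (P2) sees only $j$ and must guess which of its variables Nature drew. The team wins iff the guess is right and the assigned value satisfies $C_j$. Pure strategies again encode assignments, the value is $\tfrac13$ of the satisfied fraction, and the PCP gap (H\aa stad) is invoked exactly as you do.

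\textbf{Your gadget.} You reverse the order of the two decisions. The player first chooses a literal knowing the clause, then sets the variable's value with the clause and the literal's polarity forgotten. You need only one chance draw, and the value equals the satisfied fraction, so your argument is slightly cleaner.

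\textbf{What the paper's ordering buys.} It keeps the information complexity small, which the paper needs in the next corollary (the ETH lower bound), as its footnote there notes. With the variable decision first, the large depth-$1$ public state has only the $n$ variable infosets as last-infosets, and each later public state has at most four, so $k=n$. In your ordering, the depth-$1$ public state joins all clauses that share variables, and its last-infosets are the clause infosets. So $k$ can be as large as $m$, and the corollary would additionally need sparsification to get $m=O(n)$. For the theorem as stated this makes no difference.
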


\changed{For completeness, \Cref{app:proof:complexity} includes a self-contained proof. The proof of the above theorem also implies the following corollary.
\begin{restatable}{corollary}{corkdependenceisoptimal}
    Assuming the exponential time hypothesis~\cite{Chen05:Tight}, there is no $2^{o(k)} \cdot \poly(n)$-time algorithm for finding the optimal strategy in a one-player timeable game of imperfect recall. 
\end{restatable}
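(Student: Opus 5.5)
The plan is to reuse the reduction behind \Cref{th:basic-hardness}, checking that it is \emph{linear}: from a 3-SAT instance with $n$ variables and $m$ clauses it should produce a one-player timeable imperfect-recall game $G$ of size $\poly(n+m)$ and information complexity $k = O(n+m)$. Combined with sparsification this gives $k = O(n)$. An algorithm running in time $2^{o(k)}\poly(|G|)$ would then decide 3-SAT in time $2^{o(n)}\poly(n)$, contradicting ETH.

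First I would fix the reduction. Nature picks a clause $C_j$ uniformly at random and reveals it to the player. The player then picks a literal of $C_j$, so the branching factor is at most $3$. Say the chosen literal involves variable $x_\ell$. Nature then moves to a node where the player must choose a truth value for $x_\ell$. All nodes asking about $x_\ell$, across all clauses, lie in one infoset $I_\ell$, which forgets the clause. The player gets payoff $1$ if the assigned value satisfies the chosen literal and $0$ otherwise. To keep the game timeable, I would align depths by padding with single-action chance nodes.

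The correctness claims are as follows:
\begin{itemize}
  \item If the formula is satisfiable, a pure strategy achieves value $1$.
  \item If it is unsatisfiable, every pure strategy fixes an assignment through the $I_\ell$'s. Some clause then loses, so the value is at most $1-1/m$.
  \item Mixing cannot beat the best pure strategy in a one-player game.
\end{itemize}
So the required precision is $1/m$. This precision is polynomial rather than constant, but that is acceptable for the corollary. Alternatively, I could start from a gap version of 3-SAT, but for ETH purposes plain 3-SAT suffices.

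Next I would bound $k$. The player's infosets are the $m$ clause infosets and the $n$ variable infosets $I_\ell$. So trivially $k \le |\I_\pmax| \le n+m$. For the other player, who is absent or trivial, $k$ is $O(1)$.

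Finally, I would apply the sparsification lemma of Impagliazzo--Paturi--Zane. Under ETH, 3-SAT restricted to instances with $m = O(n)$ clauses has no $2^{o(n)}$ algorithm. Hence $k = O(n)$ and $|\H| = \poly(n)$. A $2^{o(k)}\poly(|\H|)$-time algorithm that computes the value to precision $1/(2m)$ would distinguish value $1$ from value $\le 1-1/m$, and so would decide sparse 3-SAT in time $2^{o(n)}$.

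The main obstacle is making sure the reduction in the self-contained proof of \Cref{th:basic-hardness} really has $k$ linear in $n$, and does not, for example, introduce per-clause-per-variable infosets that push $k$ to $\Theta(nm)$. If the appendix construction is less economical, I would substitute the explicit construction above. The construction is tied to the statement of \Cref{th:basic-hardness} only through its use of sparsification, and for timeability I only need to verify the depth padding.
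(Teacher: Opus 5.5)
Your proof is correct, but it controls $k$ differently from the paper. The paper keeps the construction of \Cref{th:basic-hardness} with the variable stage first. Nature draws a clause $j$ and a variable $x_i$ of it. P1 sees only $x_i$ and assigns it, and then P2 sees only $j$ and guesses the variable. With this ordering, the union of last-infosets over a public state at P1's layer contains only variable infosets. At P2's layer, each public state is a single clause infoset $Q_j$, contributing $Q_j$ and at most three variable infosets. So $k=n$ exactly, and since $m=O(n^3)$ the game has size $\mathrm{poly}(n)$; plain ETH then suffices, with no sparsification. (The paper's remark about splitting the root chance node only serves to make \Cref{th:main} give a matching $2^{O(k)}$ upper bound, and is not needed for the lower bound.)

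You put the clause stage first, which is exactly the case the paper's footnote flags. Clause nodes that share a variable are adjacent in the connectivity graph, and each has its own clause infoset as a last-infoset. So one public state can carry up to $m$ last-infosets, and $k$ can be $\Theta(m)$, i.e.\ up to $\Theta(n^3)$ on general 3-CNF. Hence in your route the sparsification lemma is essential, not cosmetic: it is what turns the trivial bound $k\le|\mathcal I|=n+m$ into $k=O(n)$.

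What your route buys is that you never compute public states or last-infosets. The argument applies verbatim to any polynomial-size reduction with $O(n+m)$ infosets, including the paper's, which has exactly $n+m$ infosets, so your worry about $\Theta(nm)$ infosets does not arise there. You also get a slightly cleaner gap, $1$ versus at most $1-1/m$. The paper's ordering buys the sharper parameterization $k=n$, independent of $m$, using ETH alone.
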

}

\subsection{Behavioral max-min strategies}

We first show results for {\sc BehavioralMaxMin}. In particular, we will show that it is $\Sigma_2^\P$-complete, first by showing inclusion and then constructing a gadget game to show completeness. (Recall that the inclusion will require an $\eps$-approximation because exact behavioral max-min strategies may contain irrational values.) 

\begin{restatable}{theorem}{tmesigmatwo}\label{th:tme-sigma2} %
    {\sc BehavioralMaxMin} is in $\Sigma_2^\P$. If \pmin has perfect recall, it is in \NP.
\end{restatable}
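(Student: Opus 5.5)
The plan is to exhibit a $\Sigma_2^\P$ procedure, $\exists$ certificate $\forall$ check, for the promise problem: decide whether the behavioral max-min value is $\ge v$ or $< v-\eps$. The existential certificate is a behavioral strategy $\hat\vx$ for \pmax, given as a rational probability vector at each \pmax information set, with polynomially many bits. The universal part checks that $\hat\vx$ guarantees roughly $v$ against every \pmin strategy.

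The first key step is to reduce \pmin's universe to pure strategies. For fixed $\hat\vx$, the payoff $\hat\vx^\top\vec U\vy$ is linear in the realization form $\vy$. Every behavioral strategy of \pmin is a mixed strategy ($\hat\Y\subseteq\Y$), so its realization form is a convex combination of pure realization forms. Hence $\min_{\vy\in\hat\Y}\hat\vx^\top\vec U\vy=\min_{\vpi\in\Pi_\pmin}\hat\vx^\top\vec U\vpi$. A pure strategy has polynomial size, and evaluating the payoff of the profile $(\hat\vx,\vpi)$ is a single polynomial-time tree traversal, multiplying behavioral probabilities along paths. The verifier therefore universally quantifies over $\vpi$ and accepts iff $\hat\vx^\top\vec U\vpi\ge v-\eps/2$.

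The second step, which I expect to be the main obstacle, is the discretization that makes the certificate short. I would round the optimal behavioral strategy $\hat\vx^*$ to a grid of mesh $\delta$ at every infoset, renormalizing, so that each action probability moves by at most $\delta b$. A standard telescoping (hybrid) argument, changing one infoset at a time, shows that the reach probability of any terminal node changes by at most $|\I_\pmax|\,b\,\delta$. So against any $\vpi$ the payoff moves by at most $|\Z|\,|\I_\pmax|\,b\,\delta\max|u|$. Choosing $\delta=\eps/(4|\Z||\I_\pmax|b\max|u|)$ makes this at most $\eps/4$, and the grid points need only $O(\log(1/\delta))$ bits, which is polynomial in the input including $\log(1/\eps)$. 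Completeness follows: if the value is $\ge v$, the rounded strategy passes the check against every $\vpi$. Soundness follows: if the value is $<v-\eps$, every $\hat\vx$ has some pure response with payoff $<v-\eps<v-\eps/2$, so every certificate is rejected. The threshold $v-\eps/2$ sits between the two cases.

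For the case where \pmin has perfect recall, the universal quantifier is removed. Given $\hat\vx$, \pmin's best response in a perfect-recall game is computed in polynomial time by backward induction over \pmin's sequence-form tree-form decision problem (\Cref{pr:seq_form_constraints}), or equivalently by linear programming. The procedure becomes: guess $\hat\vx$, compute the best-response value deterministically, and compare it to $v-\eps/2$. This places the problem in \NP.
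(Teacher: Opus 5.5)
Your proposal is correct and follows essentially the same route as the paper. Both round an optimal behavioral strategy of \pmax to a rational one of polynomial bit size. Both bound the perturbation of each terminal reach probability: your $|\I_\pmax|\,b\,\delta$ corresponds to the paper's $O(N\delta)$, which gives an $O(N^2\delta)$ change in payoff against any opponent strategy. You also spell out two things the paper's proof leaves implicit: the universal check over \pmin's pure strategies that yields the $\Sigma_2^\P$ bound, and the polynomial-time backward-induction best response over \pmin's sequence form that yields the \NP{} bound when \pmin has perfect recall.
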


\begin{restatable}{theorem}{tmesigmatwohard}\label{th:tme-sigma2-hard}%
    {\sc BehavioralMaxMin} is $\Sigma_2^\P$-hard, even when the number of players is constant and there is no chance.
\end{restatable}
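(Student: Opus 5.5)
We reduce from the canonical $\Sigma_2^\P$-complete problem $\exists\forall$-3-DNF: given a 3-DNF formula $\varphi(\vec x,\vec y)$ with existential variables $x_1,\dots,x_n$ and universal variables $y_1,\dots,y_m$, decide whether $\exists \vec x\,\forall \vec y\;\varphi(\vec x,\vec y)$. Equivalently, with the 3-CNF $\psi=\neg\varphi$, decide whether there is an $\vec x$ such that $\psi(\vec x,\cdot)$ is unsatisfiable. The reduction will use team \tmax (a constant number of members) to encode $\vec x$ and team \tmin to encode $\vec y$ together with a satisfiability check. The value gap is an inverse-polynomial $\eps$.

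\textbf{Gadget for \tmin: one-player NP-hardness in team form.} We adapt the construction behind \Cref{th:basic-hardness} to a team with a constant number of players and no chance. Two \tmin members act simultaneously in the same round, so neither observes the other. Each observes an index chosen by \tmax:
\begin{itemize}
\item Member $\mathrm{A}$ observes a clause index $c$ and must name a literal $\ell\in c$.
\item Member $\mathrm{B}$ observes a variable index $j$ and must output a truth value.
\end{itemize}
\tmin is penalized when $\mathrm{B}$'s value for the variable of $\ell$ contradicts $\ell$ (a consistency check). The consistency check is enforced only when \tmax picks $j=\mathrm{var}(\ell)$. Because the two members are uncorrelated in behavioral play, the best \tmin can do against a pair $(c,j)$ chosen by \tmax is to commit to an assignment. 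This is the usual clause-versus-variable gadget.

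\textbf{Replacing chance with \tmax's randomization.} There is no chance node, so \tmax itself selects $(c,j)$ using its behavioral randomization. \tmax's infosets that choose $c$ and $j$ will not observe anything, so a uniform mixture is available to \tmax and optimal for it.

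\textbf{Gadget for \tmax: encoding $\vec x$.} A separate \tmax member $\mathrm{X}$, queried at a uniformly random (by \tmax) existential variable index, reports $x_i$. Member $\mathrm{B}$'s answers on $x$-variables are then checked against $\mathrm{X}$'s reports. In this check, a mismatch gives \tmax a large payoff. This forces \tmin to follow \tmax's assignment on the $x$-variables, while $\mathrm{B}$ chooses the $y$-variables freely.

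\textbf{Reading off the value.} Consider \tmax committing (purely, WLOG by a convexity argument on $\mathrm{X}$'s infosets) to $\vec x$.
\begin{itemize}
\item If $\psi(\vec x,\cdot)$ is satisfiable, \tmin picks a satisfying $\vec y$ and plays consistently, achieving value at least some threshold $v_1$.
\item If $\psi(\vec x,\cdot)$ is unsatisfiable, every deterministic \tmin profile either fails a clause or is inconsistent with probability at least $1/\poly$.
\end{itemize}
Thus \tmax's behavioral max-min value is $\ge v$ iff $\exists\vec x$ with $\psi(\vec x,\cdot)$ unsatisfiable. Payoff weights are tuned so the gap is $1/\poly(n+m)$.

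\textbf{Main obstacle.} The subtle step is that behavioral (not pure) strategies are allowed on both sides. We must show:
\begin{itemize}
\item \tmin randomizing $\mathrm{A}$ or $\mathrm{B}$ cannot beat the best pure assignment. Since \tmin minimizes a multilinear function over a product of simplices, a pure minimizer exists, which handles this.
\item \tmax randomizing $\mathrm{X}$ cannot fool the consistency check. Fractional $x_i$ make the mismatch penalty average out against any \tmin response. We must argue that \tmin's best response to a mixed $\mathrm{X}$ is no worse for \tmax than the response to some pure $\vec x$.
\end{itemize}
We would first try to make $\mathrm{B}$'s $x$-answers irrelevant to \tmin except via the mismatch penalty, then apply multilinearity in $\mathrm{X}$'s probabilities to round to a vertex. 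This is the delicate calculation where the payoff constants must be chosen carefully.
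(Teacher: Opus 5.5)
There is a genuine gap, and it is at the step you flagged as the main obstacle. The cross-team check between $\mathrm{X}$ (on \tmax) and $\mathrm{B}$ (on \tmin), which rewards \tmax on a mismatch, cannot be made sound.

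The reason is that $\mathrm{B}$ never sees $\mathrm{X}$'s realized report. If $\mathrm{X}$ reports each $x_i$ true with probability $1/2$, then \emph{every} answer of $\mathrm{B}$ mismatches with probability exactly $1/2$ whenever the comparison is triggered. So \tmax secures $\frac{M}{2}\Pr[\text{comparison}]$ against every \tmin strategy, on YES and NO instances alike. Since \tmax chooses the queries, it can even make the comparison certain by always querying both $\mathrm{X}$ and $\mathrm{B}$ on $x_1$. At the same time, a uniformly mixing $\mathrm{X}$ exerts no pull on $\mathrm{B}$'s $x$-answers, so \tmin picks them to satisfy $\psi$ whenever $\psi$ is satisfiable at all.

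This leaves you with no workable choice of $M$. A pure witness $\vec x$ can secure at most the clause-failure payoff through your gadget. Taking $M$ large, which is what forcing \tmin to follow $\mathrm{X}$ requires, lets NO instances exceed that threshold via the formula-independent term. Taking $M$ small removes the forcing.

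The multilinearity rounding cannot repair this. Multilinearity does give the minimizer a pure best response, so your first bullet is fine. But \tmax's objective is a minimum, over \tmin's responses, of multilinear functions. That is not multilinear in $\mathrm{X}$'s probabilities, and here it is maximized at the interior point $p_i=1/2$. This matching-pennies effect arises for any check that rewards one team for being unpredictable to the other.

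The missing idea is to make every consistency check \emph{intra-team}, charging the penalty to the team whose own members disagree. Then uncorrelated randomization is self-defeating, and the teams can be coupled only through the formula. This is what the paper does:
\begin{itemize}
\item Nature draws three $X$-variables and three $Y$-variables. Member $i$ of \tmax (resp.\ \tmin) assigns the $i$-th $X$- (resp.\ $Y$-) variable.
\item Two teammates who answer differently on a common variable cost their own team $M$.
\item \tmax scores $1$ if some term of the 3-DNF is satisfied by the combined answers.
\end{itemize}
A lemma shows that randomizing with probability $p>0$ costs roughly $Mp/n_1^2$ in penalties while gaining at most $p/n_1$. So for $M>n_1$ both teams effectively play pure assignments, and the value is at least $1/n^3$ iff $\exists X\,\forall Y\,\phi$. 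No cross-team matching is needed.

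Once \tmax's own consistency is checked, letting \tmax choose the queries becomes fatal, because \tmax could simply never query two of its members on the same variable. Your claim that uniform is ``available and optimal'' was harmless in your design, but it breaks here. To remove chance, the paper has a \tmax-Nature player draw the queries and a \tmin-Spoiler guess them. \tmax loses $N(n_1^3n_2^3-1)$ on a correct guess and gains $N$ otherwise, which forces the draw to be uniform.
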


We remark that \tmax has A-loss recall in the construction with no chance. This means {\sc BehavioralMaxMin} is also $\Sigma_2^\P$-hard for team games of no chance in which \tmax has A-loss recall.

\subsection{Mixed Nash equilibria}

We now show results for {\sc MixedNash}, namely, we will show that {\sc MixedNash} is $\Delta_2^\P$-complete, again by showing inclusion first and then completeness. Unlike for {\sc BehavioralMaxMin} (\Cref{th:tme-sigma2}), here we will directly construct a separation oracle, and thus be able to recover algorithms for {\em exact} computation.

\begin{restatable}{theorem}{tmecordeltatwo}\label{th:tmecor-delta2}
    {\sc MixedNash} is in $\Delta_2^\P$, even for exact computation $(\eps = 0)$.
\end{restatable}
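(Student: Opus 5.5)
We plan to solve the max-min problem \eqref{eq:maxmin} exactly as a linear program over the convex hull of pure realization-form strategies, using the ellipsoid method with a separation oracle implemented by an \NP{} oracle. Since $\X = \operatorname{conv}(\Pi_\pmax)$ and $\Y = \operatorname{conv}(\Pi_\pmin)$, the value equals
\[
    \max_{\vec x \in \X,\, v \in \R} \; v \qq{s.t.} v \le \vec x^\top \vec U \vec y \quad \text{for all } \vec y \in \Pi_\pmin .
\]
Here the inner minimum over $\Y$ is attained at a pure strategy. The feasible region is a polytope in dimension $|\Z|+1$. Its vertices are $\{0,1\}$ vectors, and the utilities and chance probabilities are rational, so all relevant facets have polynomial encoding length. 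By the Gr\"otschel--Lov\'asz--Schrijver equivalence of separation and optimization, it therefore suffices to give a polynomial-time separation oracle that may call an \NP{} oracle. Ellipsoid then gives an exact rational optimum in polynomial time with an \NP{} oracle, which places the problem in $\P^{\NP} = \Delta_2^\P$ even for $\eps = 0$.

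The separation oracle has two parts, one for the set $\X$ and one for the family of constraints indexed by $\vec y$.

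\textbf{Constraints indexed by $\vec y$.} We need $\min_{\vec y\in\Pi_\pmin} \vec x^\top\vec U\vec y$, which is a one-player best-response problem of imperfect recall: \pmin maximizes a linear function over $\Pi_\pmin$. The decision version, ``is there a pure $\vec y$ with $\vec x^\top \vec U \vec y < v$?'', is in \NP{}, because a pure strategy is a polynomial-size certificate and the linear form can be evaluated in polynomial time. We recover an actual violating $\vec y$ by self-reducibility. We fix the action at one infoset at a time and query the \NP{} oracle each time whether a completion still beats the threshold. The returned $\vec y$ gives the violated inequality $v \le \vec x^\top \vec U \vec y$.

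\textbf{Membership in $\X = \operatorname{conv}(\Pi_\pmax)$.} Here we use optimization over $\X$ instead of a direct separation oracle. By GLS, strong separation and strong optimization are polynomially equivalent for well-described polytopes. Linear optimization over $\Pi_\pmax$ is again the \NP{}-oracle-solvable pure best-response problem, solved by the same self-reduction. We therefore obtain a separation oracle for $\X$ that runs in polynomial time relative to the \NP{} oracle. Alternatively, we can write the whole problem as its dual LP and run ellipsoid on that. Either route leaves only two kinds of oracle call, both of which are linear optimization over pure strategies of one player.

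The main obstacle is the bookkeeping needed to make the GLS machinery apply exactly. We must check that facet complexity is polynomial, which holds because the vertices are $0/1$ and $\vec U$ has polynomial bit length. We must also handle the nested use of ``optimization $\Rightarrow$ separation'' for $\X$ inside the outer ellipsoid run. This composition stays within $\P^{\NP}$, because each level is a polynomial-time algorithm making oracle calls, and oracle calls to a $\P^{\NP}$ procedure collapse into $\P^{\NP}$. Finally, to get the exact value, we round the ellipsoid output to the nearby vertex of the rational polytope using simultaneous Diophantine approximation, as in GLS, which yields the exact value and thus $\eps=0$.
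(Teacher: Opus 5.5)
Your proposal is correct and follows essentially the paper's argument: run the ellipsoid method with a separation oracle in which a violated best-response constraint is found by an \NP{} (integer-programming) oracle, and membership in $\operatorname{co}\Pi_{\pmax}$ is separated through the Gr\"otschel--Lov\'asz--Schrijver equivalence of separation and optimization, where optimization is itself an \NP{}-oracle problem. The paper phrases this as deciding whether the polytope $\{\vec x \in \operatorname{co}\mathcal X : \vec y^\top \vec A \vec x \le v \ \forall \vec y\}$ is empty rather than optimizing $v$, and it is less explicit than you about self-reducibility and exact rounding, but the ingredients are the same.
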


\begin{restatable}{theorem}{tmecordeltatwohard}\label{th:tmecor-delta2-hard}
    {\sc MixedNash} is $\Delta_2^\P$-hard, even when the number of players is constant and there is no chance.
\end{restatable}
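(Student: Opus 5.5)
We reduce from a canonical $\Delta_2^\P$-complete problem. A convenient choice is the problem of deciding whether the lexicographically largest (equivalently, the maximum-weight) satisfying assignment of a SAT formula has its last bit equal to $1$. An equivalent optimization form is: given a CNF $\varphi(x_1,\dots,x_n)$, decide whether $\max\{\sum_j 2^{n-j} x_j : \varphi(x)=1\}$ is odd. This is $\P^\NP$-complete (Krentel).

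The game should capture an $\NP$ oracle on each side. \tmax's correlated (mixed) strategy chooses an assignment $x$. \tmin's correlated strategy then checks satisfiability by picking a clause, so that a falsified clause is punished. Because \tmax's team needs imperfect recall between the \emph{variable players} and the \emph{clause-answering player}, we use the standard \NP-hardness gadget behind \Cref{th:basic-hardness}. Nature or \tmin picks a clause, the clause player of \tmax names a literal, and a variable player who does not know the query must answer consistently. This forces consistency of a pure assignment up to constant loss. Since there should be no chance, we replace Nature's uniform picks by an adversarial pick of \tmin. In the zero-sum setting this forces \tmax to be consistent on every query, which is even cleaner.

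The quantity "is the max-weight satisfying assignment odd" is not a single max-min value. So we intend to use the exact nature of {\sc MixedNash} ($\Delta_2^\P$ allows polynomially many adaptive queries). Concretely, we design the game so that its TMECor value equals $\max_{x : \varphi(x)=1} w(x)$ suitably scaled. \tmax first commits to $x$, receives payoff proportional to $w(x)=\sum_j 2^{n-j}x_j$, and \tmin gets to challenge any clause. A successful challenge, meaning the clause is falsified by the consistent answers, costs \tmax more than $2^n$. Mixed strategies do not help \tmax beyond the best pure assignment, because the challenge term penalizes each unsatisfying pure strategy in its support. \tmin's best response is a pure challenge against the worst point, so correlation gives a convex combination at most the max. \tmin needs at least two players (a challenger and a verifier with different information) so that \tmin is truly imperfect-recall and cannot be simulated by a single perfect-recall best response. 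That feature is what should distinguish this from the \NP-complete team-vs-player case. Then deciding value $\ge v$ for the threshold problem lets us compute the optimum by binary search, but to get hardness of a \emph{single} decision instance we must encode the parity. For that we modify the payoffs to $2w(x)+x_n$-type encodings with a gap, or better, reduce from a problem already phrased as a threshold of a max.

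The main obstacle is this last point. A single threshold query on a max-value is only \NP-hard, so the game itself must embed the $\Delta_2^\P$ computation. The plan is to let \tmin's imperfect recall supply a co\NP\ check that is nested inside \tmax's \NP\ choice in a way that is not $\Sigma_2$. The natural candidate is the "unique/last bit of lexicographic max" structure: \tmax announces $x$, and \tmin may pay to exhibit a satisfying $x'$ that lexicographically beats $x$, using a gadget analogous to the \tmax one, with \tmin's players checking consistency of $x'$ via imperfect recall. The value is then high iff \tmax's $x$ is the true lex-max and $x_n=1$. We must tune payoffs so that mixing by either team gives no advantage, i.e. both best responses are essentially pure and the equilibrium value is exactly determined by the lex-max assignment. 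Verifying that correlated randomization by \tmax over several assignments cannot beat \tmin's challenge is the delicate step. We would handle it by making \tmin's challenge gain strictly dominate any lexicographic improvement, weighted by $2^{n}$-scaled payoffs, so that each support element of \tmax must individually be the lex-max.
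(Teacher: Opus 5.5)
\textbf{Gap: the proposal never builds a game, and its one concrete choice removes the mechanism the reduction needs.} Your closing idea is the right one, and it is essentially the paper's reduction from Last-SAT: \tmin\ also produces a satisfying assignment $x'$, and \tmax\ should come out ahead exactly when its own assignment is the lexicographically last satisfying one and has last bit $1$.

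The lexicographic comparison has to be implemented as a weighted sum. Each team commits to an assignment through variable players, each of whom sees only a queried variable. If the queried index $k$ is \emph{uniform}, then payoffs $+2^{n-k}$ (resp.\ $-2^{n-k}$) when \tmax's (resp.\ \tmin's) player sets $x_k$ true give expected payoff $\frac1n\bigl(w(x)-w(x')\bigr)$. This is negative iff $x <_{\mathrm{lex}} x'$.

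You remove chance by letting \tmin\ pick the queries adversarially. Then the weight terms become an optimum over $k$ chosen by one side rather than this average, and the comparison is lost. You cannot sidestep this by having \tmax\ ``announce'' $x$ and be paid $w(x)$ directly. Announcing $n$ bits along a path creates $2^n$ branches, so in a polynomial-size tree commitment must go through per-variable queries, which brings back the need for a uniformly random query.

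The missing ingredient is a chance-free source of uniform randomness that \emph{neither} team can bias. The paper adds a \tmax-Nature and a \tmin-Nature who choose index triples $I^+, I^-$, and queries $x_{i_j}$ with $i_j = (i_j^+ + i_j^-) \bmod n$. Either team can then unilaterally force the queried triple to be uniform, even conditioned on the other team's choice.

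\textbf{With this symmetric design, the step you call delicate is routine.} The paper's argument runs as follows.
\begin{itemize}
\item \tmin's best response is the pure lexicographically last satisfying assignment $x^*$, since it maximizes $w(x')$ subject to the large inconsistency and clause penalties.
\item Against $x^*$, \tmax's expected payoff is $\frac1n\bigl(\mathbb{E}[w(x)] - w(x^*)\bigr)$ minus nonnegative penalties. These include a $-1$ whenever \tmax\ sets the last bit false.
\item This is $\ge 0$ iff \tmax\ puts all its mass on $x^*$ and $x^*$ has last bit $1$. So with threshold $v=0$, the value is $0$ on yes-instances and bounded away from $0$ otherwise.
\end{itemize}

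\textbf{A smaller point.} If you keep adversarial clause challenges for satisfiability, a penalty ``more than $2^n$'' is not enough. A correlated \tmax\ can spread its violations over different clauses across its support, so any single challenge catches only about a $1/m$ fraction. The penalty must scale with the number of clauses $m$. The paper avoids this by driving the consistency and clause checks with the same random query rather than with the opponent.
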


\section{Experiments}\label{sec:experiments}

\begin{table}[!t]
    \centering
    
\noindent\adjustbox{max width=0.85\paperwidth,center}{%
\newcommand{\mymidrulegray}{\arrayrulecolor{gray}\mymidrule\arrayrulecolor{black}}
\newcommand{\mymidrule}{
\cmidrule(lr){1-1}
\cmidrule(lr){2-2}
\cmidrule(lr){3-4}
\cmidrule(lr){5-6}
\cmidrule(lr){7-8}
\cmidrule(lr){9-9}
\cmidrule(lr){10-11}
\cmidrule(lr){12-13}
\cmidrule(lr){14-15}
\cmidrule(lr){16-17}
}
\begin{tabular}{lSSSSSSS|SSSSS|SSSS}
\toprule
\multicolumn{8}{c}{\bf Original game $G$} & \multicolumn{5}{c}{\bf Belief Game $\tilde G$} & \multicolumn{2}{c}{\bf Team \pmax's DAG} & \multicolumn{2}{c}{\bf Team \pmin's DAG} \\
& {\bf Nodes} & \multicolumn{2}{c}{\bf Infosets} & \multicolumn{2}{c}{\bf Sequences} & \multicolumn{2}{c}{\bf Information} & {\bf Nodes} & \multicolumn{2}{c}{\bf Infosets} & \multicolumn{2}{c}{\bf Sequences} & {\bf Vertices} & {\bf Edges} & {\bf Vertices} & {\bf Edges} \\
\multicolumn{1}{c}{$G$} & $|\mathcal H|$ & $|{\I}_\pmax|$ & $|{\I}_\pmin|$ & $|{\Sigma}_\pmax|$ & $|{\Sigma}_\pmin|$ & $\max_{\mathcal P} |P|$ & $\bm{k}$ & $|\tilde{\H}|$ & $|\tilde{\I}_\pmax|$ & $|\tilde{\I}_\pmin|$ & $|\tilde{\Sigma}_\pmax|$ & $|\tilde{\Sigma}_\pmin|$ & $|{\mc D}_\pmax \cup \mc S_\pmax|$ & $|{E}_\pmax|$ & $|{\mc D}_\pmin \cup \mc S_\pmin|$ & $|{E}_\pmin|$ \\
\mymidrulegray$^3$K3 \teamprint{3} & 151 & 24 & 12 & 48 & 24 & 6 & 6 & 2119 & 486 & 12 & 1062 & 24 & 487 & 918 & 37 & 36 \\
$^3$K4 \teamprint{3} & 601 & 32 & 16 & 64 & 32 & 12 & 8 & 45049 & 4487 & 16 & 9800 & 32 & 2100 & 6711 & 49 & 48 \\
$^3$K6 \teamprint{3} & 3001 & 48 & 24 & 96 & 48 & 30 & 12 & 6768601 & 267184 & 24 & 574588 & 48 & 54255 & 336944 & 73 & 72 \\
$^3$K8 \teamprint{3} & 8401 & 64 & 32 & 128 & 64 & 56 & 16 & 617929873 & 13194749 & 32 & 27978704 & 64 & 1783926 & 15564765 & 97 & 96 \\
$^3$K12 \teamprint{3} & 33001 & 96 & 48 & 192 & 96 & 132 & 24 & \unk & \unk & \unk & \unk & \unk & \unk & \unk & \unk & \unk \\
$^4$K5 \teamprint{3,4} & 7801 & 80 & 80 & 160 & 160 & 20 & 10 & 577764601 & 102725 & 10385 & 221810 & 21740 & 26566 & 124875 & 4621 & 15415 \\
$^4$K5 \teamprint{4} & 7801 & 120 & 40 & 240 & 80 & 60 & 15 & 174273721 & 11739640 & 40 & 25581730 & 80 & 998471 & 4658070 & 121 & 120 \\
\mymidrulegray$^3$L133 \teamprint{3} & 12688 & 456 & 228 & 912 & 456 & 9 & 6 & 1293658 & 96115 & 228 & 208136 & 456 & 23983 & 49005 & 685 & 684 \\
$^3$L143 \teamprint{3} & 40409 & 800 & 400 & 1600 & 800 & 16 & 8 & 52745745 & 2625209 & 400 & 5736592 & 800 & 139964 & 417027 & 1201 & 1200 \\
$^3$L151 \teamprint{3} & 19981 & 1000 & 500 & 2000 & 1000 & 20 & 10 & 152692141 & 16564617 & 500 & 36016124 & 1000 & 150707 & 496196 & 1501 & 1500 \\
$^3$L153 \teamprint{3} & 98606 & 1240 & 620 & 2480 & 1240 & 25 & 10 & 1833113016 & 67400747 & 500 & 147671104 & 1240 & 855397 & 3486091 & 1861 & 1860 \\
$^3$L223 \teamprint{3} & 15659 & 1260 & 630 & 2884 & 1442 & 4 & 4 & 521285 & 47579 & 812 & 100420 & 1624 & 32750 & 45913 & 2437 & 2436 \\
$^3$L523 \teamprint{3} & 1299005 & 99168 & 49584 & 246304 & 123152 & 4 & 4 & 178141285 & 19499329 & 73568 & 40224140 & 147136 & 2911352 & 4183685 & 220705 & 220704 \\
$^4$L133 \teamprint{3,4} & 159001 & 1632 & 1632 & 3264 & 3264 & 9 & 6 & 985916371 & 475081 & 135322 & 1011500 & 292400 & 79351 & 158058 & 75157 & 155475 \\
\mymidrulegray$^3$D3 \teamprint{3} & 27622 & 1023 & 513 & 2046 & 1020 & 9 & 6 & 70704118 & 3235954 & 765 & 5501789 & 1272 & 91858 & 215967 & 1522 & 1521 \\
$^3$D4 \teamprint{3} & 524225 & 10924 & 5460 & 21840 & 10920 & 16 & 8 & \unk & \unk & \unk & \unk & \unk & 4043377 & 13749608 & 16381 & 16380 \\
$^4$D3 \teamprint{2,4} & 663472 & 6144 & 6144 & 12285 & 12285 & 9 & 6 & \unk & \unk & \unk & \unk & \unk & 514120 & 1217310 & 486442 & 1155144 \\
$^6$D2 \teamprint{2,4,6} & 524225 & 4096 & 4096 & 8190 & 8190 & 8 & 6 & 5879066753 & 1094865 & 701001 & 1869170 & 1202948 & 254758 & 457795 & 218570 & 389995 \\
$^6$D2 \teamprint{4,6} & 524225 & 5704 & 2488 & 10920 & 5460 & 16 & 8 & 4992649921 & 15032900 & 33905 & 25363692 & 57194 & 991861 & 2029546 & 46236 & 60717 \\
$^6$D2 \teamprint{6} & 524225 & 6584 & 1608 & 12922 & 3458 & 32 & 10 & 2126796737 & 126748497 & 2532 & 208964598 & 4382 & 3158364 & 7395885 & 5551 & 5550\\
\bottomrule
\end{tabular}%
}
\global\def\savethis{\thelastscalefactor}
\message{The scaling factor is \savethis}

    \caption{Game sizes of the equivalent representations proposed in the paper (that is, belief game and TB-DAG) on several standard parametric benchmark team games. See \Cref{sec:experiments} for a description of the games and for a detailed description of the meaning of each column. 
    Values denoted by `\unk' are missing due to out-of-time or out-of-memory errors.
    }
    \label{tab:game_size}
\end{table}

\begin{table}[!t]
    \centering
    
\noindent\adjustbox{scale=\savethis,center}{%
\newcommand{\mymidrulegray}{\arrayrulecolor{gray}\mymidrule\arrayrulecolor{black}}
\newcommand{\mymidrule}{
\cmidrule(lr){1-5}
\cmidrule(lr){6-8}
\cmidrule(lr){9-11}
}
\begin{tabular}{lrSSS|SSS|SSS}
\toprule
\multicolumn{5}{c|}{\bf Original game} & \multicolumn{3}{c|}{\bf CFR on TB-DAG} & \multicolumn{3}{c}{\bf Column generation} \\
\multicolumn{1}{c}{\bf Game $\{\pmin\}$} & \multicolumn{1}{c}{\bf \pmax Value} & \multicolumn{1}{c}{\bf Nodes} & \multicolumn{2}{c|}{\bf Information} & \multicolumn{3}{c|}{\bf This paper} & \multicolumn{3}{c}{\bf ZFCS22} \\
\multicolumn{1}{c}{$G$} & \multicolumn{1}{c}{$u^*$} & $|\mathcal H|$ & $\max_\mathcal P |P|$ & $\bm{k}$ & Init & $\eps=10^{-3}$ & $\eps=10^{-4}$ & Init & $\eps=10^{-3}$ & $\eps=10^{-4}$\\
\mymidrulegray
$^3$K3 \teamprint{3} & 0.000 & 151 & 6 & 6 & 0.00s & \cbox{0.7843137254901961,0.7843137254901961,1.0}{0.00s} & \cbox{0.7843137254901961,0.7843137254901961,1.0}{0.00s} & 0.00s & \cbox{0.7843137254901961,0.7843137254901961,1.0}{0.00s} & \cbox{0.7843137254901961,0.7843137254901961,1.0}{0.00s} \\
$^3$K4 \teamprint{3} & -0.042 & 601 & 12 & 8 & 0.01s & \cbox{0.7843137254901961,0.7843137254901961,1.0}{0.00s} & \cbox{0.7843137254901961,0.7843137254901961,1.0}{0.00s} & 0.00s & \cbox{0.7843137254901961,0.7843137254901961,1.0}{0.01s} & \cbox{0.823683198769704,0.823683198769704,0.9852364475201846}{0.02s} \\
$^3$K6 \teamprint{3} & -0.024 & 3001 & 30 & 12 & 1.03s & \cbox{0.7843137254901961,0.7843137254901961,1.0}{0.03s} & \cbox{0.8876585928489042,0.8876585928489042,0.9612456747404844}{0.12s} & 0.00s & \cbox{0.8962706651287966,0.8962706651287966,0.9580161476355248}{0.14s} & \cbox{0.8962706651287966,0.8962706651287966,0.9580161476355248}{0.14s} \\
$^3$K8 \teamprint{3} & -0.019 & 8401 & 56 & 16 & 1m 6s & \cbox{0.9598615916955017,0.8913494809688581,0.8913494809688581}{4.73s} & \cbox{1.0,0.7843137254901961,0.7843137254901961}{32.36s} & 0.01s & \cbox{0.7843137254901961,0.7843137254901961,1.0}{0.23s} & \cbox{0.8052287581699347,0.8052287581699347,0.9921568627450981}{0.32s} \\
$^3$K12 \teamprint{3} & -0.014 & 33001 & 132 & 24 & \unk & \cbox{1.0,0.7843137254901961,0.7843137254901961}{oom} & \cbox{1.0,0.7843137254901961,0.7843137254901961}{oom} & 0.01s & \cbox{0.7843137254901961,0.7843137254901961,1.0}{0.84s} & \cbox{0.7843137254901961,0.7843137254901961,1.0}{1.39s} \\
$^4$K5 \teamprint{3,4} & -0.037 & 7801 & 20 & 10 & 0.55s & \cbox{0.7843137254901961,0.7843137254901961,1.0}{0.03s} & \cbox{0.8322952710495963,0.8322952710495963,0.9820069204152249}{0.05s} & \unk & \unk & \unk \\
$^4$K5 \teamprint{4} & -0.030 & 7801 & 60 & 15 & 13.71s & \cbox{0.7843137254901961,0.7843137254901961,1.0}{1.59s} & \cbox{0.8778162245290273,0.8778162245290273,0.9649365628604383}{6.34s} & \unk & \unk & \unk \\
\mymidrulegray
$^3$L133 \teamprint{3} & 0.215 & 12688 & 9 & 6 & 0.49s & \cbox{0.7843137254901961,0.7843137254901961,1.0}{0.02s} & \cbox{0.8556708958093041,0.8556708958093041,0.9732410611303345}{0.05s} & 0.02s & \cbox{1.0,0.7843137254901961,0.7843137254901961}{24.89s} & \cbox{1.0,0.7843137254901961,0.7843137254901961}{45.96s} \\
$^3$L143 \teamprint{3} & 0.107 & 40409 & 16 & 8 & 1.39s & \cbox{0.7843137254901961,0.7843137254901961,1.0}{0.10s} & \cbox{0.8876585928489042,0.8876585928489042,0.9612456747404844}{0.48s} & 0.05s & \cbox{1.0,0.7843137254901961,0.7843137254901961}{2m 4s} & \cbox{1.0,0.7843137254901961,0.7843137254901961}{6m 3s} \\
$^3$L151 \teamprint{3} & -0.019 & 19981 & 20 & 10 & 1.54s & \cbox{0.7843137254901961,0.7843137254901961,1.0}{0.18s} & \cbox{0.8519800076893502,0.8519800076893502,0.9746251441753172}{0.50s} & 0.04s & \cbox{0.9543252595155709,0.9061130334486736,0.9061130334486736}{3.06s} & \cbox{0.9930795847750865,0.8027681660899654,0.8027681660899654}{13.98s} \\
$^3$L153 \teamprint{3} & 0.024 & 98606 & 25 & 10 & 16.03s & \cbox{0.7843137254901961,0.7843137254901961,1.0}{1.24s} & \cbox{0.8778162245290273,0.8778162245290273,0.9649365628604383}{4.94s} & 0.12s & \cbox{1.0,0.7843137254901961,0.7843137254901961}{7m 23s} & \cbox{1.0,0.7843137254901961,0.7843137254901961}{28m 13s} \\
$^3$L223 \teamprint{3} & 0.516 & 15659 & 4 & 4 & 0.13s & \cbox{0.7843137254901961,0.7843137254901961,1.0}{0.03s} & \cbox{0.8470588235294118,0.8470588235294118,0.9764705882352941}{0.08s} & 0.05s & \cbox{1.0,0.7843137254901961,0.7843137254901961}{13.48s} & \cbox{1.0,0.7843137254901961,0.7843137254901961}{18.53s} \\
$^3$L523 \teamprint{3} & 0.953 & 1299005 & 4 & 4 & 18.02s & \cbox{0.7843137254901961,0.7843137254901961,1.0}{11.26s} & \cbox{0.8384467512495194,0.8384467512495194,0.9797001153402537}{24.86s} & 6.83s & \cbox{1.0,0.7843137254901961,0.7843137254901961}{$>$ 6h} & \cbox{1.0,0.7843137254901961,0.7843137254901961}{$>$ 6h} \\
$^4$L133 \teamprint{3,4} & 0.147 & 159001 & 9 & 6 & 2.03s & \cbox{0.7843137254901961,0.7843137254901961,1.0}{0.21s} & \cbox{0.885198000768935,0.885198000768935,0.9621683967704728}{0.92s} & \unk & \unk & \unk \\
\mymidrulegray
$^3$D3 \teamprint{3} & 0.284 & 27622 & 9 & 6 & 0.80s & \cbox{0.7843137254901961,0.7843137254901961,1.0}{0.11s} & \cbox{0.8716647443291041,0.8716647443291041,0.9672433679354094}{0.40s} & 0.09s & \cbox{1.0,0.7843137254901961,0.7843137254901961}{11.05s} & \cbox{1.0,0.7843137254901961,0.7843137254901961}{11.05s} \\
$^3$D4 \teamprint{3} & 0.284 & 524225 & 16 & 8 & 1m 3s & \cbox{0.7843137254901961,0.7843137254901961,1.0}{22.54s} & \cbox{0.8778162245290273,0.8778162245290273,0.9649365628604383}{1m 28s} & 1.57s & \cbox{1.0,0.7843137254901961,0.7843137254901961}{3h 19m} & \cbox{1.0,0.7843137254901961,0.7843137254901961}{3h 19m} \\
$^4$D3 \teamprint{2,4} & 0.200 & 663472 & 9 & 6 & 27.05s & \cbox{0.7843137254901961,0.7843137254901961,1.0}{2.31s} & \cbox{0.8322952710495963,0.8322952710495963,0.9820069204152249}{4.70s} & \unk & \unk & \unk \\
$^6$D2 \teamprint{2,4,6} & 0.072 & 524225 & 8 & 6 & 10.74s & \cbox{0.7843137254901961,0.7843137254901961,1.0}{1.72s} & \cbox{0.8458285274894272,0.8458285274894272,0.9769319492502884}{4.26s} & \unk & \unk & \unk \\
$^6$D2 \teamprint{4,6} & 0.265 & 524225 & 16 & 8 & 16.55s & \cbox{0.7843137254901961,0.7843137254901961,1.0}{3.80s} & \cbox{0.8569011918492887,0.8569011918492887,0.9727797001153402}{11.09s} & \unk & \unk & \unk \\
$^6$D2 \teamprint{6} & 0.333 & 524225 & 32 & 10 & 31.00s & \cbox{0.7843137254901961,0.7843137254901961,1.0}{30.20s} & \cbox{0.8421376393694733,0.8421376393694733,0.9783160322952711}{1m 11s} & \unk & \unk & \unk\\
\bottomrule
\end{tabular}
}

    \caption{Runtime of our CFR-based algorithm (column `\textbf{This paper}') using the TB-DAG, compared to the prior state-of-the-art algorithms based on linear programming and column generation by \citet{Zhang22:Optimal} (`\textbf{ZFCS22}'), on several standard parametric benchmark games. See \Cref{sec:experiments} for a description of the games. \changed{Column {\em ``$u^*$''} is the TMECor value of the game.} Column \emph{``Init''} represents the time needed to construct the structures needed for solving the games. This corresponds to fully exploring the TB-DAG and computing its full representation in memory in the TB-DAG case. \changed{Columns ``$\eps=10^{-3}$'' and ``$\eps=10^{-4}$'' list the amount of time required for that algorithm to compute an $\eps$-approximate TMECor of the game.} Missing or unknown values are denoted by`\unk'. For each row, the background color of each runtime column is set proportionally to the ratio with the best runtime for the row, according to the logarithmic color scale \thecolorscale\!\!\!\!.
    Runtimes that are more than two orders of magnitude larger than the best runtime for the row (i.e., for which $R > 10^2$) are colored as if $R = 10^2$.\\
    }
    \label{tab:running_time}
\end{table}

This section investigates the empirical benefits brought about by applying the TB-DAG when computing mixed-Nash equilibria.
As highlighted in \Cref{sec:introduction,sec:related_works}, the literature on team games has been the one most concerned with the efficient computation of mixed Nash, with different works establishing benchmarks and proposing algorithms. We will, therefore, focus on comparing our approach against those previous related works.
Our main results are reported in \Cref{tab:game_size}, which reports the size of the original games and our derived representations, and in \Cref{tab:running_time}, which reports the time required to solve those instances up to an approximation factor.

\subsection{Postprocessing techniques}\label{sec:discussion:trivial_reductions}
In practice, \ref{al:tb-dag} is suboptimal in several ways. Here, we state some straightforward postprocessing techniques that can be used to shrink the size of the TB-DAG. These do not affect the theoretical statements as the primary focus of those is isolating the dependency on our parameters of interest, but they can significantly affect the practical performance, so we apply them in the experiments.
\begin{enumerate}
    \item If two terminal nodes $z, z'$ have the same sequence, we remove one of them (say, $z'$) from our DAG because it is redundant, and alias $\vx[\{z'\}]$ to $\vx[\{z\}]$. If this removal causes a section of the DAG to no longer contain any terminal descendants, we also remove that section.
    \item If a decision point in the TB-DAG has (at most) one parent and (at most) one child, we remove the decision point and directly connect the parent observation node to the grandchild decision points.
\end{enumerate}
In particular, if the team has perfect recall, the above two optimizations are sufficient for the TB-DAG to coincide with the sequence form.

\subsection{Experimental setting}

First, we give a complete description of the experimental setting in which the different algorithms are tested.

\paragraph{Game instances} We run experiments on commonly adopted parametric benchmarks in the team games literature. %
The following is the naming convention adopted for the instances considered:
\begin{itemize}
    \item {\bf $^n$K$r$}: $n$-player Kuhn poker with $r$ ranks~\citep{Kuhn50:Simplified}.
    \item {\bf $^n$L$brs$}: $n$-player Leduc poker with a $b$-bet maximum in each betting round, $r$ ranks, and $s$ suits~\citep{Southey05:Bayes}. 
    \item {\bf $^n$D$d$}: $n$-player Liar's Dice with one $d$-sided die for each player~\citep{Lisy15:Online}. 
\end{itemize}
The full description of these games can be found in \citet{Farina21:Connecting}.
For each game, the players belonging to team \tmin are represented along with the name. For example, $^4$L133 \teamprint{3,4} indicates a 4-player Leduc poker game with $1$ bet each round, $3$ ranks, $3$ suits, where players $3$ and $4$ belong to team \tmin and are therefore coordinated by player \pmin.

\paragraph{CFR Variant used}
We implemented the \textit{Predictive CFR$^+$ (PCFR$^+$)}~\citep{Farina21:Faster} state-of-the-art variant of CFR on the TB-DAG. PCFR$^+$ is a predictive regret minimization algorithm and uses quadratic averaging of iterates. At each time $t$, we use the previous utility vector for each time as the prediction for the next. 
We remark that applying the CFR algorithm on the belief game and on the TB-DAG leads to identical iterations since the two representations are structurally equivalent (as proven in \Cref{sec:tbdag}), and CFR is a deterministic algorithm. We therefore focus on the TB-DAG representation due to its efficiency. We also remark that the optimizations discussed in \Cref{se:bf reduction proof,sec:discussion:trivial_reductions} are applied during the experiments.

\paragraph{Baselines}
We use the column generation framework of \citet{Farina21:Connecting} and refined by \citet{Zhang22:Optimal} (henceforth ``\textbf{ZFCS22}'') as the prior state-of-the-algorithm to compare the performance of CFR on the team belief DAG. ZFCS22 belongs to the family of column generation approaches adopted in the past literature in team games and described in \Cref{sec:related_works}. ZFCS22 iteratively refines the strategy of each team by solving best-response problems using a tight integer program derived from the theory of extensive-form correlation~\citep{Stengel08:Extensive}. We used the original code by the authors, which was implemented for three-player games in which a team of two players faces an opponent.

\paragraph{Hardware used}
All experiments were run on a 64-core AMD EPYC 7282 processor. Each algorithm was allocated a maximum of 4 threads, 60GBs of RAM, and a time limit of 6 hours. ZFCS22 uses the commercial solver Gurobi to solve linear and integer linear programs. All CFR implementations are single-threaded, while we allowed Gurobi to use up to four threads.

\subsection{Discussion of the results}
We now discuss the empirical results obtained by our algorithms.

\paragraph{Representation vs Game size}
We analyze the size results from \Cref{tab:game_size}. The different orders of magnitude of the size of each representation and the original game highlight how the belief game construction increases the size of the game. Moreover, the striking difference between the two equivalent approaches of belief game and TB-DAG motivates the introduction of the latter: the direct construction of a decision problem and the more efficient representation brought by the DAG structure allow the construction of a substantially smaller representation. The benefits of the DAG imperfect-recall structure are especially beneficial in the case of Liar's Dice instances, which have a larger depth of the game tree.
Overall, this comparison confirms the results from the worst-case bounds from \Cref{sec:worst-case-size-aux,sec:worst_case_size_tbdag,sec:discussion:information_complexity}. The exponential factor of inefficiency between the two representations agrees with the results from the discussion in \Cref{sec:discussion:tree_vs_dag}.

There are also some minor remarks that are worth to be made. 
Whenever \pmin is a perfect-recall player (equivalently, when the team \tmin is composed of a single player), our constructions never increase the size of its decision problem. In the case of the belief game, we have that the adversary retains an identical number of information sets and sequences. In the case of the TB-DAG, the correspondence is $|\mathcal D_\pmin| = |\tilde{\mathcal I}_\pmin| + 1$ and $|\mathcal S_\pmin|= |\tilde{\Sigma}_\pmin|$

\paragraph{Running time}
We focus on the time performance of CFR applied to the games from \Cref{tab:running_time}. The main observation is that the TB-DAG approach combined with the CFR algorithm has good performance in most of the games traditionally employed in the team game literature. 
In particular, impressive performance is achieved in games where the information complexity is low. This is the case of Leduc and Liar's Dice benchmarks (whose number of infosets and sequences in the original game are reported in \Cref{tab:game_size}). On the other hand, the column generation approaches struggle since the dimension of the pure strategy space depends exponentially on the number of information sets.
The performance of our method depends crucially on having low information complexity. In fact, in games such as  $^3$K8 and $^3$K12 where the information complexity is high, we observe poor performance even though the game tree is small. %
On the other hand, column generation techniques avoid this cost by considering an incrementally larger action space. 

\section{Conclusion}\label{sec:conclusion}
This paper proposed a novel two-player zero-sum representation called the \emph{team-belief DAG} for the computation of mixed Nash equilibria in timeable two-player zero-sum imperfect-recall games and team max-min equilibria with correlation in adversarial team games. We proposed a conversion mechanism that can be interpreted from the point of view of a perfect-recall coordinator which manages all the player's strategic choices while not accessing any information destined to be imperfectly recalled. 
The behavior of such a coordinator is defined based on beliefs and observations, novel concepts that allow an intuitive yet effective characterization. 
We also introduced a DAG decision problem structure for the TB-DAG to characterize more efficiently our conversion, by avoiding the pitfalls of an extensive-form characterization of the equivalent game. %
We theoretically analyzed the efficiency of our method through worst-case bounding of the size of the converted game, and we experimentally tested it on a set of customary benchmark games against a state-of-the-art approach from the literature.
Our results are accompanied by novel complexity results that further characterize the hardness of computing equilibria in imperfect-recall games. In particular, we prove that computing a max-min strategy in behavioral strategies is $\Sigma_2^\P$-hard even when the number of players is constant and there is no chance. Similarly, we prove that computing a Nash equilibrium in mixed strategies is $\Delta_2^\P$-hard.

Many directions departing from this work can be interesting for further development of the literature on imperfect-recall and team games.
In particular, designing an algorithm able to exploit both the TB-DAG representation and the incrementality of column generation is an interesting approach to surpass the previous developments. Moreover, the TB-DAG construction may possibly be improved by preprocessing the game to reduce its information complexity, mitigating the exponential blowup due, while generalizing the notion of {\em triangle-free games}~\cite{Farina20:Polynomial} to DAG games may extend the class of games that can be solvable in polynomial time.
Another possible direction follows the more traditional two-player zero-sum literature. It aims to develop specific abstraction, dynamic pruning, and subgame-solving techniques tailored to our conversion's resulting two-player zero-sum games.
Finally, the question whether some of the results presented in the paper can be extended to the non-timeable or absent-minded imperfect-recall case is open.

\section*{Acknowledgements}
The work of Prof. Gatti’s research group is funded by the FAIR (Future Artificial Intelligence Research) project, funded by the NextGenerationEU program within the PNRR-PE-AI scheme (M4C2, Investment 1.3, Line on Artificial Intelligence). The work of Prof. Sandholm’s group is supported by the Vannevar Bush Faculty Fellowship ONR N00014-23-1-2876, National Science Foundation grants RI-2312342 and RI-1901403, ARO award W911NF2210266, and NIH award A240108S001. Brian Hu Zhang's work is also supported in part by the CMU Computer Science Department Hans Berliner PhD Student Fellowship.

\bibliographystyle{plainnat}
\bibliography{dairefs}

\appendix

\renewcommand{\thesection}{\Alph{section}} %
\makeatletter
\def\@seccntformat#1{\@ifundefined{#1@cntformat}%
   {\csname the#1\endcsname.\hspace{0.5em}}%
   {\csname #1@cntformat\endcsname}}%
\newcommand\section@cntformat{\appendixname\thesection.\hspace{0.5em}}
\makeatother

\section{Omitted Proofs from \Cref{sec:dag-decision-problems}}
\label{app:proof:dag-decision-problems}
\daggeneric*
\begin{proof}
Using the fact that $\vec D$ is a bijection, we have
    \begin{align}
    R^T_{\mc Q} &= \max_{\vx \in {\mc Q}} \sum_{t=1}^T \ip{\vu^t, \vx - \vx^t} 
    \\&= \max_{\hat\vx \in \hat{\mc Q}} \sum_{t=1}^T \ip{\vu^t, \vec D\vx - \vec D\hat\vx^t} 
    \\&= \max_{\hat\vx \in \hat{\mc Q}} \sum_{t=1}^T \ip{\vec D^\top \vu^t, \vx - \vx^t} = \hat R^T_{{\mc Q}}. \tag*{\qedhere}
    \end{align}
\end{proof}

\dagrm*
\begin{proof}
   Let $s \in \mc D$ be any decision point of $\mc Q$, and let $\hat s \in \hat{\mc D}$ be any decision point in $\hat{\mc Q}$ with $\delta(\hat s) = s$. It is enough to show that the sequence of utility vectors observed by $\mc R_{\hat s}$  when running \ref{al:dag-rm-generic} with \ref{al:cfr} as $\hat{\mc R}$ is the same as the sequence of utility vectors observed by $\mc R_s$ in \ref{al:dag-cfr}. We show this by induction on the decision points $\hat s$, leaves first.

   First, if $\hat s$ has no decision point descendants, then the claim is trivial because, by construction of $\vec D^\top$, we have $\hat\vu^t[\hat sa] = \vu^t[sa]$ for every $a \in A_s$. Now let $\hat s \in \hat{\mc D}$ be any internal node and $s = \delta(\hat s)$. By inductive hypothesis, for every decision point descendant $\hat s'$ of $\hat s$, at every timestep $t$, $\mc R_{\hat s'}$ receives the same utility vector as $\mc R_{s'}$ where $s' = \delta(\hat s')$, and thus produces the same behavioral strategy $\vec r_{s'}^t = \hat{\vec r}_{\hat s'}^t$. Thus, at any timestep $t$ the utility vector $\hat{\vec v}^t[\hat s*]$ that is passed to $\mc R_{\hat s}$ is given by
   \begin{align}
       \hat{\vec v}^t[\hat sa] &= \hat\vu^t[\hat sa] + \sum_{\hat s' : p_{\hat s'} = \hat sa} \ip{\hat{\vec r}^t_{\hat s'}, \hat{\vec v}^t[\hat s'*]}
       \\&= \vec u^t[sa] + \sum_{\hat s' : p_{\hat s'} = \hat sa} \ip{ {\vec r}^t_{s'}, {\vec v}^t[s'*]}
       \\&= \vec u^t[sa] + \sum_{s' : sa \in P_s} \ip{ {\vec r}^t_{s'}, {\vec v}^t[s'*]}
       \\&= \vec v^t[sa]
   \end{align}
where once again we use the notation $s' := \delta(\hat s')$, and the inductive hypothesis is used in the second equality on every term in the sum.
\end{proof}

\section{Omitted Proofs from \Cref{sec:tbdag}}
\label{app:proof:tbdag}

\fixedparameterhard*

\changed{
\begin{proof}
    We reduce from $k$-CLIQUE. Given a graph $G = (V, E)$, we construct the following game with a single team with two members. Nature selects two integers $j_1, j_2 \in [k]$ uniformly at random. Both team members observe their indices privately and select vertices $v_1(j_1), v_2(j_2) \in V$. If $j_1 = j_2$, the team scores $1$ if and only if $v_1 = v_2$; otherwise the team scores $0$. If $j_1 \ne j_2$, the team scores $1$ if and only if $(v_1, v_2)$ is an edge in the graph; otherwise the team scores $0$. This game has $2k$ information sets ($k$ for each player), so the information complexity is certainly bounded by $2k$.

    We claim that there is a strategy profile of the team that scores $1$ if and only if a $k$-clique exists in $G$. Indeed, if a $k$-clique exists in $G$, then the following profile scores $1$: let $(v(1), \dots, v(k))$ be a $k$-clique. Each player plays the pure strategy $v_i(j_i) = v(j_i)$.

    Conversely, a pure profile is a pair of functions $v_1, v_2 : [k] \to V$. If this profile scores $1$, then we must have $v_1 = v_2 =: v$ (because otherwise the team scores $0$ when both players are given an index $i$ for which $v_1(i) \ne v_2(i)$), and $(v(i), v(j))$ is an edge for all $(i, j)$ (otherwise the team scores $0$ when one player is given $i$ and the other player is given $j$). Thus $v$ is a $k$-clique.
\end{proof}
}

\section{Omitted Proofs from \Cref{sec:complexity}}
\label{app:proof:complexity}

\basichardness*

\begin{proof}
    We essentially follow the proof of \citet{Chu01:NP}, but we are more explicit about ordering the turns so that we can later bound the information complexity of our game.  Given a 3-CNF formula $\phi$ with $m$ clauses and $n$ variables, construct the following game with two team members on the maximizing team and no opponent. Nature picks a random clause $j \in [m]$ and a random variable $x_i$ that appears in clause $j$. P1 privately observes the variable and is asked to assign a value (either true or false) to it. P2 then privately observes the clause $j$ (but not the variable $i$, nor P1's assignment), and must pick one of the three variables in the clause. The team wins if P2 picks variable $i$ and P1 assigns the value to $i$ that makes the clause true.

    Note that the pure strategies of P1 are precisely the assignments $x \in \{0, 1\}^n$. We claim that, if P1 plays assignment $x$, the best value for the team is precisely $1/3$ of the maximum number of clauses satisfied by $x$. To see this, note that the team will lose with probability $2/3$ no matter what since P2 has only a $1/3$ chance of picking the same variable that was selected by Nature. Conversely, if P2 happens to pick the right variable, the team wins if and only if P1's assignment to that variable satisfies the clause $j$. Thus, P2's best strategy, given P1 strategy $x$, is always to pick the satisfying variable in the clause, if there is one. By construction, this achieves the desired value.

    Finally, by the PCP theorem~\cite{Haastad01:Some}, the above proof also shows that there exists an absolute constant $\eps$ such that computing the optimal value in a team game with no adversary to accuracy $\eps$ is \NP-hard. 
\end{proof}

\corkdependenceisoptimal*
\begin{proof}
The information complexity of the game used in the above construction is\footnote{Here we use the ordering of the players: namely, we have $k=n$ only because P1 plays before P2. If the order of the players were flipped, we would instead have $k=m$.} $k=n$, and the branching factor can be made an absolute constant by splitting the root chance node into $\Theta(\log m)$ layers. Finally, the size of the game is $O(mn)$. Thus, \Cref{th:main} implies a SAT-solving algorithm whose runtime is $2^{O(k)}$. Thus, in particular, if the $k$ were replaced by any $o(k)$ term, then SAT would have a $2^{o(n)}$-time algorithm, violating ETH.
\end{proof}

\tmesigmatwo*
\begin{proof}
Consider a behavioral max-min strategy represented by a distribution over the actions at each information set $I$. Let $\delta > 0$, and consider rounding each entry of the behavioral-form strategy by at most an additive $\delta$ so that the resulting strategy is rational. Let $\vx'$ be the correlation plan of the resulting strategy. Thus, for any given terminal node $s$, the resulting reach probability $x'[s]$ is perturbed by at most an additive $O(N \delta)$ where $N$ is the number of nodes in the game. Thus, $\norm{\vx' - \vx}_1 \le O(N^2 \delta)$. Thus, for any realization-form strategy $\vy$ for the opponent, we have $\abs{\ip{\vx' - \vx, \vec A \vy}} \le \norm{\vx' - \vx}_1 \norm{\vec A \vy}_\infty \le O(N^2 \delta)$, so $x'$ is $O(N^2 \delta)$-close to the optimal solution. Taking $\delta < O(\eps/N^2)$ thus concludes the proof.
\end{proof}

\tmesigmatwohard*
\begin{proof}
    We first give a reduction involving chance, then show how to relax this condition. We reduce from $\exists \forall$3-SAT, which is known to be $\Sigma_2^\P$-complete~\cite{Schaefer02:Completeness}. The $\exists \forall$3-SAT problem is to, given a $3$-DNF formula $\phi(X, Y)$, determine whether $\exists X\ \forall Y\ \phi(X, Y)$ holds.

    Given a 3-DNF formula $\phi$ with $m$ clauses, $n_1$ variables in $X$, and $n_2$ variables in $Y$, construct the following game between \tmax with $3$ players and \tmin with $3$ players. Nature chooses three variables $x_1, x_2, x_3$ from $X$ and three variables $y_1, y_2, y_3$ from $Y$. For each variable $x_i$ (respectively $y_i$), Player $i$ of \tmax (respectively \tmin) is asked for an assignment to the variable.

    If any two players of \tmax (respectively \tmin) have the same variable but differ in their assignment, \tmax gets value $-M$ (respectively $M$) where $M$ is a large value. In addition, \tmax gets value $1$ if at least one term in the 3-DNF $\phi$ is satisfied by the assignments of \tmax and \tmin.

    Let $n = \max(n_1, n_2)$. We complete the proof by showing that \tmax gets at least $1/n^3$ if and only if $\exists X\ \forall Y\ \phi(X, Y)$ holds; otherwise, their value is at most $0$. We first show that for large enough $M$, since players of \tmax cannot correlate, \tmax's pure strategies are dominant over non-pure ones.

    \begin{lemma}
        Let $x \in X$ be a variable and $p \leq 1/2$ be the probability that Player $i$ plays their less-likely action for $x$ in a behavioral strategy. If $p > 0$ and $M \geq n_1$, then this strategy is strictly dominated by the strategy under which Player $i$ only plays their more-likely action.
    \end{lemma}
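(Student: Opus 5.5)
The plan is a one-coordinate perturbation argument. Fix the strategies of all other \tmax players and of all \tmin players. Expected utility is multilinear in the behavioral probabilities, so it is affine in Player $i$'s mixing probability at variable $x$. Write $a$ for Player $i$'s more-likely action at $x$ and $\bar a$ for the less-likely one, chosen with probability $p \in (0, 1/2]$. Let $\sigma$ be the modified strategy that plays $a$ with probability $1$ at $x$ and is unchanged elsewhere. The goal is to show that, for every fixed \tmin profile, the utility of $\sigma$ strictly exceeds that of the original strategy; then $\sigma$ strictly dominates it.

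First I would isolate the only terminal events whose payoff can change. These are the chance outcomes in which Nature assigns variable $x$ to Player $i$, which happens with some probability $q > 0$. Conditional on such an outcome, I compare playing $\bar a$ with playing $a$. The change from $\bar a$ to $a$ has three effects.

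\begin{enumerate}[(i)]
\item It can change whether some DNF term is satisfied. This affects the payoff by at most $1$ in absolute value.
\item It can create or remove an inconsistency penalty with a teammate $j \ne i$ who was also given $x$. This event has probability at least $1/n_1$ by symmetry of Nature's choice: assume Nature picks the three variables independently and uniformly, so each teammate receives $x$ with probability $1/n_1$.
\item The penalty terms are what drive the comparison, handled next.
\end{enumerate}

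The key step is bounding the penalty difference. Consider a teammate $j$ who receives $x$. Teammate $j$'s own probability of playing $\bar a$ at $x$ is either at most $1/2$, or their majority is $\bar a$. The cleanest route is to argue against the alternative: compare both deterministic choices with the current mixture. Since utility is affine in $p$, the better of $p=0$ and $p=1$ is at least as good as the current $p$. It is strictly better unless the slope is zero, and the mixture is strictly worse than the better endpoint whenever $p \in (0,1)$ and the slope is nonzero. So I will show that the slope toward $a$ is positive.

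Here is where the definition of "more-likely" must be used. I would compare Player $i$'s choice against teammates' conditional play, and expect that the intended argument is applied symmetrically to all players, or inductively, with the expected inconsistency penalty term $M \cdot \Pr[\text{mismatch}]$ favoring agreement with the majority.

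The main obstacle is that teammates' majorities at $x$ may disagree with Player $i$'s majority. To handle this, I would first apply the reduction to make all players agree. Concretely, I would process players in order and use the fact that penalty $M \ge n_1$ times a match probability of at least $1/n_1$ outweighs the at most $1$ unit of DNF gain. This yields a nonzero slope whose sign is set by the penalty. Committing to that, the penalty-driven slope is at least $M(1-2p)/n_1 - 1$.

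This bound is strict only when $p<1/2$ or $M>n_1$. At $p=1/2$ the "more-likely" choice is arbitrary, and I would settle ties by breaking them toward the endpoint with the larger affine value.
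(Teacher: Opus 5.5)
\textbf{There is a genuine gap, and it sits exactly at the obstacle you flagged.} Your plan is to show that, with everyone else fixed, the utility (affine in Player $i$'s probability $p$ of $\bar a$) has positive slope toward $a$. That step fails in general.

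For a teammate $j$ who also receives $x$ and plays $\bar a$ there with probability $q_j$, the mismatch probability is $p(1-q_j)+(1-p)q_j$, whose derivative in $p$ is $1-2q_j$. So the penalty part of the slope is governed by the teammates' $q_j$, not by $p$ (as it must be, by the affinity you invoked). It has the wrong sign whenever teammates lean toward $\bar a$.

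Concretely, suppose every teammate plays $\bar a$ at $x$ with probability $1$ and all players are consistent elsewhere. Switching Player $i$ to $a$ raises the mismatch probability from $1-p$ to $1$. The extra penalty, of order $2Mp/n_1^2$, beats the at most $p/n_1$ change in the DNF payoff. So under the hypothesis $M\ge n_1$, your $\sigma$ is strictly \emph{worse} against every opponent profile. The per-profile strict improvement toward the more-likely action that you set out to prove is false in general.

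Your remedy, first making all players agree by processing them in order, invokes the very reduction being proved and is never carried out. The bound $M(1-2p)/n_1-1$ you then commit to silently sets $q_j=p$. Even granting that, it is positive only for $p<\tfrac12(1-n_1/M)$. It therefore fails for every $p>0$ when $M=n_1$, and for $p$ near $1/2$ when $M>n_1$. Your remark that it is strict ``when $p<1/2$ or $M>n_1$'' is incorrect.

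\textbf{The paper never tries to sign this slope.} Its key device is a lower bound on the mismatch probability that holds uniformly in the teammate's behavior:
\[
p(1-q)+q(1-p)=p+q(1-2p)\ge p .
\]
This is precisely where $p\le 1/2$ is used. Consider the event, of probability more than $1/n_1^2$, that $x$ goes to Player $i$ and a teammate. On it, Player $i$'s mixing incurs expected penalty at least $Mp/n_1^2$. Playing $\bar a$ can gain at most $1$, on an event of probability $p/n_1$. Comparing $Mp/n_1^2$ with $p/n_1$ gives the conclusion for $M>n_1$ (strictly, not $M\ge n_1$).

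That sketch implicitly charges no penalty to the pure strategy, so it also passes over the disagreeing-teammate case rather than resolving it. But it rests on a valid $q$-uniform inequality, whereas your argument ends on a slope formula that is wrong. Your per-profile route toward Player $i$'s own majority cannot be completed as stated; that $q$-uniform lower bound is the ingredient your proposal is missing.
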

    \begin{proof}
        Whenever variable $x$ is picked for Player $i$ and one of their teammate (probability strictly larger than $1/n_1^2$), the penalty incurred by the two players is strictly more than $(M/n_1^2)(p(1-q) + q(1-p)) \ge (M/n_1^2)(p + q(1-2p)) \ge (M/n_1^2)p$, where q is the probability that the teammate of Player $i$ picks Player $i$'s less-likely action for $x$. On the other hand, Player $i$ gains no more than $1$ by playing their less-likely action (probability $p/n_1$). Hence, if $M > n_1$, any strategy with $p > 0$ is strictly dominated by a pure strategy.
    \end{proof}
    The intuition behind this result is the following: the potential gain by deviating from a pure strategy is proportional to the probability of deviation ($p$ here), while the penalty is also proportional to this probability as well. Hence, non-zero deviation is strictly dominated if the penalty is large enough.

    The pure strategies of a player of \tmax (respectively \tmin) are precisely the assignments in $\{0,1\}^X$ (respectively in $\{0,1\}^Y$). By a similar argument, it is straightforward to show that it is a dominant strategy for \tmax (respectively for \tmin) to let all players pick the same assignment to avoid a large penalty.

    The hardness then follows from the following observation. If $\exists X\ \forall Y\ \phi(X, Y)$ holds, then \tmax can play the corresponding assignment to force a value of at least $1/n^3$: no matter what assignment \tmin picks, at least one term in $\phi$ is true, which is discovered with a probability of at least $1/n^3$ (whenever all the variables in such a term are picked by Nature). On the other hand, if $\exists X\ \forall Y\ \phi(X, Y)$ does not hold, then no matter what assignment \tmax picks, there is an assignment that \tmin can pick such that none of the terms is satisfied, which forces a value of $0$ for \tmax.

    \paragraph{Hardness for games of no chance} To show that the same hardness holds even when there is no chance, we use the following gadget to eliminate the need of Nature. Let us introduce a new \tmax-player called \tmax-Nature and a new \tmin-player called \tmin-Spoiler. The gadget will be such that \tmax-Nature can incur a large penalty whenever they do not mimic perfectly Nature's behavior. More concretely, as Nature in the construction above, \tmax-Nature picks $s = (x_1, x_2, x_3, y_1, y_2, y_3) \in X^3 \times Y^3$. \tmin-Spoiler then guesses \tmax-Nature's choice by picking $s' \in X^3 \times Y^3$. \tmax receives $-N(n_1^3 n_2^3 - 1)$ If $s = s'$, otherwise $N$, where $N$ is a large number. The game then continues as in the construction above.

    By a similar argument to the one used in the proof of the lemma above, \tmax-Nature's dominant strategy is to pick $s$ uniformly at random. Since \tmax cannot correlate, the game plays exactly like the construction above; \tmax can force a value of $1/n^3$ if and only if $\exists X\ \forall Y\ \phi(X, Y)$ holds.
\end{proof}

\tmecordeltatwo*
\begin{proof}
    Let $\mc X \subset \R^m, \mc Y \subset \R^n$ be the space of realization-form pure strategies of both players, and $\mc A$ be the payoff matrix. Then our goal is to decide whether the polytope
    \begin{align}
        \mc X^* := \qty{ \vx \in \R^m : \quad \mqty{
        \circled{1} & \vx \in \co \mc X,                            \\
        \circled{2} & \vy^\top \vec A \vx \le v\ \forall \vy \in \mc Y
            } }
    \end{align}
    is empty. We will show how to separate over $\mc X^*$ with a mixed-integer linear programming oracle, which suffices to complete the proof because such a separating oracle can be used to run the ellipsoid algorithm.

    Given a candidate solution $\vx$, we check both constraints. If $\circled 2$ is violated for some $\vy^* \in \mc Y$, then $\vec A \vy^*$ is a separating direction; such $\vy^*$ can be found by an integer programming oracle. If $\circled 1$ is violated, then a separating direction can be found because (strong) separation and optimization are equivalent for well-described polytopes~\cite{Grotschel93:Geometric}, and optimization over $\co \mc X$ is an integer program.
\end{proof}

\tmecordeltatwohard*
\begin{proof}
    We reduce from Last-SAT, which is known to be $\Delta_2^\P$-complete~\cite{Krentel88:Complexity}. The Last-SAT problem is to, given a 3-CNF formula $\phi(x)$, decide whether the lexicographically last satisfying assignment of $\phi$ has a $1$ in the least-significant bit.

    Given a 3-CNF formula $\phi$ with $m$ clauses over a set of $n$ variables $X = \{x_0, \ldots x_{n-1}\}$, we construct the following zero-sum game with $3$ players on each team. First, Nature chooses a triple $(y_1, y_2, y_3) \in X^3$ uniformly at random. Player~$i$ of \tmax (respectively of \tmin) is asked concurrently and independently to assign either true or false to the variable $y_i$. The payoff for \tmax is a sum of terms, determined by the following conditions.
    \begin{itemize}
        \item If any two players of \tmax (respectively of \tmin) assign different values to the same variable, \tmax receives $-N^2$ (respectively $+N^2$), where $N$ is a large number.
        \item If any clause in $\phi$ is rendered false by the assignment of \tmax (respectively \tmin), \tmax receives $-(n^2+N)$ (respectively $+N$).
        \item If $y_1 = x_k$ and player 1 of \tmax assigns true to this variable, then \tmax receives $+2^{n-k}$ (respectively $-2^{n-k}$).
        \item If $y_1 = x_{n-1}$ and player 1 of \tmax assigns false to this variable, then \tmax receives a penalty of $-1$.
    \end{itemize}

    We can see that when \tmin copies \tmax's mixed strategy:
    \begin{itemize}
        \item The first and the third term yield an expected payoff of $0$, since the positive payoff and the negative payoff cancels out exactly.
        \item The second term is non-positive, and strictly negative when at least one clause will be violated by an assignment in \tmax's mixed strategy.
        \item The fourth term is non-positive, and strictly negative when an assignment in \tmax's mixed strategy does not include the variable $x_{n-1}$.
    \end{itemize}

    If the maxmin value of the game is $0$, then there is a mixed strategy of \tmax such that the expected payoff is non-negative, even when \tmin copies this strategy. From the second term and the fourth term, we infer that this mixed strategy must be a satisfying assignment including $x_{n-1}$. If this assignment is not the lexicographically maximal one, then \tmax's expected payoff will be strictly negative when \tmin plays this maximal satisfying assignment, due to the third term of payoff. Hence, the satisfying assignment behind \tmax's mixed strategy must be lexicographically maximal; the Last-SAT instance is true.

    For the other direction, assume that the Last-SAT instance is true. Then \tmax can play the strategy corresponding to the lexicographically last satisfying assignment. When $N$ is large enough (e.g.\ $N = n^2 2^n$), the best \tmin can do is to play the same strategy. This means the expected payoff is at least $0$ for \tmax.

    Hence, the value of the game is $0$ if and only if the Last-SAT instance is true.

    \paragraph{Eliminate Nature} To eliminate Nature from the construction, we introduce a new \tmax-player called \tmax-Nature and a new \tmin-player called \tmin-Nature. Both new players are supposed to imitate Nature's uniform choice over the triples in $X^3$. The actual triple given to the 3 assignment-players of \tmax and the 3 assignment-players of \tmin is determined by the choice of \tmax-Nature and the one of \tmin-Nature.

    More concretely, the game proceeds as follows. First, \tmax-Nature picks a triple of indexes $I^+ = (i_1^+, i_2^+, i_3^+) \in \{0, \ldots, n-1\}^3$ without any other player observing the choice. Then \tmin-Nature also picks $I^- = (i_1^-, i_2^-, i_3^-) \in \{0, \ldots, n-1\}^3$ without any other player observing the choice. We write, for all $j$, $i_j = (i_j^+ + i_j^-) \mod n$. The game then continues as in the construction with a chance node, with $(y_1, y_2, y_3) = (x_{i_1}, x_{i_2}, x_{i_3}) \in X^3$.

    To see how this new construction with no chance node works, first notice that by implementing the uniform strategy (over the indexes), \tmax-Nature (resp. \tmin-Nature) can enforce a uniform distribution over $X^3$ for the triple $(x_{i_1}, x_{i_2}, x_{i_3})$, even when the distribution is conditioned on $I^-$ (resp. $I^+$). In particular, although the assignment-players of \tmax can correlate their assignment with the choice of \tmax-Nature (i.e., $I^+$), it does not do them any good when \tmin-Nature forces the conditional distribution on $I^+$ of $(y_1, y_2, y_3)$ to be uniform over $X^3$. The rest of the argument is similar to the construction with Nature above, with the consequence that the value of this game of no chance is $0$ if and only if the Last-SAT instance is true.
\end{proof}

\section{Discussion}\label{sec:discussion}

In this section, we discuss important details that may help the interested reader in clarifying some technical aspects of our contributions.

\subsection{Necessity of assumptions} \label{sec:assumptions}
\changed{
\paragraph{Timeability}
Timeability is used fundamentally in our TB-DAG construction, to allow the construction of the belief, which consists only of nodes at a single layer of the game tree. We think dealing with non-timeable games is a very interesting question for the future; doing so would require a new definition of information complexity $k$ that does not depend on the timing. Games with {\em absentmindedness} (which are a subset of non-timeable games) introduce other issues. In particular, mixtures of pure strategies are no longer sufficient to characterize optimal strategies (see, for example, \citet{Piccione97:Inperpretation}), so even our solution concepts break down. We refer the interested reader to \citet{Tewolde23:Computational} and papers cited therein for computational issues related to games with absentmindedness and without timeability.
\paragraph{Two-player zero-sum} The zero-sum assumption is fundamental to the efficient algorithms in the present paper: even two-player {\em normal-form} games are hard to solve if they are not zero sum~\cite{Rubinstein17:Settling}. The TB-DAG construction does apply to any imperfect-recall timeable game (even if it is not zero sum), allowing no-regret learning algorithms like CFR to be run on such games. However, CFR is not guaranteed to converge to a Nash equilibrium in such games, only to a {\em coarse-correlated equilibrium}.  
}

\subsection{Public states vs observations}\label{sec:discussion:obs}

\begin{figure*}[t]
    \input{figures/styles.tex}
  \begin{center}
  \begin{forest}
[\phantom X,p1
[A,name=A,nat
    [C,p1,parent=A
        [\phantom X,p1 [,terminal] [,terminal]] 
        [\phantom X,p1,name=1 [,terminal] [,terminal]]
    ]
    [D,p1,parent=B
        [\phantom X,p1,name=2 [,terminal] [,terminal]] 
        [\phantom X,p1,name=3 [,terminal] [,terminal]]
    ]
    [E,p1,parent=A
        [\phantom X,p1,name=4 [,terminal] [,terminal]] 
        [\phantom X,p1,name=5 [,terminal] [,terminal]]
    ]
]
[B,name=B,nat
    [F,p1,parent=B
        [\phantom X,p1,name=6 [,terminal] [,terminal]] 
        [\phantom X,p1,name=7 [,terminal] [,terminal]]
    ]
    [G,p1,parent=A
        [\phantom X,p1,name=8 [,terminal] [,terminal]] 
        [\phantom X,p1,name=9 [,terminal] [,terminal]]
    ]
    [H,p1,parent=B
        [\phantom X,p1,name=10 [,terminal] [,terminal]] 
        [\phantom X,p1,name=11 [,terminal] [,terminal]]
    ]
]
]
\draw[infoset1] (1) to (2);
\draw[infoset1] (3) to (4);
\draw[infoset1] (5) to (6);
\draw[infoset1] (7) to (8);
\draw[infoset1] (9) to (10);
  \end{forest}
  \end{center}
  \caption{
  A game showing that public state-based approaches do not subsume inflation.}\label{fi:inflation-counterexample}
\end{figure*}

In this section, we discuss in depth the difference between public {\em states} and public {\em observations}. Intuitively, the difference is that observations are {\em localized} to a particular node in the TB-DAG: if a fact is public to the team {\em conditional on the part of the team strategy that has been played to reach this point}, then it is an observation. On the other hand, public {\em states} only encode {\em unconditionally} public information. As we will see, using observations is strictly preferable to public states from both conceptual and theoretical perspectives.

\paragraph{Comparision to using public states}
We envision an alternative construction of the TB-DAG in which the team coordinator observes only the {\em public state} containing the current node. That is, the definition of \splitb{} is replaced by:
\begin{align}
    \label{def:splitbpub}
\splitb{i}^\text{pub}(H, h) \coloneqq H \cap P \text{ where } h \in P \in \P_i.
\end{align}
and $\splitb{i}^\text{pub}(H)$ defined analogously. Then, in \ref{al:tb-dag}, we replace $\splitb{i}(H)$ with $\splitb{i}^\text{pub}(H)$. We will call this new construction the {\em public-state TB-DAG} and spend the rest of this subsection contrasting it with the (observation) TB-DAG constructed by \ref{al:tb-dag}.

Our first result is that the TB-DAG can never be too much larger than the public state TB-DAG:

\begin{proposition}\label{pr:public state comparison}
    Let $N$ and $N'$ be the number of nodes in the TB-DAG and public state TB-DAG respectively. Then $N \le 2p N'$, where $p$ is the largest size (in number of nodes) of any belief in the public state TB-DAG.
\end{proposition}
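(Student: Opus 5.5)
Our plan is to build a map from the nodes of the TB-DAG to the nodes of the public-state TB-DAG, with fibers of size at most $2p$. Both DAGs are generated by the same recursion (\ref{al:tb-dag}). The only difference is that $\splitb{i}$ is replaced by $\splitb{i}^\text{pub}$.

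\emph{Refinement.} We first record that observations refine public-state splits: for any candidate set $H$, every connected component of $\mc G_i[H]$ lies inside a single set $H\cap P$, because edges of $\mc G_i$ only join nodes of the same public state. So $\splitb{i}(H)$ is a refinement of $\splitb{i}^\text{pub}(H)$.

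\emph{Decision points.} By induction along the construction, we will show that every belief $B$ labelling a decision point of the TB-DAG is a connected component of $\mc G_i[B']$ for some belief $B'$ labelling a decision point of the public-state TB-DAG. The key check for the induction is the prescription step. Take a prescription $\vec a$ at $B\subseteq B'$. Extend it arbitrarily to a prescription $\vec a'$ at $B'$ (every infoset meeting $B$ meets $B'$). Then $B\vec a \subseteq B'\vec a'$. Hence each component $C$ of $\mc G_i[B\vec a]$ lies in some $B'\vec a'\cap P = B''$, which is a public-state belief.

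\emph{The delicate point.} The induction must place $C$ inside a connected component of $\mc G_i[B'']$, not merely inside $B''$. Since $C$ is connected in $\mc G_i$ and $C\subseteq B''$, it is contained in a single component $\hat C$ of $\mc G_i[B'']$. To close the induction we therefore strengthen the hypothesis to: every TB-DAG belief is a subset of some public-state belief $B'$. With this weaker form, a subset $B\subseteq B'$ is determined by $B'$ together with a subset of $B'$. That alone gives a count of $2^p$, not $2p$.

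\emph{Getting $2p$.} Here we need a finer structural fact. Note that $B\vec a$ depends only on the restriction of $\vec a$ to infosets meeting $B$, and these are determined by $B'$ and $\vec a'$. Using this, we would show that every TB-DAG belief is exactly a connected component of $\mc G_i[B'\vec a'\cap P]$ for some public-state observation point $B'\vec a'$ and public state $P$. Each such public-state belief has at most $p$ nodes, so it has at most $p$ components. We then charge each TB-DAG decision point to the pair (public-state decision point reached, component index), giving at most $pN'$ decision points. Finally, each TB-DAG observation point $B\vec a$ is identified with an edge leaving a decision point. We would charge it to the parent's image in the same way, giving another $pN'$ (this requires $\vec a$ to be determined by $\vec a'$). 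Together this yields $N\le 2pN'$.

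We expect this last step to be the main obstacle. The danger is that, after several refinements, TB-DAG beliefs may be arbitrary unions of components rather than single components of one public-state belief. If so, we would fall back to proving the bound separately for decision and observation points, invoking the refinement property at each layer.
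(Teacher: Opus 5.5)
\textbf{There is a genuine gap.} Your charging scheme is essentially the paper's count, phrased as a map. The paper takes a public-state belief $B'$ whose components in $\mathcal{G}_i[B']$ are $B_1,\dots,B_m$, with $A_1,\dots,A_m$ prescriptions. It compares the $m+\sum_j A_j$ TB-DAG nodes these generate with the $1+\prod_j A_j$ nodes $B'$ generates, and then uses $m\le p$.

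That count rests on a structural claim you do not prove: every TB-DAG belief is \emph{exactly} a connected component of $\mathcal{G}_i[B']$ for some public-state belief $B'$, and prescriptions at $B'$ factor over its components. You leave this as ``we would show'' and even suspect it might fail. The paper states it without proof when it says ``$B$ splits into disjoint beliefs $B_1,\dots,B_m$'' with $A_1\cdots A_m$ children, but your argument still needs it established.

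You have also misdiagnosed the danger. A TB-DAG belief is connected, so it cannot be a union of components. The real risk is that a component $C$ of $\mathcal{G}_i[B\vec a]$ is a \emph{proper} subset of the component $\hat C$ of $\mathcal{G}_i[B'\vec a'\cap P]$ containing it, because nodes that $\vec a'$ contributes from $B'\setminus B$ attach to it. Replacing the hypothesis by ``$B\subseteq B'$'' is a weakening, not a strengthening, and it discards exactly the maximality of $B$ needed to exclude this. Refinement alone only places TB-DAG beliefs inside public-state beliefs, which is the $2^p$-type count you already noted, so your fallback cannot reach $2p$.

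The missing idea is that $\mathcal{G}_i$-adjacency propagates to parents. If $ha$ and $h'a'$ are adjacent, there is $I\in\mathcal{I}_i$ with $ha\preceq I$ and $h'a'\preceq I$. Then $h\preceq I$ and $h'\preceq I$ at equal depth, so $h$ and $h'$ are adjacent or equal.

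Keep your original hypothesis that $B$ is a full component of $\mathcal{G}_i[B']$. Then no node of $B\vec a$ is adjacent to a node of $(B'\setminus B)\vec a'$, since their parents would be adjacent across distinct components. Hence every component of $\mathcal{G}_i[B\vec a]$ is a component of $\mathcal{G}_i[B'\vec a']$. Since it lies in a single public state, it is also a component of $\mathcal{G}_i[B'\vec a'\cap P]$, which closes the induction.

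Similarly, two nodes in the same infoset are adjacent, so no infoset meets two components of $\mathcal{G}_i[B']$. Thus prescriptions at $B'$ are exactly tuples of prescriptions at $B_1,\dots,B_m$. This makes your observation-point charging (restrict $\vec a'$ to the component) well defined with fibers of size at most $m\le p$.

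With these two facts your proof is complete. It in fact gives $N\le pN'$, since $m+\sum_j A_j\le m\bigl(1+\prod_j A_j\bigr)$.
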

\begin{proof}
Let $B$ be any belief in the public state TB-DAG. In the (non-public-state) TB-DAG, $B$ splits into disjoint beliefs $B_1, \dots, B_m$. Let $A_1, \dots, A_m$ be the sizes of the prescription spaces at $B_1, \dots, B_m$ respectively. Then $B$ has $A_1A_2 \dots A_m$ children, so $B$ induces $1 + A_1A_2 \dots A_m$ nodes in the public state TB-DAG. On the other hand, the beliefs $B_1, \dots, B_m$ in the TB-DAG will have $A_1, \dots, A_m$ children respectively, accounting for a total of $m + A_1 + \dots + A_m \le 2m A_1 \dots A_m$ nodes. Now, observing simply that $m \le p$ completes the proof.
\end{proof}

Thus, using observations is never much worse than using public states.

\paragraph{Comparision to using inflated public states}
{\em Complete inflation}~\cite{Kaneko95:Behavior}, which we simply call {\em inflation} for short, is an algorithm that splits an infoset $I$ into two infosets $I = I_1 \sqcup I_2$ if no pure strategy of the team can simultaneously play to a node in $I_1$ and a node in $I_2$, and repeats this process until no more such splits are possible. This preserves strategic equivalence. However, inflation can lead to the break-up of public states, which, in turn, reduces the size of public state TB-DAG. 

Indeed, consider the game in \Cref{fi:inflation-counterexample}. Due to the information sets marked in the last layer of the game tree, the connectivity graph contains a path C---D---E---...---H. Therefore, \{C, D, ..., H\} form a public state. Also, it is possible for the combinations CEG and DFH to be reached (if the player at the root plays left or right, respectively). Therefore, CEG and DFH are beliefs in the public-state TB-DAG. In the observation TB-DAG, consider, for example, what happens if the left action is played at the root so that C, E, and G are all reached. Note that there are no edges connecting C, E, and G---the path connecting C to E in the connectivity graph passes through D, which is not reached; therefore, C, E, and G are three different observations and hence three different beliefs, resulting in an exponentially-smaller TB-DAG. Inflation would remove the nontrivial information sets in the second black layer, which would ultimately have the same effect in this example as using observations. 

The number $3$ is not special in this construction;  it can be increased arbitrarily by simply increasing the number of children of {\sf A} and {\sf B}. Therefore, in particular, one can construct a family of games in which the public state TB-DAG (without inflation) has exponential size, while the (observation) TB-DAG has polynomial size. This is why \citet{Zhang22:Team_TreeDecomp} and \citet{Carminati21:Public} insist that inflation be done as a preprocessing step before beginning their constructions. 

The use of observations, however, removes the need for this step:

\begin{figure*}[t]
    \input{figures/styles.tex}
  \begin{center}
  \begin{forest}
  for tree={l sep=1cm}
[,draw=none, s sep=4cm
    [$h$,no edge
        [\phantom X,name=x1,p1, s sep=2cm [,terminal] [ $u\phantom'$,name=u1,p1
        ]]
    ]
    [$h'$,no edge 
        [\phantom X,name=x2,p1, s sep=2cm [ $u'$,name=u2,p1
        ] [,terminal] ]
    ]
]
\draw[infoset1] (u1) -- (u2) node[midway,draw=none,fill=white]{$I$};
\draw[infoset1,draw=red] (x1) -- (x2);
\node[draw=none, fill=none, right=1cm of u1] (u1x) {};
\node[draw=none, fill=none, left=0cm of u1] (u1y) {};
\node[pubnode, fit=(u1y)(u1)(u1x), label=above:{\textcolor{red}{$I_1$}}] {};
\node[draw=none, fill=none, left=1cm of u2] (u2x) {};
\node[draw=none, fill=none, right=0cm of u2] (u2y) {};
\node[pubnode, fit=(u2y)(u2)(u2x), label=above:{\textcolor{red}{$I_2$}}] {};
  \end{forest}
  \end{center}
  \caption{
  A pictorial representation of the proof of \Cref{pr:inflate1}. Since $h$ and $h'$ can be played simultaneously but $u$ and $u'$ cannot, there must be an infoset like the red dotted one connecting a child of $h$ to a child of $h'$. Therefore, inflation cannot break existing edges between played nodes.}
\end{figure*}
\begin{proposition}\label{pr:inflate1}
    Given any team decision problem $\mathcal{T}$, the TB-DAG of $\mathcal{T}$ is the same no matter whether inflation is applied to $\mc T$ before the construction.
\end{proposition}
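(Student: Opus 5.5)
The plan is to show that the two ingredients \ref{al:tb-dag} uses are unchanged by inflation: the observation function $\splitb{i}$, evaluated on the candidate sets $H$ that actually arise, and the prescription sets $A_B^i$ at the beliefs that actually arise. Inflation is a sequence of elementary splits $I = I_1 \sqcup I_2$. So I would prove that a single split leaves the TB-DAG unchanged, and then conclude by induction on the number of splits. Each split is performed in a game that is strategically equivalent to $\mc T$, with the same pure strategies and the same realization forms, so the hypotheses needed remain available at every step.

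The key notion is \emph{simultaneous playability}. Every set $H = B\vec a$ passed to \textsc{MakeObservationPoint}, and every belief $B$ passed to \textsc{MakeDecisionPoint}, consists of nodes that are all played by a single pure strategy $\vpi$. This holds by the same induction as in the proof of \Cref{lm:terminal_reach} and the well-definedness lemma: one follows the prescriptions leading to the DAG node and completes them arbitrarily, and \Cref{lm:belief_common_ancestor} guarantees that such a completion is consistent. I would therefore show the following two facts for a single split of $I$ into $I_1 \sqcup I_2$, where by definition no pure strategy plays a node of $I_1$ and a node of $I_2$.

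The first fact concerns prescriptions. If $B$ is simultaneously playable, then $B \cap I$ lies entirely within $I_1$ or entirely within $I_2$. Hence the infosets meeting $B$ correspond one-to-one before and after the split, they have the same action sets, and $A_B^i$ is unchanged.

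The second fact, which is the main step, concerns edges. Let $h, h'$ be nodes of the same depth that are both played by some pure strategy $\vpi$, and suppose $(h,h') \in \mc E_i$. I claim that $(h,h')$ is still an edge after the split. Splitting only removes edges, so this implies that $\mc G_i[H]$, and therefore $\splitb{i}(H,\cdot)$, are identical for every simultaneously playable $H$.

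To prove the edge claim, I would first dispose of the easy case. Let $J$ be an infoset witnessing the edge, with descendants $u \succeq h$ and $u' \succeq h'$ in $J$. If $J \ne I$, the split does not touch $J$, so $J$ still witnesses the edge. Now suppose $J = I$, $u \in I_1$ and $u' \in I_2$. I would then try to build a pure strategy $\vpi'$ that agrees with $\vpi$ at every depth up to $|h|$ and below that plays along the paths $h \to u$ and $h' \to u'$. By timeability, infosets live at a single depth, so the nodes above $h$ and $h'$ cannot conflict with the nodes below them.

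If $\vpi'$ is consistent, it plays both $u$ and $u'$, which contradicts $u \in I_1$ and $u' \in I_2$. So there must be a conflict. The main obstacle is to locate this conflict so that it yields a new witness. I would take the shallowest pair of nodes $x \preceq u$ and $x' \preceq u'$ lying below $h$ and $h'$, in a common infoset $K$, at which the two paths demand different actions. (If the conflict were between two nodes on the same path, it would contradict the absence of absentmindedness, so both nodes can be taken on different paths.) By minimality, the strategy that follows the paths down to $x$ and $x'$ is consistent, so $x$ and $x'$ are simultaneously playable.

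Now $K$ is strictly shallower than $I$, so $K \ne I$ and the split leaves $K$ intact; $K$ therefore still witnesses the edge $(h,h')$. This is exactly the situation depicted in the figure above, with the red infoset playing the role of $K$. I expect the delicate points to be the handling of paths on which one node is reached by $\vpi$ while the other is reached only through the modified play, and the bookkeeping needed to make the induction over successive splits rigorous.
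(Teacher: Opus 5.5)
Your proposal is correct and follows the paper's proof. Both arguments show that if an edge $(h,h')$ between simultaneously playable nodes were witnessed only by the split infoset $I$, then the paths $h \to u$ and $h' \to u'$ must conflict at some strictly shallower infoset, and that infoset still witnesses the edge. Your explicit checks that the prescription sets are unchanged and that the argument iterates over successive splits are valid additions that the paper leaves implicit; choosing the conflicting pair to be shallowest is not actually needed, since any conflicting infoset is already shallower than $I$.
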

\begin{proof}
    Inflation operations affect the connectivity graph $\mc G_i$, thus changing the results of $\splitb{i}$ operations at a terminal node.
    Consider, therefore, any observation node $O$ in the TB-DAG, and let $h, h' \in O$, such that $I = I_1 \sqcup I_2$ is an inflatable infoset and $h \preceq u \in I_1$ and $h' \preceq u' \in I_2$. We need to show that inflating $I$ into $I_1$ and $I_2$ cannot remove the $(h, h')$ edge in $\mc G_i[O]$.

    Assume for contradiction that inflating would remove the $(h, h')$ edge and that therefore $\splitb{i}$ would split $h,h'$ into two different beliefs.
    We have that $O$ is a valid observation node, so it is possible for the player to play to both nodes $h$ and $h'$ simultaneously. But then there must be an infoset $I'$ connecting some node on the $h \to u$ path to some node on the $h' \to u'$ path---otherwise, it would be possible for the player to play to both $u$ and $u'$ simultaneously, which violates inflatability of $I$. But then there is still an $(h, h')$ edge in $\mc G_i[O]$, which is a contradiction.
\end{proof}

Although inflation {\em can} be performed efficiently, not requiring it as a preprocessing step simplifies the code and makes for a conceptually cleaner construction.
However, the benefits of observations go beyond making inflation unnecessary. In fact, even with inflation, there are still cases in which using observations instead represents an exponential improvement.

\begin{restatable}{proposition}{inflatetwo}\label{pr:inflate2}
    There exists a family of team decision problems in which the TB-DAG has a polynomial size, but the public state TB-DAG has exponential size, even if inflation is applied as a preprocessing step before building the latter.
\end{restatable}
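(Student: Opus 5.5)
The plan is to modify the game of \Cref{fi:inflation-counterexample} so that inflation can no longer destroy the "chain" of connectivity edges, while the strategies that actually matter still reach only every other link of the chain. In the original example, inflation helps because the last-layer infosets joining C to D, D to E, and so on join nodes that no single pure strategy reaches together. I will therefore add extra root actions whose only purpose is to make each adjacent pair of the chain jointly reachable. This makes every linking infoset non-inflatable. The observation TB-DAG is unaffected, because by \Cref{pr:inflate1} inflation never changes it.

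\textbf{Construction.} Fix $m$ and a single player (a one-team decision problem). At the root the player chooses one of $2 + (m-1)$ actions: $\mathsf{odd}$, $\mathsf{even}$, or $\mathsf{pair}_j$ for $j \in [m-1]$. A chance node follows and reaches a copy of node $X_\ell$, where:
\begin{itemize}
\item $\ell$ ranges over the odd indices under $\mathsf{odd}$;
\item $\ell$ ranges over the even indices under $\mathsf{even}$;
\item $\ell \in \{j, j+1\}$ under $\mathsf{pair}_j$.
\end{itemize}
All copies of $X_\ell$ form one infoset $J_\ell$ with two actions $\mathsf L,\mathsf R$, so the player forgets the root action, and the $J_\ell$ are pairwise distinct. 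The copies are at equal depth, so the game is timeable. Below each copy of $X_\ell$, action $\mathsf R$ leads to a node placed in a common infoset $K_\ell$ together with the node below $\mathsf L$ at each copy of $X_{\ell+1}$; the remaining children are terminal. The game has size $O(m^2)$.

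\textbf{Inflation does nothing.} The key step is to check that no infoset is inflatable. The infoset $K_\ell$ is played simultaneously on both sides by the pure strategy that plays $\mathsf{pair}_\ell$, then $\mathsf R$ at $J_\ell$ and $\mathsf L$ at $J_{\ell+1}$. The infoset $J_\ell$ is reached by any strategy choosing a root action under which $X_\ell$ occurs. So the connectivity graph, after inflation, still links all copies of $X_1,\dots,X_m$ into a single public state $P$.

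\textbf{Size of the public-state TB-DAG.} After prescription $\mathsf{odd}$ at the root, the public-state TB-DAG has belief $P \cap (\text{reached nodes})$, which contains the $\lceil m/2\rceil$ copies of $X_\ell$ with $\ell$ odd. These lie in distinct infosets, so that belief has $2^{\lceil m/2\rceil}$ prescriptions, giving exponential size.

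\textbf{Size of the observation TB-DAG.} In $\mc G_i$ restricted to the reached set, the odd copies are pairwise non-adjacent. Their only connections go through the unreached even $X$'s, exactly as in the discussion of \Cref{fi:inflation-counterexample}, where the path from C to E passes through the unreached D. Hence $\splitb{i}$ splits the reached set into singletons. Every belief in the observation TB-DAG is therefore a singleton or a pair $\{X_j, X_{j+1}\}$ copy (or the corresponding pair at the $K$ level), with at most $4$ prescriptions each. Using the per-public-state counting of \Cref{th:tb-dag-size}, the total size is polynomial in $m$.

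\textbf{Main obstacle.} The delicate step is making sure the added $\mathsf{pair}_j$ gadgets do not themselves create large observation beliefs, and that edges coming from $\mathsf{pair}$ branches never reconnect the odd copies under $\mathsf{odd}$. Beliefs are taken inside $\mc G_i[H]$ with $H$ the set of reached nodes, and copies from other root branches are never in $H$ together with the $\mathsf{odd}$ copies, so this should follow. I would verify it by listing all reachable candidate sets $H$ at each layer.
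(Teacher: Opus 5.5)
\textbf{The gap is in the step ``Inflation does nothing.''} Inflation splits $I=I_1\sqcup I_2$ whenever \emph{no} pure strategy plays to a node of $I_1$ and a node of $I_2$ at the same time, and it does this for \emph{any} such bipartition. You only checked two things. For $J_\ell$ you checked that each node is individually reachable. For $K_\ell$ you checked the one bipartition into the $X_\ell$-side and the $X_{\ell+1}$-side.

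In your game the copies of $X_\ell$ lie below different root actions of the \emph{same} player. A pure strategy chooses exactly one of $\mathsf{odd}$, $\mathsf{even}$, $\mathsf{pair}_j$, so it plays to at most one copy of $X_\ell$. Hence $J_\ell$ inflates into singletons, one per root action. $K_\ell$ likewise splits by root action, and the only nontrivial piece left is $\{X_\ell\mathsf R, X_{\ell+1}\mathsf L\}$ below $\mathsf{pair}_\ell$.

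After inflation every infoset is confined to a single root-action subtree. Inside the $\mathsf{odd}$ subtree, each odd copy then precedes only singleton infosets, so each is its own public state. The public-state belief after $\mathsf{odd}$ breaks into $\lceil m/2\rceil$ singletons with $2$ prescriptions each. Overall, the inflated public-state TB-DAG has size polynomial in $m$, so your family does not separate the two constructions. Your chain linking the odd copies ran entirely through nodes in \emph{other} root branches. It therefore rested on the player forgetting their own root action, which is exactly the forgetting that inflation removes. Your analysis of the observation TB-DAG is fine, but it no longer has anything to be compared against.

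\textbf{What is missing is a way to make reachable sets sparse without putting into one infoset two nodes that no pure strategy reaches together.} The paper does this as follows:
\begin{itemize}
\item Nature picks the index $c\in\{1,\dots,C\}$ at the root, so a single pure strategy reaches all branches at once.
\item At layers $t=1,\dots,C-2$ the player sits at a two-node infoset $\{t,t+2\}$, and the action chosen there kills one of the two branches.
\item The final layer has two-node infosets chaining $c$ to $c+1$.
\end{itemize}
Every nontrivial infoset has two nodes that are jointly reachable, so inflation does nothing. Yet every pure strategy leaves a surviving set with no two indices differing by $2$, so each observation belief touches only $O(1)$ infosets, with $O(C)$ beliefs per layer. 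Meanwhile some strategy reaches about $C/2$ indices inside the single last-layer public state, which gives $2^{C/2}$ prescriptions there. Replacing your root player choice with this Nature-plus-pruning gadget is the idea your construction needs.
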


\begin{proof}
\begin{figure}[t]
\noindent\adjustbox{max width=0.85\paperwidth,center}{
\input{figures/styles.tex}
\scalebox{0.8}{
\begin{forest}
for tree={anchor=center, s sep=0.4cm, l=1cm,}
[
    [\phantom1 ,p1,name=A1
        [,name=y3 [,name=y4 [,name=y5 [1,p1,name=y6
            [2,p1 [,terminal] [,terminal]] 
            [2,p1,name=1 [,terminal] [,terminal]]
        ]]]]
        [,terminal]
    ]
    [,name=x2 [\phantom1 ,p1,name=A2
        [[[1,p1
                [2,p1,name=2 [,terminal] [,terminal]] 
                [2,p1,name=3 [,terminal] [,terminal]]
        ]]]
        [,terminal]
    ]]
    [\phantom1 ,p1,name=B1
        [,terminal]
        [[\phantom1 ,p1,name=A3
            [[1,p1
                [2,p1,name=4 [,terminal] [,terminal]] 
                [2,p1,name=5 [,terminal] [,terminal]]
            ]]
            [,terminal]
        ]]
    ]
    [,name=x4 [\phantom1 ,p1,name=B2
        [,terminal]
        [[\phantom1 ,p1,name=A4
            [1,p1
                [2,p1,name=6 [,terminal] [,terminal]] 
                [2,p1,name=7 [,terminal] [,terminal]]
            ]
            [,terminal]
        ]]
    ]]
    [,name=x5 [[\phantom1 ,p1,name=B3
        [,terminal]
        [[1,p1
                [2,p1,name=8 [,terminal] [,terminal]] 
                [2,p1,name=9 [,terminal] [,terminal]]
        ]]
    ]]]
    [,name=x6 [[[\phantom1 ,p1,name=B4
        [,terminal]
        [1,p1
                [2,p1,name=10 [,terminal] [,terminal]] 
                [2,p1,name=11 [,terminal] [,terminal]]
        ]
    ]]]]
]
\draw[infoset1, bend right=20] (A1) to (B1);
\draw[infoset1, bend right=20] (A2) to (B2);
\draw[infoset1, bend right=20] (A3) to (B3);
\draw[infoset1, bend right=20] (A4) to (B4);
\draw[infoset1,] (1) to (2);
\draw[infoset1,] (3) to (4);
\draw[infoset1,] (5) to (6);
\draw[infoset1,] (7) to (8);
\draw[infoset1,] (9) to (10);
\node[above=1cm of A1.center,draw=none]{$c=1$};
\node[above=1cm of x2.center,draw=none]{$c=2$};
\node[above=1cm of B1.center,draw=none]{$c=3$};
\node[above=1cm of x4.center,draw=none]{$c=4$};
\node[above=1cm of x5.center,draw=none]{$c=5$};
\node[above=1cm of x6.center,draw=none]{$c=6$};
\node[left=1cm of y3,draw=none](t3label){$t=2$};
\node[above=0.5cm of t3label,draw=none](t2label){$t=1$};
\node[above=0.5cm of t2label,draw=none](t1label){$t=0$};
\node[left=1cm of y4,draw=none]{$t=3$};
\node[left=1cm of y5,draw=none]{$t=4$};
\node[left=1cm of y6,draw=none]{$t=5$};
\end{forest}
}
}
\caption{The counterexample for \Cref{pr:inflate2}, for $C=6$.}%
    \label{fi:counterexample}
\end{figure}

    The counterexample in \Cref{fi:inflation-counterexample} would work if it were not for the fact that all of the infosets in the last layer inflate. Therefore, we use a similar gadget at the bottom of the game to prove this result but ensure that inflation does nothing.

Consider the following family of games, parameterized by an integer $C > 1$. First, Nature picks an integer $c \in \{ 1, \dots, C\}$. Over the next $C-2$ layers $t=1, 2, \dots, C-2$, if $c \in \{ t, t+2 \}$, a player who cannot distinguish the two cases chooses an action $a \in \{ 0, 2 \}$. If $c = t+a$, then the game continues; otherwise, the game ends.

Finally, P1, who has perfect information about $c$ chooses between two actions numbered either $c$ or $c+1$. Then, player P2, observing P1's action number but {\em not} the value $c$, picks one of two options.

The resulting game is visualized in \Cref{fi:counterexample}. We observe the following things about it.

\begin{enumerate}
    \item No infoset inflates: all nontrivial infosets have size $2$, and it is easy to check that for all such infosets it is always possible to play to both nodes in them. This starkly contrasts the earlier counterexample, in which inflation was enough to achieve a small representation.
    \item Every P2-node in layer $C-1$ is in the same public state, and it is always possible to play to at least $C/2$ of them. Therefore, if using public-state-based beliefs, there will be a belief with $2^{C/2}$ prescriptions. Thus, the public-state-based team belief DAG, will have a size of at least $2^{C/2}$.
\end{enumerate}

We claim that layer $t \leq C-1$ does not have too many beliefs. Let $B \subseteq \H_t$ be a belief, and in the below discussion, let $[a..b]$ denote the set of integers $\{a, \dots, b\}$. 
\begin{enumerate}
    \item $B \subseteq [t+2..C]$. Since all nodes $j \ge t+2$ must be played to, we have $B = [t+2..C]$. 
    \item $B \not\subseteq [t+2..C]$. Then let $j = \max (B \setminus [t+2..C])$. Since $j \le t+1$, we have $j-2 \notin B$, since $j$ and $j-2$ are descendants of different actions taken at the infoset on layer $j-2 < t$. Thus $B$ does not contain any node $j' < j-2$ either, since in $\mathcal G[\H_t]$ such nodes $j'$ are only connected to $j$ through $j-2$. 

    Thus, $B \cap [1..t+1]$ is either $\{j\}$ or $\{j, j-1\}$. Further, since all nodes $j \ge t+2$ are played to and connected in $\mathcal G[\H_t]$, it follows that $B \cap [t+2..C]$ is either $[t+2..C]$ or empty. Thus, for each possible choice of $j \le t+1$, there are at most $4$ valid beliefs.
\end{enumerate} 
Thus, the number of beliefs in layer $t$ is at most $4(t+1)+1+2=4t+7$, where the $+2$ comes from counting the terminal beliefs, of which there are at most two. 

Further, at each belief $B$, we claim that the number of active information sets is, at most, a constant. For $t < C-1$ this is obvious since $\H_t$ contains only one information set (namely $\{t, t+2\}$). For $t=C+1$, by the above argument, we have $|B| \le 2$, so $B$ overlaps at most two information sets. 

Overall, we have that there are $O(C)$ beliefs at each of the $C$ layers of the game, and such beliefs never touch more than $O(1)$ different infosets. This is also true at the final layer because each infoset contains only two nodes at most.
It is therefore proven that the team belief DAG has a size of at most $O(C^2)$. 
\end{proof}

A practical experiment backs up these results. When $C = 16$, using observations generates a DAG with around 1000 edges; using public states generates a DAG with 30 million edges.

\subsection{Tree vs DAG representation}\label{sec:discussion:tree_vs_dag}

Here, we give an explicit example in which the TB-DAGs will be exponentially smaller than the game tree generated by \ref{al:auxgame}. This construction would work for most nontrivial adversarial team games, but for concreteness, consider the game $G$ depicted in \Cref{fig:example-atg}. Call the leftmost terminal node in that diagram $z$. Consider adding another copy of $G$ rooted in node $z$, and then repeating this process until $\ell$ copies of the game tree have been created, thus forming a game $G^\ell$. That is, $G^\ell$ is the game in which $G$ is played repeatedly until $\ell$ repetitions have been reached, or the terminal node reached is not $z$.

Note that when running \ref{al:auxgame} on $G$, multiple copies of node $z$ will appear. Thus, the number of nodes in the auxiliary game will be exponential in $\ell$. However, in the TB-DAG, after the $i$th repetition of the game finishes, the belief will always be $\{ z_i\}$ (where $z_i$ is the copy of $z$ in the $i$th repetition of the game). Thus, the size of the TB-DAG will scale linearly with $\ell$. Thus, as $\ell$ grows, the TB-DAG will be exponentially smaller than the auxiliary game, and in particular, the TB-DAG will have polynomial size while the auxiliary game will have exponential size.

\subsection{Definition of information complexity and comparison of bounds}
\label{sec:discussion:information_complexity}
We discuss the comparison between the bounds from \Cref{th:aux-size-upper} and \Cref{th:tb-dag-size} in more detail.

In \Cref{sec:beliefs_obs}, we defined the information complexity as the maximum number of {\em last-infosets} in any public state. This definition was made with \Cref{th:main} in mind, because it is the correct parameterization for that result. For \Cref{th:aux-size-upper}, however, we could have used a tighter parameterization. In particular, we could have defined a parameter $\kappa$ as the number of infosets (not last-infosets) in any public state. Then $O(b^{2\kappa d + d})$ would be a valid upper bound in \Cref{th:aux-size-upper}. One might ask how this new upper bound compares to that of \Cref{th:main}. To this end, we now compare the two bounds.

\begin{lemma}
    $k \le \kappa d$. 
\end{lemma}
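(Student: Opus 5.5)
We must show $k \le \kappa d$, where $k$ is the maximum over public states $P$ of $|\bigcup_{h\in P}\LI_i(h)|$, and $\kappa$ is the maximum number of infosets intersecting any single public state. The plan is to fix a player $i$ and a public state $P$, all of whose nodes lie at a common depth $t \le d$. We then charge each last-infoset in $\bigcup_{h\in P}\LI_i(h)$ to a public state at a shallower depth in which that infoset lies. There is at most one such public state per depth, and each contains at most $\kappa$ infosets, which gives $k \le \kappa d$.

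\textbf{Step 1: ancestors of a public state share a public state.} Let $h,h'\in P$ and let $g \preceq h$, $g'\preceq h'$ be their ancestors at depth $s<t$. We claim $g$ and $g'$ lie in the same public state at depth $s$. Since $P$ is a connected component of $\mathcal G_i$, it suffices to check a single edge $(h,h')\in\mc E_i$. Such an edge means there is an infoset $I\in\I_i$ with $h\preceq I$ and $h'\preceq I$. Then $g\preceq h\preceq I$ and $g'\preceq h'\preceq I$, and $g,g'$ have equal depth, so $(g,g')\in\mc E_i$. Following a path in $P$ shows that all depth-$s$ ancestors of nodes in $P$ lie in a single public state $P_s$.

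\textbf{Step 2: the charging.} Take any $I\in\LI_i(h)$ with $h\in P$. Then $I\preceq h$, so some node $g\in I$ is an ancestor of $h$ (or $h$ itself) at some depth $s\le t$. By timeability, every node of $I$ has depth $s$. Also, $g$ lies in $P_s$, the public state of Step 1. We need $I$ to intersect $P_s$, which it does at $g$. Hence every last-infoset in the union intersects one of the public states $P_0,\dots,P_t$, one per depth.

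\textbf{Step 3: conclusion.} For each depth $s$, at most $\kappa$ infosets intersect $P_s$. The infoset-bearing depths number at most $d$: depths run over $0,\dots,t$ with $t\le d$, and depth $d$ holds only terminal nodes, which belong to no infoset. Summing therefore gives $|\bigcup_{h\in P}\LI_i(h)|\le \kappa d$. Taking the maximum over $i$ and $P$ yields $k\le\kappa d$.

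The only delicate point is Step 1, namely that connectivity propagates upward. This follows directly from the definition of $\mc E_i$ via $\preceq$ together with timeability, since timeability guarantees that ancestors at equal depth are comparable. Beyond that, care is needed only with the off-by-one count of depths, handled in Step 3. Note that the "last" qualifier is never used, so the argument in fact bounds all infosets preceding $P$.
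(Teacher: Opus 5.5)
Your proof is correct and follows the paper's argument. Each last-infoset at $P$ intersects the unique public-state ancestor of $P$ at its depth; there are at most $\kappa$ infosets per such public state and at most $d$ infoset-bearing depths. Your Step 1 usefully makes explicit the uniqueness of that ancestor public state, which the paper's one-line proof leaves implicit. Note that Step 1 needs only the definition of $\mathcal{E}_i$, not timeability as your closing remark suggests.
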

\begin{proof}
    Every last-infoset at a public state $P$ will be an infoset intersecting some public state ancestor of $P$. Thus, there can be at most $\kappa d$ of these.
\end{proof}
Thus, the bound in \Cref{th:main} is at most
\begin{align}
    |\H| (b+1)^{k+1} \le |\H| (b+1)^{\kappa d} < |\H| b^{2 \kappa d} \le b^{2 \kappa d + d}
\end{align}
where we use the bounds $b \ge 2$ (which holds for every nontrivial game) and $|\H| \le b^d$. Thus, we conclude that the bound in \Cref{th:main} is always strictly tighter than the bound in \Cref{th:aux-size-upper}.

We also remark that in any case $\kappa \le |\H|$ is a loose bound that still ensures that the overall bound in \Cref{th:aux-tree-cfr} is polynomial in $|\H|$.

\subsection{Connection with tree decomposition}\label{sec:discussion:tree_decomp}

The public {\em state} TB-DAG can be viewed from the perspective of graphical models, specifically, using {\em tree decompositions}. Here, we review tree decompositions and show the tree decomposition-based perspective of the public-state TB-DAG.

\begin{definition}
    Given a (simple) graph $\mc G = (V, E)$, a {\em tree decomposition}\footnote{also known as a {\em clique tree} or {\em junction tree}} is a tree $\mc J$, with the following properties:
    \begin{enumerate}
        \item the nodes of $\mc J$ are subsets of $V$, called {\em bags};
        \item for each edge $(u, v) \in E$, there is a bag containing both $u$ and $v$; and
        \item for each vertex $u \in V$, the subset of nodes of $\mc J$ whose bags contain $u$ is connected.
    \end{enumerate}
    
\end{definition}

Consider an arbitrary set of the form 
\begin{align}
    \Pi = \{ \vx \in \{0, 1\}^n : g_k(\vx) = 0~~\forall k \in [m]\}
\end{align}
where the $g_k$s are arbitrary constraints, and as before let $\X = \co \Pi$. Each constraint $g_k$ has a {\em scope} $S_k \subseteq [n]$ of variables on which it depends. The {\em dependency graph} of $\Pi$ is the graph $\mc G_\Pi$ whose nodes are the integers $1, \dots, n$, and where there is an edge $(i, j)$ if there is a constraint whose scope $S_k$ contains both $i$ and $j$. For a subset $U \subseteq [n]$, a vector $\tilde\vx \in \{0, 1\}^U$ is {\em locally feasible} if $\tilde\vx = \vx_H$ for some $\vx \in \Pi$. We will use $\Pi_U$ to denote the set of all locally feasible vectors on $U$. Of course, $\Pi_{[n]} = \Pi$.  

The main result of interest to us is a corollary of the junction tree theorem (see, for example,~\citep{Wainwright08:Graphical}), which allows an arbitrary set $\X = \co \Pi$ to be described with a constraint system whose size is related to the sizes of tree decompositions of $\mc G_\Pi$.
\begin{theorem}[\citealp{Wainwright08:Graphical}]\label{th:tree-decomp}
    Let $\mc J$ be a tree decomposition of $\mc G_\Pi$. Then $\vx \in \X$ if and only if there are vectors $\vec\lambda_U \in \Delta(\Pi_H)$ for each bag $U$ of $\mc J$, such that:
    \begin{alignat}9
        \vx_U &= \sum_{\tilde\vx \in \Pi_U} \vec\lambda_U[\tilde \vx] \cdot \tilde\vx \qq{}&&\forall \textup{ bags $U$ in $\mc J$} \\
        \sum_{\substack{\tilde\vx \in \Pi_U \\ \tilde\vx_{U \cap V} = \tilde\vx^*}} \vec\lambda_U[\tilde\vx] &= \sum_{\substack{\tilde\vx \in \Pi_V \\ \tilde\vx_{U \cap V} = \tilde\vx^*}} \vec\lambda_V[\tilde\vx] &&\forall \textup{ edges $(U, V)$ of $\mc J$ and $\tilde\vx^* \in \Pi_{U \cap V}$}
    \end{alignat}
\end{theorem}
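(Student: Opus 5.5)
The plan is to prove the two directions separately. The ``only if'' direction is a projection argument. The ``if'' direction glues the local distributions $\vec\lambda_U$ along $\mc J$ into one global distribution over $\Pi$. Throughout, I take the standard convention that every variable $i \in [n]$ lies in some bag; if not, singleton bags $\{i\}$ can be attached without affecting anything. I also read $\vx_H$ and $\Pi_H$ in the statement as $\vx_U$ and $\Pi_U$.

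\emph{Only if.} Suppose $\vx \in \X = \co\Pi$, and write $\vx = \sum_{\vec\pi \in \Pi} \mu[\vec\pi]\,\vec\pi$ with $\mu \in \Delta(\Pi)$. For each bag $U$, define $\vec\lambda_U[\tilde\vx] := \mu\{\vec\pi : \vec\pi_U = \tilde\vx\}$, the marginal of $\mu$ on the coordinates in $U$. It is supported on $\Pi_U$ by the definition of local feasibility.
\begin{itemize}
    \item Restricting the convex combination to $U$ gives $\vx_U = \sum_{\tilde\vx} \vec\lambda_U[\tilde\vx]\,\tilde\vx$.
    \item For an edge $(U,V)$, both sides of the consistency equation equal $\mu\{\vec\pi : \vec\pi_{U\cap V} = \tilde\vx^*\}$, so consistency holds.
\end{itemize}

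\emph{If.} Root $\mc J$ at some bag and process the bags in top-down order. Define a random vector $\vec\pi \in \{0,1\}^n$ as follows.
\begin{enumerate}
    \item At the root bag $R$, sample $\vec\pi_R \sim \vec\lambda_R$.
    \item At a child $V$ of an already processed bag $U$, the separator coordinates $\vec\pi_{U\cap V} = \tilde\vx^*$ are already fixed. Sample $\vec\pi_V$ from $\vec\lambda_V$ conditioned on $\tilde\vx_{U\cap V} = \tilde\vx^*$.
\end{enumerate}
Step 2 is well defined because the consistency equation makes the conditioning event have positive $\vec\lambda_V$-mass whenever $\tilde\vx^*$ has positive probability.

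This procedure never assigns a coordinate twice inconsistently. By the running-intersection property (property 3 of a tree decomposition), any variable $i \in V$ that already appeared in an earlier bag lies in the parent $U$, hence in $U\cap V$, and is therefore already fixed by the conditioning. An induction down the tree, using the consistency equations, then shows that the marginal of $\vec\pi$ on every bag $U$ is exactly $\vec\lambda_U$. Taking expectations, $\E[\vec\pi]_U = \vx_U$ for every bag, and since the bags cover $[n]$, $\E[\vec\pi] = \vx$.

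It remains to show $\vec\pi \in \Pi$ almost surely, which then gives $\vx \in \co\Pi$. Fix a constraint $g_k$. Its scope $S_k$ is a clique in $\mc G_\Pi$. For each $i \in S_k$, the set of bags containing $i$ is a subtree of $\mc J$ (property 3), and any two of these subtrees intersect because each edge of $\mc G_\Pi$ is covered by some bag (property 2). By the Helly property for subtrees of a tree, some bag $U$ contains all of $S_k$. Since $\vec\pi_U \in \Pi_U$ almost surely, $\vec\pi_U$ extends to some $\vx' \in \Pi$. Hence $g_k(\vec\pi) = g_k(\vx') = 0$, because $g_k$ depends only on coordinates in $S_k \subseteq U$.

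I expect the main obstacle to be the bookkeeping in the marginal induction: checking that conditioning on the separator alone reproduces $\vec\lambda_V$ as the bag marginal, and does so jointly with the earlier bags. This is exactly where running intersection and the edge consistency equations must be combined carefully.
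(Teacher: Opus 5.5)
Your proof is correct. The paper does not prove this statement at all: it imports it as a black box from the junction tree theorem in \citet{Wainwright08:Graphical}. So your argument is a self-contained substitute rather than a variant of an in-paper proof.

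Your top-down sampler realizes exactly the standard junction-tree gluing $\prod_U \vec\lambda_U / \prod_{(U,V)} \vec\lambda_{U\cap V}$, with separator marginals in the denominator. You then add the one ingredient that a bare citation leaves implicit in this hard-constraint setting: the glued distribution is supported on $\Pi$, which you obtain from local feasibility of each $\Pi_U$ together with the Helly property of subtrees. The citation buys brevity. Your route makes visible which properties of $\mc J$ are used where: running intersection makes the sampler consistent, and edge coverage plus Helly gives feasibility.

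Two minor remarks. First, the obstacle you anticipate is smaller than you expect. You never need the joint law of $\vec\pi$ across bags, only that each single-bag marginal equals $\vec\lambda_U$. Both $\E[\vec\pi]=\vx$ (checked coordinatewise) and $\vec\pi\in\Pi$ (checked constraint by constraint, each scope lying inside one bag) are verified one bag at a time. The per-bag claim is a one-line induction: the law of $\vec\pi_V$ is $\sum_{\tilde\vx^*}\Pr[\vec\pi_{U\cap V}=\tilde\vx^*]\,\vec\lambda_V(\cdot\mid\tilde\vx_{U\cap V}=\tilde\vx^*)$, and the edge-consistency equation turns these weights into $\vec\lambda_V$'s own separator marginal.

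Second, the vertex-coverage convention is not a harmless normalization. Without it, an uncovered coordinate is left unconstrained by the bag conditions, and the ``if'' direction fails. Attaching singleton bags therefore changes the hypothesis rather than being without loss of generality. Assuming coverage, as is standard, is the right reading given that the paper's definition does not list it separately, but it should be stated as part of the hypothesis.
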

Intuitively, the first constraint says that every $\vx_U$ must be a convex combination of locally feasible $\tilde\vx \in \Pi_U$. This is of course a necessary condition. The second constraint says that marginal probabilities on edges $(U, V)$ must be consistent with each other. This is also clearly a necessary condition, so the difficulty of proving the above result lies in showing that these two constraints are {\em sufficient}. We will not prove the result here, but we will use it as a black box. 

In this section, we will work with a representation slightly different from the realization form. For a player $i$ in a coordinator game $G$ and a pure strategy of that player, the {\em history form} of the strategy as the vector $\vx \in \{0, 1\}^\H$ where $\vx[h]=1$ if and only if the team plays all actions on the $\Root \to h$ path. (Of course, the realization form is just the subvector of $\vx$ indexed by $\Z$.) As usual we will use $\Pi$ for the set of pure strategies in history form, and $\vx = \co\Pi$. The history form is the set of vectors $\vx  \in \{0, 1\}^\H$ satisfying the following constraint system. 
\begin{alignat}9
    \vx[\Root] &= 1 \\
    \vx[ha] &= \vx[h] &&\qif h \notin \H_i \\
    \vx[h] &= \sum_{a \in A_h} \vx[ha] &&\qif h \in \H_i \\
    \vx[ha] \vx[h'] &= \vx[h'a] \vx[h] &&\qif h, h' \in I \in \I_i; a \in A_h
\end{alignat}
This constraint system defines a dependency graph $\mc G_\Pi$, whose nodes are nodes of the tree, and in which there is an edge $(h, h')$ if either $h'$ is a child of $h$, or $h$ and $h'$ are in the same infoset of player $i$. 

Now consider the following tree decomposition of  $\mc J$  of $\mc G_\Pi$. For each public state $P$, the tree decomposition $\mc J$ has a bag $U_P$ that contains all nodes in $P$ and all children of nodes in $P$. The edges of $\mc J$ are the obvious edges, connecting each $U_P$ to $U_{P'}$ if $U_P \cap U_{P'} \ne \emptyset$. 

One can check that, up to trivial reformulations (that is, removal of redundant variables and constraints), the constraint system from \Cref{th:tree-decomp} associated with $\mc J$ is identical to the constraint system associated with the public state TB-DAG (via \eqref{eq:dag_strategy_root}). Thus, it is possible to interpret the public state TB-DAG entirely from the point of view of tree decompositions. We do not take this perspective in the rest of the paper because using beliefs is more interpretable and understandable from a game-theoretic perspective.

\end{document}